\theoremstyle{plain}
  \newtheorem{theorem}{Theorem}[section]
      \newtheorem{assumption}{Assumption}
  \newtheorem{proposition}[theorem]{Proposition}
  \newtheorem{lemma}[theorem]{Lemma}
  \newtheorem{remark}[theorem]{Remark}
\theoremstyle{definition}
  \newtheorem{definition}{Definition}[section]
\theoremstyle{remark}
\numberwithin{equation}{section}
\DeclareMathOperator{\Tr}{Tr}
 \DeclareMathOperator{\supp}{Supp}
\newcommand\otimesal{\mathop{\hbox{\raise 1.6 ex
  \hbox{$\scriptscriptstyle\mathrm{al}$}
\kern -0.92 em \hbox{$\otimes$}}}}
\newcommand\oplusal{\mathop{\hbox{\raise 1.6 ex
  \hbox{$\scriptscriptstyle\mathrm{al}$}
\kern -0.92 em \hbox{$\oplus$}}}}
\newcommand\Gammal{\hbox{\raise 1.7 ex
\hbox{$\scriptscriptstyle\mathrm{al}$}\kern -0.50 em $\Gamma$}}
\renewcommand\i{\mathrm{i}}
\let\al=\alpha \let\be=\beta  
  \let\ga=\gamma 
\let\ka=\kappa \let\la=\lambda \let\om=\omega
 \let\Ga=\Gamma  \let\Om=\Omega
  \let\Si=\Sigma
\newcommand{\caA}{{\mathcal A}}
\newcommand{\caC}{{\mathcal C}}
\newcommand{\caD}{{\mathcal D}}
\newcommand{\caE}{{\mathcal E}}
\newcommand{\caG}{{\mathcal G}}
\newcommand{\caH}{{\mathcal H}}
\newcommand{\caJ}{{\mathcal J}}
\newcommand{\caO}{{\mathcal O}}
\newcommand{\caS}{{\mathcal S}}
\newcommand{\caT}{{\mathcal T}}
\newcommand{\caV}{{\mathcal V}}
\newcommand{\caW}{{\mathcal W}}
\newcommand{\scrB}{{\mathscr B}}
\newcommand{\scrE}{{\mathscr E}}
\newcommand{\scrG}{{\mathscr G}}
\newcommand{\scrH}{{\mathscr H}}
\newcommand{\scrR}{{\mathscr R}}
\newcommand{\scrS}{{\mathscr S}}
\newcommand{\scrT}{{\mathscr T}}
\newcommand{\bbC}{{\mathbb C}}
\newcommand{\bbE}{{\mathbb E}}
\newcommand{\bbN}{{\mathbb N}}
\newcommand{\bbR}{{\mathbb R}}
\newcommand{\bbZ}{{\mathbb Z}}
\newcommand{\opunit}{\text{1}\kern-0.22em\text{l}}
\newcommand{\frh}{{\mathfrak h}}
\newcommand{\frp}{{\mathfrak p}}
\newcommand{\frA}{{\mathfrak A}}
\newcommand{\frB}{{\mathfrak B}}
\newcommand{\frG}{{\mathfrak G}}
\newcommand{\frS}{{\mathfrak S}}
\newcommand{\bsI}{{\boldsymbol I}}
\newcommand{\e}{{\mathrm e}}
\renewcommand{\d}{{\mathrm d}}
\newcommand{\sys}{{\mathrm S}}
\newcommand{\res}{{\mathrm F}}
\newcommand{\Dom}{\mathrm{Dom}}
\newcommand{\beq}{ \begin{equation} }
\newcommand{\eeq}{ \end{equation} }
\newcommand{\bet}{ \begin{theorem} }
\newcommand{\eet}{ \end{theorem} }
\newcommand{\initial}{0}
\newcommand{\final}{n+1}
\newcommand{\lone}{\mathbbm{1}}
\newcommand{\baq}{\begin{eqnarray}}
\newcommand{\eaq}{\end{eqnarray}}
\renewcommand{\supp}{\mathrm{Supp}}
 \newcounter{smallarabics}
\newenvironment{arabicenumerate}
{\begin{list}{{\normalfont\textrm{\arabic{smallarabics})}}}
  {\usecounter{smallarabics}\setlength{\itemindent}{0cm}
  \setlength{\leftmargin}{5ex}\setlength{\labelwidth}{4ex}
  \setlength{\topsep}{0.75\parsep}\setlength{\partopsep}{0ex}
   \setlength{\itemsep}{0ex}}}
{\end{list}}
\newcounter{smallroman}
\newcommand{\ben}{\begin{arabicenumerate}}
\newcommand{\een}{\end{arabicenumerate}}
\newcommand{\norm}{ \|}
\newcommand{\str}{ |}
\newcommand{\initialresfinite}{P_{\Om}}
\newcommand{\realinitial}{\ltimes}
\newcommand{\realfinal}{\rtimes}
\newcommand{\adjoint}{\mathrm{ad}}
\newcommand{\ad}{\adjoint}
\newcommand{\uw}{\underline{w}}
\newcommand{\dist}{d}
\newcommand{\distance}{\mathrm{dist}}
\newcommand{\weird}{\diamond}
\newcommand{\normw}{\norm_{\weird}}
\newcommand{\inter}{\mathrm{I}}
\newcommand{\sims}{\underset{\mathrm{s}}{\sim}  }
\newcommand{\opprod}{\otimes}
\newcommand{\indicator}{1}
\newcommand{\gs}{\mathrm{gs}}
\begin{document}

\begin{center}
\large{ \bf{Minimal velocity estimates and soft mode bounds for the massless spin-boson model}}
 \\
\vspace{15pt} \normalsize

{\bf   W.  De Roeck\footnote{
email: {\tt
wojciech.deroeck@itf.fys.kuleuven.be, supported by the DFG, the German Research Fund}}} \\
\vspace{10pt} 
{\it Institute for Theoretical Physics  \\
K.U.\ Leuven   \\
Celestijnenlaan  200D, B 3001  Leuven, Belgium
} \\

\vspace{15pt}

{\bf   A. Kupiainen\footnote{
email: {\tt    antti.kupiainen@helsinki.fi, supported by ERC and Academy of Finland  }}  }\\
\vspace{10pt} 
{\it   Department of Mathematics \\
 University of Helsinki \\ 
P.O. Box 68, FIN-00014,  Finland 
} \\

\vspace{15pt}

\end{center}

\begin{abstract}
We consider generalised versions of the spin-boson model at small coupling. We assume the spin (or atom) to sit at the origin $0 \in \bbR^d$ and the propagation speed $v_p$ of free bosons to be constant, i.e.\ independent of momentum. In particular, the bosons are massless. 
We prove detailed bounds on the mean number of bosons contained in the ball $\{\str x \str \leq v_p t \}$. In particular, we prove that, as $t \to \infty$, this number tends to an asymptotic value that can be naturally identified as the mean number of bosons bound to the atom in the ground state.  Physically, this means that bosons that are not bound to the atom, are travelling outwards at a speed that is not lower than $v_p$, hence the term 'minimal velocity estimate'. 
Additionally, we prove bounds on the number of emitted bosons with low momentum (soft mode bounds). This paper is an extension of our earlier work in \cite{deroeckkupiainenphotonbound}. Together with the results in \cite{deroeckkupiainenphotonbound}, the bounds of the present paper suffice to prove asymptotic completeness, as we describe in \cite{deroeckkupiainengriesemer}. 
\end{abstract}

\section{Model and result}
This paper provides technical tools to prove asymptotic completeness for some models of quantum field theory with massless bosons.   These tools complement those developed in \cite{deroeckkupiainenphotonbound} and also their proof is to a large extent parallel to the latter. Therefore, we refer the reader to \cite{deroeckkupiainenphotonbound} for an extended motivation of the model and relevant references, and to \cite{deroeckkupiainengriesemer} for a discussion of asymptotic completeness.   Suffice it so say here that interest in the rigorous theory of such models was revived by work on non-relativistic quantum electrodynamics, see e.g.\ \cite{bachfrohlichsigalqed,spohnbook}
  We first introduce the model and state the result, and then, in Section \ref{sec: discussion}, we discuss the results in this paper.

\subsection{The model}

Our model consists of a small system (atom, spin) coupled to a free bosonic field.  The Hilbert space of the total system is 
\beq
\scrH
= \scrH
_\sys \otimes \scrH
_\res
\eeq
where $\scrH
_\sys$, the atom/spin space ($\sys$ for 'small system'), is finite dimensional,  $\scrH
_\sys \sim \bbC^{d_\sys}$ for some ${d_\sys}<\infty$.  The field space $\scrH
_\res$ is the bosonic Fock space $ \Ga(\frh)$
built from the single particle space $\frh=L^2(\bbR^d)$:
\beq
\scrH
_\res = \Ga(\frh)  =\mathop{\oplus}\limits_{n=0}^\infty  P_{\mathrm{Symm}} \frh^{\otimes^{n}
} 
\eeq
with $P_{\mathrm{Symm}} $ is the projection to symmetric tensors and $  \frh^{{\otimes^0}}\equiv \bbC $.
The total Hamiltonian is of the form
\beq \label{def: tot ham}
H= H_\sys \otimes \lone+ \lone \otimes H_\res + H_{\inter}
\eeq
where
\begin{enumerate}
\item
 $H_\sys$ is a hermitian matrix acting on $\scrH
_\sys$.
\item $H_\res $ is the Hamiltonian of the free field, given by 
\beq
H_\res= \int_{\bbR^d} \d k \str k\str a^*_{k} a_{k} 
\eeq
where $a^*_{k}, a_{k} $ are the bosonic creation/annihilation operators of a Fourier (momentum) mode $k \in \bbR^d$ satisfying the `Canonical Commutation Relations'
$$
[a^{}_k,a^*_{k'}]=\delta(k-k').
$$
We also note that we have set the free propagation speed $v_p$ (see abstract) to be $1$ in choosing the `dispersion law' to be $\str k \str$ rather than $v_p \str k \str$. 
\item The coupling  $H_{\inter}$ is of the form 
\beq
H_{\inter}= \la D \otimes   \Phi(\phi) 
\eeq
where $D$ is a Hermitian matrix acting on $\scrH
_\sys$,  $\la \in \bbR$ is a coupling constant,  $\phi \in L^2(\bbR^d)$ is a ``form factor" that imposes some infrared and ultraviolet regularity in the model and $\Phi$ is the self-adjoint (Segal) field operator
\beq \label{def: segal}
 \Phi(\phi)=   a^*{(\phi)}+ a{(\phi)}, \qquad    a{(\phi)} =  \int_{\bbR^d} \d k \, \overline{ \hat\phi(k)}a_k
 \eeq
where  $\hat \phi$ denotes the Fourier transform of $\phi$. 
\end{enumerate}
If the form factor $\phi$ satisfies
\beq \label{eq: cond gs}
\int \d k \str \hat\phi(k)\str^2 (1+\frac{1}{\str k\str})  <\infty,
\eeq
then the operator $ H_\inter$ is relatively bounded w.r.t.\ $H_\res$ with arbitrarily small relative bound and therefore the Hamiltonian $H$ in \eqref{def: tot ham} is self-adjoint on the domain of $H_\res$ by the Kato-Rellich theorem. Hence the unitary dynamics $\e^{-\i t H}$ is well-defined and we  set $\Psi_t =\e^{-\i  t H} \Psi_0$ with $\Psi_0 \in \scrH
$.
A lot of work has been devoted to this model, in particular to its spectral theory, but we do not discuss this here. Instead, references are collected in \cite{deroeckkupiainenphotonbound,deroeckkupiainengriesemer}.

\subsection{Assumptions}

We describe now our assumptions on the form factor.  Its infrared
(small Fourier mode $k$) behaviour determines temporal correlations in the
model and some regularity near $k=0$ is needed.  Roughly speaking, we need
to assume
\beq \label{eq: intuitive behaviour form}
\hat\phi(k)\sim {|k|^{-\frac{d-2-\al}{2}}}
\eeq
with some $\al>0$ as $|k|\to 0$. 
\begin{definition} Let $0< \al<1$.  We define the subspace $\frh_{ \al} \subset \frh$ to consist of
 $\psi \in \frh$  such that $\hat\psi\in  C^3(\bbR^d\setminus\{0\})$, the support of $\hat\psi$ is bounded, and, for all multi indices $m$ with $|m|\leq 3$,
 \beq
\label{betadecay}
|\partial^m_k\hat\psi(k)|\leq C|k|^{(\beta-d+2)/2-|m|}.
\eeq
for  some $\be>\al$ and $C <\infty$.
\end{definition}
In the following two assumptions, we fix once and for all the form factor, the dimension $d$, and the operators $H_\sys$ and $D$. These choices and assumptions are assumed to hold throughout the article and they will not be repeated. 

The first assumption controls the infrared behaviour of the model.
\begin{assumption}[$\al$-Infrared regularity] \label{ass: infrared behaviour}
The form factor $\phi$  is in $\frh_\al$ 
 and the dimension $d\geq 3$. 
\end{assumption}
One of the most intuitive consequences of this assumption is the decay of correlations for the free boson dynamics. Indeed, by stationary phase estimates, see Appendix, we get \beq \label{eq: first correlation function}
 \int \d k \,  \str\hat\phi(k)\str^2  \e^{\i t \str k \str} = \caO( t^{-(2+\al)}), \qquad t \to \infty.
\eeq
 The expression on the left hand side  will appear in evaluating multitime correlations between the interaction terms $H_I$.
The second assumption ensures that the coupling is effective.  This assumption is very likely not necessary for our results, but it \emph{is} required for our proof.  To clarify this, consider the case $ \phi=0$, or equivalently $\la=0$, then the atom and field are not coupled and  the evolution of the field is given  by  (the quantisation of) the linear wave equation. In that case, our results can be proven by standard dispersive estimates.     However, the proofs of the results in \cite{deroeckkupiainenphotonbound} (on which the present article is  based) would break down as they rely on dissipative behaviour of the small system $\sys$, which does not occur if $\sys$ is not coupled to the field.  

 \begin{assumption}[Fermi Golden Rule] \label{ass: fermi golden rule}
We assume that  the spectrum of $H_\sys$ is non-degenerate (all eigenvalues are simple) and we let $e_0 := \min \sigma( H_\sys)$ (atomic ground state energy). Most importantly, we assume that for any  eigenvalue $e \in \sigma( H_\sys), e \neq e_0$, there is a sequence  $e(i), i=1, \ldots, n$ of eigenvalues such that 
\beq
e= e(1) >  e(2) >\ldots >  e(n)= e_0,  \qquad  \textrm{and} \qquad \forall i =1,\ldots, n-1:     j(e(i), e(i+1)) >0 
\eeq
with  $j(\cdot,\cdot)$ given by
\beq
j(e,e') :=   2 \pi  \Tr[P_e DP_{e'}DP_e]  \,  \int_{\bbR^d}  \d k  \, \delta(\str k \str- (e-e'))\str \hat \phi(k)\str^2  \label{eqexpression jump rates}
\eeq
 where $P_e$ is the spectral projector corresponding to the eigenvalue $e$ and the right hand side is well-defined since $\hat \phi$ is continuous away from $0$. 
\end{assumption}

This assumption will not enter explicitly in the present article, but it is necessary for the crucial Lemma \ref{lem: spectral gap}, whose proof is in \cite{deroeckkupiainenphotonbound}.

\subsection{Results}

We now state our main results.  To choose appropriate initial states, we introduce the unitary \emph{Weyl operator} $\caW(\psi), \psi \in \frh$
\beq 
\caW(\psi) = \e^{\i \Phi(\psi)}
\eeq
with the (Segal) field operator $\Phi(\psi)$ as in \eqref{def: segal}, and define the dense subspace
\beq 
\caD_\al :=  \mathrm{Span} \{ \psi_\sys \otimes \caW(\psi) \Om \,  \str\,  \psi \in \frh_{\al}, \psi_\sys \in \scrH_\sys \}
\eeq 
with $\Om$ the normalised vacuum vector.
The density of $\caD_\al $ in $\caH$ follows from the density of $\frh_{\al}$ in $\frh$. We will choose the initial vector $\Psi_0 \in \caD_\al$ with $\norm\Psi_0\norm=1$ and we write $\Psi_t = \e^{-\i t H} \Psi_0$.

Fix a  $C^{\infty}$  function $\theta: \bbR^d \to [0,1]$ with compact
support in the ball centered at origin with radius 
$r_\theta<1$.  Since $\theta$ will be used to localise both in real $x-$space and
in Fourier $k-$space we use the notation $\theta(x)$ for the multiplication operator
and $\theta(k)$ for the Fourier multiplier.   To any self-adjoint operator $b$ on $\frh$, we associate its second quantisation $\d\Gamma(b)$, a self-adjoint operator on $\Ga(\frh)$, and we also write $\d\Gamma(b)$ for $\lone \otimes \d\Gamma(b)$, acting on $\scrH$. 

In the statement of our theorems, $\breve C$ denotes constants that depend on $\Psi_0,\theta,\al$, the dimension $d, d_\sys$, and the parameters of the Hamiltonian \eqref{def: tot ham}, i.e.\ the form factor $\phi$ and the operators $H_\sys,D$, but not on $\la$.  We recall that we always assume Assumptions \ref{ass: infrared behaviour} and \ref{ass: fermi golden rule} to hold. 
\bet[Soft mode bound]  \label{thm: soft photon bound} 
There exists $\lambda_0>0$ such that for all $\la$ with $0< \str \la \str \leq \la_0$, 
\beq \label{eq: statement soft photon bound}
\sup_{t \geq 0} \left\str\langle \Psi_t,   \d \Gamma(\theta( k/\delta )) \Psi_t \rangle \right\str \leq   \breve C  \delta^{\al/2}
\eeq
for any 
$\delta> 0$, smooth indicator $\theta$ as above, and $\Psi_0 \in \caD_\al$.   
\eet
This result complements the boson number bound in \cite{deroeckkupiainenphotonbound}:
 \beq \label{eq: first photon bound}
\sup_{t \geq 0} \left\str\langle \Psi_t,   N \Psi_t \rangle \right\str \leq    \breve C
\eeq
with $N=\d\Gamma(\lone)$ the number operator. Although  the infrared condition in that paper is slightly different, an obvious application of Lemma \ref{lem: decay} in Appendix \ref{app: propagation estimates}  allows to derive that condition from our present infrared condition, i.e.\ from Assumption \ref{ass: infrared behaviour}, such that \eqref{eq: first photon bound} holds in the present framework as well. 
 Inspecting the proof of Theorem \ref{thm: soft photon bound}, we see that the bound $\breve C \delta^{\al/2}$ in \eqref{eq: statement soft photon bound} can be replaced by $\breve C(\al')\delta^{\al'}$, for any $\al'<\al$, at the cost of making the constant  dependent on $\al'$.  

\bet[Minimal velocity estimate] \label{thm: propagation estimate}
Let $0< \str\la\str \leq \la_0$ as in Theorem \ref{thm: soft photon bound} and fix  an initial state vector $\Psi_0 \in \caD_\al$ with $\norm \Psi_0\norm=1$ and a smooth indicator $\theta$ as in Theorem \ref{thm: soft photon bound}. 
For any `cutoff time' $t_c \geq \str\la\str^{-2}$,  the limit
\beq
a(t_c, \theta):= \lim_{t\to \infty} \langle \Psi_t,    \d \Gamma(\theta( x/{t_c} )) \Psi_t \rangle
\eeq
exists and
\beq
\left\str\langle \Psi_t,    \d \Gamma(\theta( x/{t_c} )) \Psi_t \rangle -  a(t_c,\theta)  \right \str   \leq   \breve C   (1+ t )^{-\al},
\eeq
uniformly in $t_c$ for $t \geq t_c \geq \str\la\str^{-2} $.
\eet 
The restriction $t_c \geq \str\la\str^{-2}$ is not necessary for the result to hold, but its elimination requires an additional step in our proof, and therefore we avoided it, since we are mainly interested in the case  $t_c=t$, see below.
The obvious interpretation of  Theorem \ref{thm: propagation estimate} is that 
\beq \label{eq: obvious interpret}
a(t_c,\theta) =
\langle \Psi_{\gs},  \d \Gamma(\theta( x/{t_c} )) \Psi_{\gs} \rangle
\eeq
where $\Psi_{\gs}$ is the unique (up to a phase) normalised ground state of $H$, that can indeed be proven to exist given our assumptions, see \cite{deroeckkupiainenphotonbound}. This interpretation is correct but we postpone its statement to \cite{deroeckkupiainengriesemer} because the identification of the limit requires somehow different reasoning that does not naturally fit into the present paper. 
The most natural and, as far as we see, useful form of this result is obtained if we assume \eqref{eq: obvious interpret} and take $t_c=t$. Then the resulting statement is 
\beq
\left\str\langle \Psi_t,    \d \Gamma(\theta( x/t )) \Psi_t \rangle -  \langle \Psi_{\gs},  \d \Gamma(\theta( x/t )) \Psi_{\gs} \rangle \right\str   \leq   \breve C   (1+ t )^{-\al}
\eeq
which is a key ingredient in \cite{deroeckkupiainengriesemer}.  This is also the claim that was announced in the abstract.

\subsection{Discussion}\label{sec: discussion}

In \cite{deroeckkupiainenphotonbound}, we established two results. On the one side,  we showed that for localised observables $O$, i.e.\ those concerning the atom and the field in the neighbourhood of the atom,  the expectation value $\langle \Psi_t, O \Psi_t \rangle$ converges to the stationary value $\langle \Psi_\gs, O \Psi_\gs \rangle$ (assuming that a ground state $\Psi_\gs$ exists).   On the other side, we showed that the number of emitted bosons is bounded independently in time, i.e.\ \eqref{eq: first photon bound}.  Intuition suggests that the emitted bosons behave as free bosons once they are sufficiently far from the atom. 
One consequence of this intuition is that the number of bosons in a spatial region of the form 
 $$ c_1t <  \str x \str < c_2 t, \qquad \text{with}\,\, 0 < c_1 <c_2 <1,$$ 
 should tend to $0$, as $t \to \infty$, where we recall that we have set the propagation speed of free bosons to be  $1$. 
 In case $c_1=0$, this is not quite true since some bosons are bound by the atom in the interacting ground state, but in that case it is still true that the expectation value of the number of bosons tends to a constant value as $t \to \infty$. 
This result is achieved in Theorem \ref{thm: propagation estimate} (up to some issues pointed out below the statement of this theorem). 
We refer to it as a minimal velocity estimate since it excludes the existence of bosons travelling with a speed lower than the propagation speed of free bosons.  

Once one knows that the number of emitted bosons remains finite, such minimal velocity estimates can be obtained by operator techniques as well, but we prefer to modify slightly the  polymer expansion in \cite{deroeckkupiainenphotonbound} to obtain these results.  The approach via operator techniques has been explored in \cite{faupin2012quantum}.   Both in our work, and in \cite{faupin2012quantum} the motivation comes from the fact that minimal velocity estimates are helpful in proving asymptotic completeness.  However, in the present article, our aim is also to illustrate that the `polymer expansion'-approach to problems in open quantum systems, that we started in \cite{deroeckkupiainen}, can be adapted to a variety of problems. 

A second result in this paper is the soft mode bound, Theorem \ref{thm: soft photon bound}. This result could be obtained completely analogously to the treatment of \cite{deroeckkupiainenphotonbound},  since the one-particle operator $b=\theta(k/\delta)$ is invariant under the free boson dynamics, but to make the  present paper more streamlined, we treat it analogously to Theorem \ref{thm: propagation estimate}, which concerns a non-invariant $b$-operator. 
Note that in \cite{faupinsigalcommentnumber}, such an analogy to \cite{deroeckkupiainenphotonbound} is used to control $\d \Ga(b)$ for $b=1/\str k \str$, which is of course also invariant.  

\subsection{Strategy of the proof and outline of the paper}
Since the strategy of this paper is so intimately connected to \cite{deroeckkupiainenphotonbound}, we restrict ourselves here to a rough outline of the proof. In particular, the arguments in Section \ref{sec: overview polymers} are analogous to \cite{deroeckkupiainenphotonbound} and they are more thoroughly explained there. 

\subsubsection{Polymer representation}\label{sec: overview polymers}
We set out to control the quantity $\langle \Psi_t, \d \Ga(b) \Psi_t\rangle$ for $t=n/\la^2$ and with the one-boson operator $b$ being either $ \theta(x/t_c)$ or $\theta(k/\delta)$.
We first construct a \emph{polymer representation} of this quantity:
\beq \label{eq: intro polymer}
\langle \Psi_t, \d \Ga(b) \Psi_t\rangle =  \sum_{\caA} \sum_{A' \in \caA}  v(A' \cup \{n+1\})  \prod_{A \in \caA, A\neq A' } v(A)  
\eeq
where $A',A$  range over nonempty subsets of $\{0,1,\ldots, n\}$ (\emph{polymers}) and $\caA$ ranges over collections of polymers that are pairwise disjoint and  non-adjacent.  The initial state $\Psi_0$ influences the \emph{polymer weights} $v(A)$ for $A \ni 0$ and the observable  $\d \Ga(b) $ influences $v(A)$ for $ A \ni (n+1)$. 
For the weights of  \emph{bulk} polymers $A$, (i.e.\ not containing $0$ nor $n+1$) we need a bound that was already stated in \cite{deroeckkupiainenphotonbound}, and that, for small $\str A \str$, can be thought of as
\beq \label{eq: example bound}
\str v(A)\str \leq  \la^2  (\max A-\min A)^{-(2+\al)}
\eeq
 where the decay factor $(\max A-\min A)^{-(2+\al)}$ reflects the decay of correlations of the free field exhibited in \eqref{eq: first correlation function} and the factor $\la^2$ indicates that these polymers capture the effect of interactions. 
 
 If we replace $\d \Ga(b)$ by $\lone$, the corresponding expansion reads
 \beq \label{eq: intro polymer one}
1= \langle \Psi_t, \lone \Psi_t\rangle = \sum_{\caA}  \prod_{A \in \caA} v(A),
\eeq
i.e.\ polymers with $(n+1) \in A$ are absent. 
These polymer representations are derived in Sections \ref{sec: operator valued polymers} and \ref{sec: scalar polymer model} via an intermediate polymer representation with operator-valued polymer weights. This derivation is a purely algebraic exercise.   Then we derive bounds on the polymer weights, like \eqref{eq: example bound}. We again use the operator-valued polymer weights as a useful intermediary step. This is done in Sections \ref
{sec: estimates} and  \ref{sec: bounds on operator valued polymers}.   The main idea of these bounds is to recognise in the definition of $v(A)$ a connected graph whose vertices, roughly speaking, coincide with the elements of $A$ and whose edges $\{\tau,\tau'\}, \tau,\tau' \in A$ carry a decay factor $\str \tau'-\tau \str^{-(2+\al)}$.   Since the graph is connected, we can extract the overall decay factor $(\max A-\min A)^{-(2+\al)}$.  This description is oversimplified; in reality, some of the vertices of the graph are subsets of $A$ themselves and the decay factors are encoded into them.  The sum over graphs is performed with the help of combinatorial techniques from cluster expansions. 

\subsubsection{The weights $v(A \cup \{n+1\})$}
Let us now describe the basic intuition for the weights $v(A \cup \{n+1\})$. 
First,  by brutal approximation, we could guess that \eqref{eq: intro polymer} divided by \eqref{eq: intro polymer one} is approximated as
\beq \label{eq: intro polymer bar}
\langle \Psi_t, \d \Ga(b) \Psi_t\rangle \approx \sum_{A}  v(A \cup \{n+1\}).
\eeq
We will make this relation into an equality by replacing the weights $v$ by slightly modified weights $\bar v$.  This type of arguments are presented in Section \ref{sec: general considerations} and they are again based on cluster expansions. 
Next, let us naively expand $\langle \Psi_t, \d \Ga(b) \Psi_t\rangle$ in powers of $\la$, hence of $H_I$, up to second order.  We write $H=H_0+H_I$ and we use
\beq
\e^{\i t H_0} \d \Ga(b) \e^{-\i t H_0} =    \e^{\i t H_\res} \d \Ga(b) \e^{-\i t H_\res}  =    \d \Ga(b(t)) 
\eeq
where $b(t)=\e^{\i t \om}b \e^{-\i t \om} $ and $\om$ is the self-adjoint multiplication operator with the Fourier multiplier $\str k \str$, i.e. $\widehat{(\om\psi)} (k)= \str k \str \hat \psi(k)$. By (formal) Duhamel expansion of $\e^{\i t H}, \e^{-\i t H}$,
\begin{align}
\e^{\i t H} \d \Ga(b) \e^{-\i t H}  + \caO(\str\la\str^3)   &= \d \Ga(b(t))  \nonumber  \\[2mm]
 &+ \i \int_0^t  \d t_1 \,   H_I(t_1)  \d \Ga(b(t))  +  \mathrm{h.c.\ }     \nonumber  \\[2mm]
  &+ \int_0^t  \d t_1\int_0^t  \d t_2  \,     H_I(t_1)  \d \Ga(b(t))  H_I(t_2)    \nonumber \\[2mm]
    & - \int_0^t  \d t_1\int_{t_1}^t  \d t_2   \,    H_I(t_1) H_I(t_2)  \d \Ga(b(t))      +  \mathrm{h.c.\ }     \nonumber  
\end{align}
where we write $H_I(t)= \e^{\i t H_0} H_I \e^{-\i t H_0}  $ and (below) $D(t)= \e^{\i t H_0} D\e^{-\i t H_0}$. 
Let us now for simplicity choose  $\Psi_0= \psi_\sys\otimes \Om$ and put the above equation between $\langle \Psi_0, \,\cdot \,\Psi_0\rangle$. Then all terms in the above expansion vanish except the third line, the integrand of which can be recast as
\beq
\langle \Psi_0, H_I(t_1)  \d \Ga(b(t))  H_I(t_2) \Psi_0 \rangle  =    \langle \psi_\sys,   D(t_1)D(t_2)  \psi_\sys \rangle^{}_{\scrH_\sys} \, \times \,       \la^2 \langle  a^*(\e^{\i t_1 \om}\phi) \Om, \d \Ga(b(t))  a^*(\e^{\i t_2 \om}\phi) \Om\rangle.       \nonumber 
\eeq
The first factor on the left hand side is quasiperiodic in $t_2-t_1$ and as such it is irrelevant. The second factor  can be rewritten as 
\beq \label{eq: micro h}
\la^2 \langle  \e^{\i t_1 \om}\phi, b(t) \e^{\i t_2 \om}\phi\rangle, \eeq i.e.\ an expression in the one-boson space $\frh$: 
we get the crude cartoon
\beq \label{eq: cartoon}
\langle \Psi_t, \d \Ga(b) \Psi_t\rangle  \approx   \la^2\int_0^t  \d t_1\int_0^t  \d t_2  \,  \langle  \e^{\i t_1 \om}\phi, b(t) \e^{\i t_2 \om}\phi\rangle.
\eeq
Comparing this integral with the sum in \eqref{eq: intro polymer bar}, it is plausible that \eqref{eq: micro h} is similar to
\beq
 v(\{\tau_1,\tau_2, n+1\}), \qquad  \text{with $\tau_1, \tau_2$ such that $\tau_1\approx  \la^2t_1 $ and $\tau_2\approx  \la^2t_2 $},
\eeq
which is not quite true, but good enough for the picture that we are developing here.  The weights $ v(A \cup \{n+1\})$ with $\str A \str >2$ yield small corrections that we do not describe here. 

One of the most relevant properties of the function \eqref{eq: micro h} is that it retains the decay in $t_2-t_1$ exhibited in \eqref{eq: first correlation function}. In the case  $b= \theta(x/t_c)$, it is quite intricate to prove this uniformly in $t_c$, but this has little to do with our main technical work. Therefore, estimates of this kind are gathered in the Appendix, see in particular Lemma \ref{lem: bounds g}.  
 
\subsubsection{The long-time limit}
If one accepts the heuristic outline above, then one gets from \eqref{eq: cartoon} by the change of variables $s_i=t-t_i$, and abbreviating $M_b(s_1,s_2):=  \la^2 \langle  \e^{-\i s_1 \om}\phi, b\,  \e^{-\i s_2 \om}\phi\rangle $
\beq
\lim_{t\to\infty} \langle \Psi_t, \d \Ga(b) \Psi_t\rangle \approx    \mathop{\int}\limits_{(\bbR_+)^2}  \d s_1  \d s_2  \, M_b(s_1,s_2),
\eeq
\beq
\langle \Psi_t, \d \Ga(b) \Psi_t\rangle-\lim_{t\to\infty} \langle \Psi_t, \d \Ga(b) \Psi_t\rangle \approx  - \mathop{\int}\limits_{(\bbR_+)^2 \setminus [0,t]^2 }  \d s_1  \d s_2  \, M_b(s_1,s_2)\eeq
provided the  improper integrals on the right hand side are defined by some appropriate regularisation procedure.   If these were exact equalities, then our theorems would reduce to statements about one-boson dynamics.  Those statements can then be checked by the estimates in the Appendix. Note that for $b=\theta(k/\delta)$, $M_b(s_1,s_2)$ is a function of $s_2-s_1$ only, but its integral over $\bbR$ vanishes because of \eqref{eq: intuitive behaviour form} and the above expressions are in fact finite. For $b= \theta(x/t_c)$, $M_b(s_1,s_2)$ decays as soon as $s_1,s_2$ are larger than $t_c$ and also as $s_2-s_1\to \infty$; these are dispersive properties of the linear wave equation. 

     In Sections \ref{sec: propagation bound} and  \ref{sec: soft photon bound}, we find the full-blown version of this argument, where the right hand sides of the above equations are replaced by sums over $\bar v(A \cup \{n+1\})$. 

\subsection{Notation}
\subsubsection{Combinatorics}
We write $\bbN=\{0,1,2,\ldots\}$. 
For $\tau,\tau' \in \bbN$, $\tau<\tau'$, we define the discrete intervals
\beq
I_{\tau,\tau'} := \{\tau, \tau+1, \ldots, \tau' \}
\eeq
and $\frB_{\tau,\tau'}$ the set of collections of nonempty subsets of $I_{\tau,\tau'}$.   A relevant subset of  $\frB_{\tau,\tau'}$ is, for $j \in \bbN$, 
\beq
\frB^j_{\tau,\tau'} :=  \{ \caA \in \frB_{\tau,\tau'} \, \str \, \forall A,A' \in \caA:  (A \neq A' \Rightarrow  \distance(A,A') > j)  \}
\eeq
where $\distance(A,A'):= \min_{\tau \in A,\tau' \in A'} \str \tau-\tau'\str$ (hence  $\distance(A,A')= \infty$ if $A$ or $A'$ is empty). 
For a collection $\caA$, we set
\beq
\supp \caA := \cup_{A \in \caA}  A 
\eeq
and we need also the diameter of (finite) subsets of $\bbN$;
\beq
\dist(A) = \max A -\min A+1,\qquad  \dist(\caA) := \dist(\supp \caA). 
\eeq 
\subsubsection{Hilbert and Banach spaces}
For a Banach space $\scrE$, we let $\scrB(\scrE)$ stand for the set of bounded operators. If $\scrE$ is a Hilbert space, we will additionally use the space of trace class operators
\beq
\scrB_p(\scrE) = \{ O \in \scrB(\scrE) \, \str \, \norm O \norm_p < \infty \}
\eeq
with 
\beq
\norm O \norm_p = (\Tr  \str OO^*\str^{p/2})^{1/p}.
\eeq
For the scalar product on a Hilbert space $\scrE$, we use the notation  $\langle \psi, \psi'\rangle_{\scrE}$, often abbreviated as $\langle \psi, \psi'\rangle$.
A positive operator $\rho \in \scrB_1(\scrE)$ with $\Tr \rho=1$ is called a density matrix. 
We also use the function
\beq
\langle x \rangle := \sqrt{x^2+1},
\eeq
for real numbers and self-adjoint operators.
\subsubsection{Constants}

We denote by $c,C$ constants that depend only on the dimensions $d,d_\sys$, the parameters of the Hamiltonian \ref{def: tot ham} and the parameter $\al$, but not on $\la$. 
The precise value of these constants can be different in different equations. 
Quantities that additionally depend on the initial condition $\Psi_0$ and the smooth function $\theta$ (but not on $\la$) are denoted by $\breve c, \breve C$.

\section{Polymer Representation} \label{sec: polymer rep}

In this section, we complete the first important step of our proof, namely we rewrite all quantities of interest through a polymer representation.   This part of the paper is almost identical to a corresponding part in \cite{deroeckkupiainenphotonbound}. 

We discretise time by introducing a "mesoscopic"  time scale $\la^{-2}$. That is, we consider times  of the form $t= n/\la^2 $ with $n \in \bbN$. The discretisation will be  removed at the end of the argument.  We study 
\beq
Z_n(O, \rho_0): =  \Tr\left[ O   \e^{-\i t H }    \rho_0     \e^{\i t H}  \, \right], \qquad t= n/\la^2,  
    \label{eq: def z}
\eeq
with  the initial density matrix
$$\rho_0= \rho_{\sys,0} \otimes \caW(\psi_\realinitial) P_{\Om} \caW^*(\psi_\realinitial)
$$ 
for some density matrix 
$\rho_{\sys,0} \in \scrB_1(\scrH_\sys)$,  $\psi_\realinitial \in \frh_\al$,  and $P_{\Om}$ the one-dimensional projector on the range of the vacuum vector $\Om$. The observable $O$  is one of the following
\begin{enumerate}
\item $O= \d \Gamma (b)$ with $b=b_x:= \theta(x/t_c)$  or  $b=b_k:= \theta(k/\delta)$.  We choose $t_c=\la^{-2} n_c$ for some $n_c \in \bbN, n_c \leq n$. All estimates will be uniform in $n_c$.   Since $O$ is unbounded, the expression \eqref{eq: def z} is in need of justification, which is provided in \cite{deroeckkupiainenphotonbound}. A posteriori, one can also appeal to the convergent expansions developed in Section \ref{sec: estimates} where we construct a convergent expansion for  \eqref{eq: def z}. 
\item $O= \lone$. This case is mainly included for comparison. By cyclicity of the trace, we have $Z_n(\lone, \rho_0)=1$. 
\end{enumerate}

In most intermediary steps  of our analysis we will perform a partial trace over the field, thereby defining the reduced dynamics
\beq\label{eq: reduced dynamics}
{Q}_{n}  \rho_{\sys,0}      :=    \Tr_{\res} \left[      \e^{-\i (n/\la^2)  L}  \rho_0   \right]
\eeq
where we introduced the Liouvillian $L=\adjoint(H)$, an unbounded operator on $\scrB_1(\scrH)$.
 Sometimes, we want to incorporate the observable into the reduced analysis as well. In that case, we write
\beq\label{modifieddynamics}
{Q}_{n \str b}  \rho_{\sys,0}      :=    \Tr_{\res} \left[   \d\Gamma(b)  \e^{-\i (n/\la^2)  L} (\rho_{\sys,0} \otimes P_{\Om} )    \right].
\eeq
Here the notation differs slightly from the one in \cite{deroeckkupiainenphotonbound} where the latter object was called $\breve Q_n$ and the notation $Q_n$ was reserved for \eqref{eq: reduced dynamics} with $\psi_\realinitial=0$.  Obviously, we have
\beq
Z_n(\lone,\rho_0) = \Tr {Q}_{n}  \rho_{\sys,0}=1, \qquad   Z_n(\d\Ga(b),\rho_0) = \Tr {Q}_{n \str b}  \rho_{\sys,0}.
\eeq
The main goal of the first part of the present chapter is to find a convenient representation for $Q_n$
and $ {Q}_{n \str b}$. 
The first step is to write the evolution operators as a product where each factor corresponds to a 'mesoscopic' time slice of length $\la^{-2}$. 
With this in mind, we introduce 
\beq
U_{\tau} :  \scrB_1(\scrH) \to   \scrB_1(\scrH) 
\eeq
with
\beq  \label{def: u tau}
U_{\tau} :=      \e^{\i (\tau/\la^2)  L_\res} \e^{-\i (1/\la^2)  L} \e^{-\i ((\tau-1)/\la^2)  L_\res}   ,\qquad \tau \in I_{1,n},
\eeq
where we used the field Liouvillian $L_\res :=\adjoint(H_\res)$, 
and
\begin{align}
 U_{\initial}  \rho & :=    {\caW}(\psi_{\realinitial}) \rho \caW^*(\psi_{\realinitial}),  \label{def: u tau initial}  \\[1mm]  
 U_{\final} \rho & := \d \Gamma(b(n/\la^2))  \rho,   \label{def: u final}
\end{align}
where we wrote for brevity $\caW(\psi)$ instead of $\lone \otimes \caW(\psi)$ and $b(t)= \e^{\i t \om} b\e^{-\i t \om}$
An immediate consequence of these definitions, using cyclicity of the trace, is
\begin{align}
{Q}_{n}  \rho_{\sys,0}    &  :=    \Tr_{\res} \left[      U_n \ldots U_1U_0( \rho_{\sys,0} \otimes P_\Om)   \right],  \label{reddyn} \\[1mm]
{Q}_{n \str b}  \rho_{\sys,0}    &  :=    \Tr_{\res} \left[     U_{n+1} U_n \ldots U_1U_0( \rho_{\sys,0} \otimes P_\Om)   \right].\label{reddyn1}
\end{align}
Finally, we define the reduced dynamics 
$$
T: \scrB_1(\scrH_\sys)\to \scrB_1(\scrH_\sys)
$$
 for (mesoscopic) time $1$, starting from a product state
\beq \label{def: t}
T     \rho_{\sys,0}:=   \Tr_\res\left[  \e^{-\i (1/\la^2) L }    (\rho_{\sys,0}\otimes P_\Om)    \, \right].
\eeq
We  set $T_\tau:=T$ for $\tau=1,\ldots,n$  and 
\baq
 T_{\initial}&:=&  \left\langle    {\caW}(\psi_{\realinitial}) \Om, {\caW}(\psi_{\realinitial}) \Om \right\rangle  \lone = \lone, \label{Tinit}\\[1mm]
  T_{\final}&:=& \left\langle   \Om,  \d \Gamma(b)   \Om\right\rangle  \lone =0  \label{Tfinal}
 \eaq
 where we used that $\norm {\caW}(\psi_{\realinitial}) \Om \norm =1$ because $ {\caW}(\psi_{\realinitial})$ is unitary. The motivation for this definition will become obvious in the next section.  
 Finally, we set
 \beq \label{eq: def bigb}
  B_{\tau} :=     U_{\tau} -T_{\tau}, \qquad  \tau =0,\ldots, n+1.
\eeq
Note that $U_{\tau}$ depends on the total macroscopic time $n$ because of \eqref{def: u final}.  It is sometimes convenient, see e.g.\ Section \ref{sec: symmetry prop}, to indicate the $n$ dependence explicitly by writing $U_{\tau,n}, B_{\tau,n}$ with $ \tau \leq n+1$, such that $U_{n+1,n}$ is defined by \eqref{def: u final} and $U_{\tau,n}, \tau \leq n$ is defined by (\ref{def: u tau},\ref{def: u tau initial}).

 The next section proposes a framework whose purpose is to write an expansion for $Q_n$ and ${Q}_{n \str b}$ in which the leading terms, in a precise sense, are 
  $ T_n \ldots T_2T_1T_0 = T^n T_0$ and   $T_{n+1} T^n T_0=0$, respectively.

\subsection{Operator-valued polymers weights}   \label{sec: operator valued polymers}

\subsubsection{Operator correlation functions} \label{sec: operator correlation functions}
We abbreviate
\beq
\scrR_\sys =  \scrB(\scrB_1(\scrH_S)), \qquad    \scrR_\res =  \scrB(\scrB_1(\scrH_\res)).
\eeq
Define, for $W,W' \in \scrR_\sys  \otimes  \scrR_\res$ the object
$$W\otimes_\sys  W'\in \scrR_\sys  \otimes  \scrR_\sys \otimes \scrR_\res $$
 as an operator product in $\res$-part and tensor
 product in $\sys$-part. Concretely, let $W=W_\sys \otimes W_\res$ and $W'=W'_\sys \otimes W'_\res$.
 Then 
$$W\otimes_\sys W':= W_\sys  \otimes W'_\sys  \otimes W_\res W'_\res.$$
and we extend this by linearity to arbitrary $W, W'$. 
Iterating this construction we define for $W_i \in \scrR_\sys  \otimes  \scrR_\res  $, $i=1,\dots,m$
$$ W_m\otimes_\sys \dots\ldots \otimes_\sys W_2 \otimes_\sys W_1 \in (\scrR_\caS)^{ \otimes^m}\otimes \scrR_\res .$$
Since $\scrR_\sys$ is finite-dimensional these products are unambiguously defined.

We define the `expectation' 
$$\bbE:(\scrR_\sys)^{ \otimes^m}\otimes  \, \scrR_\res
\,\,\rightarrow \,\,(\scrR_\sys)^{ \otimes^m}$$ as
$$
\bbE (W) J:={\Tr}_{\res}  [W (J \otimes P_{\Om})], \qquad  J \in  (\scrB_1(\scrH_\sys))^{ \otimes^m}.
$$
Obviously, the action of $\bbE$ is extended to unbounded $W$ satisfying $W((\scrB_1(\scrH_\sys))^{ \otimes^m}  \otimes P_{\Om}) \in \scrB_1(\scrH_\sys^{ \otimes^m}  \otimes \scrH_\res)$.  An important example, with $m=1$, is  $T = \bbE(U_{\tau})$.   

Let $A=\{\tau_1,\tau_2,\dots, \tau_m\}\subset I_{0,n+1}$ with the convention that
$\tau_i<\tau_{i+1}$ and define the `time-ordered {correlation function}'
\beq
G_A :=    \bbE \left( B_{\tau_m}\otimes_{\sys}  B_{\tau_{n-1}}\otimes_{\sys}\dots\otimes_{\sys} B_{\tau_1} \right)   \in   (\scrR_S)^{ \otimes^m}.   \label{eqfirst definition of correlation function}
\eeq
Note that $G_A  =0$ when the set $A$ is a singleton. Indeed, since $B_\tau = U_\tau- \bbE(U_\tau)$.  
we get  $\bbE(B_{\tau})=0$.

It will be convenient to label the $\scrR_\sys$'s and to drop the subscript $\sys$ (since we will rarely need $\scrR_\res$), writing simply $\scrR$ for $\scrR_\sys$.    Let us denote by $\scrR^{\otimes^{\bbN}}$ the linear space spanned by simple tensors $   \ldots \otimes V_2\otimes V_1$ where all but a finite number of $V_j$ are equal to the identity $\lone$.  For finite subsets $A \subset \bbN$, we then define $\scrR_A$ as the finite-dimensional subspace of $\scrR^{\otimes^\bbN}$ spanned by  simple tensors $   \ldots \otimes V_2\otimes V_1$ with $V_j=\lone, j \notin A$ and we write in particular $\scrR_{\tau}=\scrR_{\{\tau\}}$.   Let $A=\{\tau_1, \tau_2, \ldots, \tau_m\}$ with $ \tau_1 < \tau_2 <\ldots < \tau_m$. 
Obviously,  $\scrR_A$ is naturally isomorphic to $\scrR^{\otimes^m}$ by identifying the right-most tensor factor to $\scrR_{\tau_1}$, the next one to $\scrR_{\tau_2}$, etc\ldots We denote this isomorphism from $\scrR^{\otimes^m}$ to $\scrR_A$ by $\bsI_A$ and we will from now on write $G_A$ to denote $\bsI_A[G_A] \in \scrR_A$ since $G_A$ acting on the `unlabelled' space $\scrR^{\otimes^m}$ will not be used, except briefly in the upcoming Section \ref{sec: symmetry prop}

Consider a collection $\caA$ of disjoint subsets of $\bbN$, then each of the spaces $\scrR_{A \in \caA}$ is a subspace of $\scrR_{\supp \caA}$ where $\supp \caA =\cup_{A \in \caA} A$. Given a  collection of operators $K_A \in \scrR_A$, we have $\prod_{A} K_A  \in \scrR_{\supp \caA}$. 
However, we prefer to denote  such products  by
\beq
\mathop{\otimes}\limits_{A \in \caA}   K_A      \in \scrR_{\supp \caA},
\eeq
i.e.\ we keep the tensor product explicit in the notation.

\subsubsection{Symmetry properties}\label{sec: symmetry prop}

For later use, we also establish some symmetry properties of the operators $G_A$. To do this, it is more natural to keep the definition \eqref{eqfirst definition of correlation function}, i.e.\ to view $G_A$ as an element of $\scrR^{{\otimes^m}}$ with $m=\str A\str$ instead of $\scrR_A$.   As announced following equation \eqref{eq: def bigb}, we write $B_{\tau,n}$ instead of $B_{\tau}$ and then also $G_{A,n}$ instead of $G_A$ to indicate the dependence on the final time $n$:

Let $\tau \in \bbN$, $n'>n$ and $A$ such that  both $A,A+\tau$  are subsets of $ I_{1,n}$, then
\beq
G_{A+\tau,n} = G_{A,n} =  G_{A,n'}
\eeq
 because $ \e^{-\i t L_\res} \initialresfinite=\initialresfinite$. 
Similarly, if $b=b_k$, then, with $A,\tau,n'$ as above 
\beq
G_{(A+\tau)\cup \{n+1\},n} = G_{A\cup \{n+1\},n} =  G_{A\cup \{n'+1\},n'}.
\eeq
This follows by the invariance of the observable under free dynamics: $  \e^{\i t L_\res}\d\Ga(b_k) =  \d\Ga(b_k(t))= \d\Ga(b_k)  $.  Finally, if $b=b_x$, then the above properties do not hold in general, but we still have invariance under joint translations of $A$ and the final time $n$, i.e.\ 
\beq 
G_{A\cup \{n+1\},n} =  G_{(A+\tau)\cup \{n+\tau+1\},n+\tau}
\eeq
from $ \e^{-\i t L_\res} \initialresfinite=\initialresfinite$.
 
\subsubsection{The contraction operator $\caT$} \label{sec: contraction}
 
We  define the "contraction operator" $\caT: \scrR_A\to \scrR$, by first giving its action on elementary tensors: Consider a family of operators $V_{\tau}  \in \scrR$,  and set 
\beq
\caT \left[  \mathop{\otimes}\limits_{\tau \in A}  \bsI_{\tau}[V_{\tau}]     \right] =V_{\tau_m} V_{\tau_{m-1}}\dots  V_{\tau_{1}}, \qquad  \textrm{where}  \qquad  \tau_m > \tau_{m-1} > \ldots  > \tau_1,  \label{eq: def contraction}
\eeq
and then extend linearly to the whole of $\scrR_A$.  On the left hand side, we will from now on abbreviate $\bsI_\tau [V_\tau] $ by $V_\tau $.  This is a slight abuse of notation that should not cause confusion because we keep the tensor products explicit in the notation, as explained in Section \ref{sec: operator correlation functions}.

By expanding $U_{\tau}=T\otimes \lone+B_\tau$ for every $\tau $ in the expression for the reduced dynamics \eqref{reddyn}, we arrive at
\beq
Q_{n} =       \sum_{A  \subset I_{0,n}  }   \caT   \left[   G_A \mathop{\opprod}\limits_{ \tau \in I_{0,n} \setminus A} T_\tau       \right]  \label{eq: Q from correlation functions} 
\eeq
where, for $A= \emptyset$, we mean to omit  $G_A$ from the right hand side. 
Similarly, for \eqref{reddyn1} we get
\beq
{Q}_{n \str b} =       \sum_{A  \subset I_{0,n+1}  }   \caT   \left[   G_A \mathop{\opprod}\limits_{ \tau \in I_{0,n+1} \setminus A} T_\tau       \right].   \label{eq: Qb from correlation functions} 
\eeq
It is clear that, in the latter formula,  only $A$ with $n+1 \in A$ give a non-zero contribution because $T_{n+1}=0$.

\subsubsection{Connected correlations}

Analogously to classical probability, 
we define the {\it connected correlation functions} or cumulants $G^c_A\in\scrR_A$ for nonempty $A$,  satisfying
$$
G_A=\sum_{\caA} \mathop{\opprod}\limits_{A \in \caA}G^c_A
$$
where $\caA$ run through the set of partitions of $A$.
As in the classical setup, $G^c_A$ can be solved from this inductively in $|A|$, i.e.\
\begin{align}
G^c_{\tau} = G_{\tau}, \qquad    G^c_{\{\tau_1,\tau_2\}} =  G_{\{\tau_1,\tau_2\}}- G^c_{\tau_2}\otimes G^c_{\tau_1},\\[2mm]
G^c_{\{\tau_1,\tau_2,\tau_3\}} =   G_{\{\tau_1,\tau_2,\tau_3\}} -   \sum_{j =1,2,3}  G^c_{\tau_j} \otimes  G^c_{\{\tau_1,\tau_2,\tau_3\} \setminus \{\tau_j\}} - \mathop{\otimes}\limits_{j=1,2,3}  G^c_{\tau_j}. 
\end{align}
Note that it is a consequence and advantage of our conventions that the order in which we write the tensors does not matter.
We obtain then
\begin{align}
Q_{n}  &=       \sum_{\caA \in \frB^0_{0,n} }   \caT   \left[  \mathop{\opprod}\limits_{A \in \caA}  G^c_A   \,  \mathop{\opprod}\limits_{ \tau \in  I_{0,n} \setminus \supp \caA} T_{\tau}    \right],  \label{eqZ from connected correlation functions}  \\
{Q}_{n \str b}  &=       \sum_{\caA \in \frB^0_{0,n+1} }   \caT   \left[  \mathop{\opprod}\limits_{A \in \caA}  G^c_A   \,  \mathop{\opprod}\limits_{ \tau \in  I_{0,n+1} \setminus \supp \caA} T_{\tau}    \right].  \label{eqZ from connected correlation functions boundary} 
\end{align}
 It is immediately clear that any contribution to the sum  in \eqref{eqZ from connected correlation functions boundary}  vanishes unless $n+1 \in \supp \caA$ because of $T_{n+1}=0$.

 \subsubsection{Norms}

Let us introduce a convenient norm on the spaces $\scrR_A$. 
For $E \in \scrR$ (i.e.\ the case $\str A\str=1$), we set
\beq
\norm E \normw  := \norm E \norm=     \sup_{\rho \in \scrB_1({\scrH_\sys}), \norm \rho  \norm_1=1}   \norm  E(\rho)  \norm_1, \eeq
i.e.\ the natural operator norm on $\scrB(\scrB_1(\scrH_\sys))$. 

For $E \in \scrR_A$ with $ 1< \str A \str <\infty$, we exploit that $E$ can be  written  as a finite sum of elementary tensors 
\beq
E = \sum_\nu     E_\nu,   \qquad E_{\nu}=\otimes_{\tau \in A} E_{\nu,\tau}, \qquad  E_{\nu,\tau} \in \scrR_\tau,
\eeq
to  define
\beq \label{def: weird norm}
\norm E \normw:= \inf_{\{ E_\nu \}}  \sum_{\nu} \prod_{\tau \in A} \norm E_{\nu, \tau} \norm  
\eeq
where the infimum ranges over all such elementary tensor-representations of $E$.
This norm  is useful because of  the following properties (trivial from the definition):  
\ben
\item 
For any family of operators  $K_{A \in \caA}$ with  $K_A \in \scrR_A$ and $\caA$ a collection of disjoint sets,  we have
\beq
\left \norm    \mathop{\opprod}\limits_{A \in \caA} K_A   \right \normw  \leq    
\mathop{\prod}\limits_{A \in \caA}  \norm K_A \norm_{\weird}.    
\label{eqbound w norm}
\eeq
\item For any  $K_A \in \scrR_A$, 
\beq
\left \norm   \caT \left[  K_A  \right] \right \norm   \leq    
\left \norm   K_A   \right \normw. 
\eeq
\een

\subsection{Scalar polymer weights}  \label{sec: scalar polymer model}

The representations (\ref{eqZ from connected correlation functions}, \ref{eqZ from connected correlation functions boundary}) 
 evoke the picture of a leading dynamics $T$ interrupted by excitations, indexed by the sets $ A \in \caA$, and with operator valued weights $G^c_A$. We will now construct a similar representation, but with scalar weights.
We exploit the dissipativity of the model, captured in the upcoming lemma. For operators $W
' \in \scrB_1(\scrH_\sys), W
 \in \scrB(\scrH_\sys)$, we write $\str W
' \rangle \langle W
 \str$ to denote the operator in $\scrR$ acting as $S \mapsto  \str W
' \rangle \langle W
\str S =  W
' \Tr( W
^* S)$ 
 \begin{lemma} \label{lem: spectral gap}
Recall the operator $T \in \scrR$ defined in \eqref{def: t}. It has a simple eigenvalue equal to $1$, corresponding to the one-dimensional spectral projector $R=\str \eta \rangle \langle \lone \str$, with $\eta$ a density matrix, such that
 \beq
 \norm T^m - R \norm \leq C \e^{-g m}
 \eeq
 for some $g>0$.
 \end{lemma}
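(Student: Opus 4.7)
The plan is to prove this via the standard Davies weak-coupling analysis combined with analytic perturbation theory. First I would observe that $T$ is completely positive and trace-preserving: it is a partial trace of a unitary conjugation on a product state with bath in the vacuum $P_\Om$. In particular $\|T\| = 1$, and the dual $T^*$ fixes the identity $\lone$, so $1 \in \sigma(T)$ automatically. The whole content of the lemma is the statement that this eigenvalue is simple and isolated from the rest of the spectrum by a uniform gap.

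The first main step is a second-order Dyson expansion of $\e^{-\i L/\la^2}$ around the free Liouvillian $L_\sys + L_\res$, written in the interaction picture. The zeroth-order term, after taking $\Tr_\res(\cdot\, P_\Om)$, is $\e^{-\i L_\sys/\la^2}$; the first order vanishes by $\langle\Om, a_k\Om\rangle = \langle\Om, a^*_k\Om\rangle = 0$; the second order, after Wick contraction of the two field operators in the vacuum, is a double time integral of the bath correlation $\int \d k\, |\hat\phi(k)|^2 \e^{-\i t|k|}$ against $D$ conjugated by $\e^{\pm \i t H_\sys}$. The weak-coupling miracle is that the prefactor $\la^2$ cancels the length $1/\la^2$ of the integration interval, so the contribution is $O(1)$. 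The infrared regularity ($\phi\in\frh_\al$) forces the bath correlation to decay at rate $|t|^{-\al}$ (Lemma \ref{lem: decay}), so that memory tails and higher-order Dyson terms can be bounded by $\la^{\al}$, yielding
\begin{equation*}
\bigl\| T - \e^{-\i L_\sys/\la^2 + \caL_D}\bigr\| \leq C\la^{\al},
\end{equation*}
where $\caL_D$ is the standard Davies dissipator.

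The second step is to analyze $T_0 := \e^{-\i L_\sys/\la^2 + \caL_D}$. Decomposing $\scrB(\scrH_\sys)=\bigoplus_\om \Ran(P_\om)$ into spectral subspaces of $\ad(H_\sys)$ indexed by eigenvalue gaps $\om = e - e'$ of $H_\sys$, the secular structure of $\caL_D$ shows that both $\e^{-\i L_\sys/\la^2}$ and $\caL_D$ preserve each $P_\om$. On the diagonal block $P_0$, $\e^{-\i L_\sys/\la^2}$ is the identity and $\caL_D|_{P_0}$ is the generator of a classical Pauli master equation on the eigenvalues of $H_\sys$ with jump rates $j(e,e')$ of \eqref{eqexpression jump rates}; Assumption \ref{ass: fermi golden rule} supplies the connectivity needed for this chain to have a unique stationary density matrix $\eta$ (diagonal in the $H_\sys$-basis) and a strictly positive spectral gap. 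On each off-diagonal block $P_\om$, $\om \neq 0$, one verifies that $\caL_D|_{P_\om}$ has spectrum with strictly negative real part, so $T_0|_{P_\om}$ is a strict contraction uniformly in $\la$. Combined, $T_0$ has $1$ as a simple isolated eigenvalue with spectral projector $R = |\eta\rangle\langle\lone|$ and an absolute spectral gap $g_1 > 0$.

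The third step is analytic perturbation theory: since $\|T - T_0\|\leq C\la^{\al}$ and $T_0$ has an isolated simple eigenvalue at $1$, for $|\la|$ small $T$ carries an isolated simple eigenvalue in a disk around $1$, which is forced to equal $1$ exactly by CPTP, with spectral projector $R_\la = R + O(\la^\al)$; positivity and trace-preservation pin $R_\la$ to the form $|\eta_\la\rangle\langle\lone|$ with $\eta_\la$ a density matrix. The remainder of $\sigma(T)$ lies in a disk of radius $\leq \e^{-g}$ for some $g>0$, giving the desired bound $\|T^m - R\|\leq C\e^{-gm}$. The main obstacle is the uniform estimate $\|T - T_0\| = O(\la^\al)$ at the singular mesoscopic time $1/\la^2$, which requires controlling the tail of the Dyson series beyond second order using the $|t|^{-\al}$ decay of the bath correlation; this is essentially the argument already carried out in \cite{deroeckkupiainenphotonbound}, and Lemma \ref{lem: decay} transfers it to the present infrared condition.
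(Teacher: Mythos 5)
The paper itself does not prove this lemma --- it cites it verbatim as ``Lemma 2.3 1) in \cite{deroeckkupiainenphotonbound} specialized to the case $\ka=0$.'' Your outline is the standard Davies weak-coupling argument, and that is indeed the argument underlying the companion paper's proof, so your overall route is the right one: complete positivity and trace preservation give $1\in\sigma(T)$ with left eigenvector $\lone$; a second-order Dyson expansion in the interaction picture produces the Davies generator, with the $\la^2$ prefactor absorbed by the $1/\la^2$-length integration interval; the Fermi Golden Rule assumption (Assumption~\ref{ass: fermi golden rule}) gives connectivity of the induced Pauli master equation on populations, hence a unique stationary density matrix and a strictly positive gap; and analytic perturbation theory of the isolated simple eigenvalue transfers these properties to $T$ itself, with $1$ pinned exactly by trace preservation. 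Finite-dimensionality of $\scrR$ then upgrades the spectral radius bound to the uniform estimate $\norm T^m-R\norm\leq C\e^{-gm}$.

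One quantitative point is off and should be corrected, because as stated it would break the argument rather than support it. You write that the infrared regularity forces the bath correlation to decay at rate $|t|^{-\al}$. Since $\al<1$, a correlation with that decay is \emph{not} integrable in time, and the second-order Dyson term on a window of length $1/\la^2$ would then scale like $\la^2\int_0^{1/\la^2}|t|^{-\al}\,\d t\sim\la^{2\al}$ and the tail corrections would not be small --- the Davies cancellation you invoke requires an integrable correlation. In fact the IR condition $\phi\in\frh_\al$ (so $\hat\phi(k)\sim|k|^{(\be-d+2)/2}$ near $0$, $C^3$ away from $0$, with $\be>\al$) yields a much faster decay, of order $|t|^{-(2+\be)}$: this is exactly what item 2) of Proposition~\ref{prop: bounds on correlation functions} encodes, since it asserts $\int \d v\,\langle v-u\rangle^{1+\al}h(u,v)<\infty$. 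Lemma~\ref{lem: decay} alone does not give this directly --- one must first integrate by parts twice in the radial variable (this is possible since the radial integrand behaves like $\om^{1+\be}$ near $0$ and vanishes there to order $>1$) and then apply Lemma~\ref{lem: decay} to the second derivative, exactly as is done for $K(\om_1,\om_2)$ in Appendix~\ref{app: propagation estimates}. With that correction, your estimate $\norm T-T_0\norm\lesssim\la^{\al}$ is the right kind of statement, and the remainder of your argument goes through.

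One further minor caveat: in the final perturbative step you should note that the claim that the perturbed spectral projector $R_\la$ has the rank-one form $\str\eta_\la\rangle\langle\lone\str$ uses both that $T^*\lone=\lone$ exactly (so the left eigenvector is unperturbed) and that the right eigenvector can be chosen positive and trace-normalized, e.g.\ by taking the Cesàro average $\lim_M M^{-1}\sum_{m\leq M}T^m\rho$ of an arbitrary density matrix $\rho$, which converges to the stationary state once the gap is established. This is implicit in ``positivity and trace-preservation pin $R_\la$,'' but it is worth making explicit since it is not a purely spectral statement.
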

 This is Lemma 2.3 1) in \cite{deroeckkupiainenphotonbound}  specialised to the case $\ka=0$.  We exploit this to split
\beq\label{Tdeco}
T= R   +   T^{\perp} , 
\eeq
where $ T^{\perp}:=T-R$ and we have 
\beq\label{TtauR}
R    T^{\perp} =T^\perp    R =0,  \qquad TR=RT=R.
\eeq
Analogously, we define $T_0=TR+T^\perp_0= R +(\lone-R)$ (since $T_0=\lone$) so that \eqref{TtauR} also holds for $T_0$. 
We will insert these decompositions into the expansions (\ref{eqZ from connected correlation functions}, \ref{eqZ from connected correlation functions boundary}). The following definition provides the tools for this

\begin{definition}[Fusions] \label{def: scalar weights}  Let $\caA \in \frB^0_{0,n}$  and let  $\caJ \in \frB^{1}_{0,n}$ with the property that all $J \in \caJ$  are intervals.
We say that  a pair $( \caA, \caJ) $ is a \emph{fusion}   if
\begin{enumerate}
\item $\supp \caA \cap \supp \caJ =\emptyset$. 
\item   $\distance( I_{0,n} \setminus \supp ( \caA\cup\caJ) , \supp \caJ  )>1$.
\item  The following undirected graph $\Gamma(\caA,\caJ)$ is connected.   Its vertex set is the disjoint  union $\caA \sqcup \caJ$, and its edges are  $\{A,J\}$ with $A \in \caA, J\in \caJ, \distance(A,J)=1$ and $\{A,A'\}$ with $A,A' \in \caA,  \distance(A,A')=1$.
\end{enumerate}
The set of fusions is denoted by $ \frS_n^f $.
\end{definition}
\begin{remark}\label{rem: emtpy fusions}
The only fusions $(\caA,\caJ)$ with $\caA=\emptyset$ are $(\emptyset,\emptyset)$ and $(\emptyset,\{ I_{0,n}  \})$. The  fusion $(\emptyset,\emptyset)$ will not play any role in what follows because its support, $ \supp (\caA\cup\caJ) $, is empty. 
\end{remark}
Define now, for a fusion $( \caA, \caJ)$,
\beq
V((\caA,\caJ)) :=     \mathop{\opprod}\limits_{A \in \caA} G^c_{A} \mathop{\opprod}\limits_{\tau \in\supp \caJ}  T^{\perp}_{\tau}        \label{eqdef polymer weight big v with collections}
 \eeq
as an operator in $\scrR_{\supp( \caA\cup\caJ)}$. By summing fusions with the same support, we set 
\beq
 {\scriptsize \Si}V (A) :=    \sum_{ (\caA,\caJ) \in \frS^f_n: \supp (\caA\cup\caJ) =  A }  V((\caA,\caJ)).        \label{eqdef polymer weight big v}
\eeq
We can now regroup terms in \eqref{eqZ from connected correlation functions} such that \beq
 Q_{n} = \sum_{\caA \in \frB^1_{0,n}}   \caT   \left[     \mathop{\opprod}\limits_{ \tau \in (\supp \caA)^c} R_{\tau} \,  \mathop{\opprod}\limits_{A \in \caA}  
 {\scriptsize \Si}V(A)    \right]   \label{eq: q in terms of primed a}
\eeq
where $(\supp \caA)^c = I_{0,n} \setminus \supp \caA $. We refer the reader to \cite{deroeckkupiainenphotonbound} for a step by step derivation of this formula, that starts by splitting $T_\tau= R_\tau+T^\perp_\tau$ in \eqref{eqZ from connected correlation functions}. 

Note that since $ \caA \in \frB^1_{0,n}$ the sets $A\in\caA$ above are non-adjacent, i.e.\ distances between them are greater than $1$.  Hence, for any $\caA$ in the formula above, all $\tau$ that are adjacent to the set $\supp\caA$ carry the rank-one operator $R$.  
A pictorial way to phrase this is that any of the operators $ {\scriptsize \Si}V (A) $ in \eqref{eq: q in terms of primed a}  is surrounded by projections $R$, except possibly at the boundaries of the interval $I_{0,n}$.  We exploit this by defining, for $A \neq \emptyset$,
\beq
\hat v(A) :=       \caT \left[  {\scriptsize \Si}V(A)  \bigotimes_{\tau \in I_{0,n} \setminus A}  R_{\tau}  \right], \qquad  \hat v(A) \in \scrR.        \label{eqdef polymer weight v} 
\eeq
Note that $\hat v(A)$ is a multiple of $R$ unless $0\in A$ and/or $n\in A$.
Finally, we recall that $R= \str \eta \rangle \langle \lone \str$ and define
\beq 
 v(A)  :=  \left\{ \begin{array}{lr}            \langle \lone,  \hat v(A)  \eta  \rangle & \qquad      \initial \notin A   \\[3mm]
        \langle \lone,  \hat v(A)  \rho_{\sys,0}  \rangle & \qquad      \initial \in A  
      \end{array} \right.    \label{def: boundary values}  
\eeq
With these definitions, one can check that we obtain
\beq
Z_n(\lone,\rho_0)= \Tr Q_{n} \rho_{\sys,0}=   \sum_{\caA \in \frB^1_{0,n}}       \prod_{A \in \caA}  v(A)    \label{eqscalar polymer model boundaries}
\eeq
where we have  used the fact that  $\Tr \rho_{\sys,0}=  \langle \lone, \rho_{\sys,0} \rangle =   1$ to simplify the formula, and the summand on the right hand side is understood to be $1$ for $\caA=\emptyset$. 
Again, a more detailed derivation can be found in \cite{deroeckkupiainenphotonbound}
(compared to the corresponding expression in \cite{deroeckkupiainenphotonbound} the factors $k_\realinitial k_\realfinal$ are missing, $k_\realinitial$ is missing because $\Tr\rho_{\sys,0}=1$ and $k_\realfinal$ is missing because, unlike in \cite{deroeckkupiainenphotonbound}, we don't have an observable consisting of Weyl-operators). 
In the special case where $\rho_0=\eta\otimes P_{\Om}$, using $\Tr \rho_0=1$, \eqref{eqscalar polymer model boundaries} reduces to
\beq \label{eq: z only bulk}
1= Z_n(\lone,\eta\otimes P_{\Om})=  \sum_{\caA \in \frB^1_{1,n}}       \prod_{A \in \caA}  v(A)
\eeq
because in that case $v(A)=0$ whenever $0 \in A$. This follows from $\psi_{\realinitial}=0$ and $T^\perp\eta=0$. 
\begin{remark}\label{rem: fusions with zero weight}
Fusions $(\caA,\caJ)$ with $n \in \supp \caJ$ do not contribute to $v(\cdot)$.  Indeed, they contribute to $\hat v(\cdot)$ an 
%
%
operator of the form $T^\perp K$ for some $K \in \scrR$, but we have
$$ \Tr (T^\perp  K \rho)= \Tr TK\rho -\Tr R K\rho =  0 $$
because $T$ and $R$ conserve the trace. 
In particular, by Remark \ref{rem: emtpy fusions}, fusions with $\caA=\emptyset$ do not contribute. 
\end{remark}
It remains to generalise this formula to the case where we have the observable $\d \Gamma(b)$. As already indicated, this is taken care of by defining the boundary element $n+1$. One could generalise the concepts above, like fusions, to include this element in an appropriate way, but we prefer not to do this, the reason being that the boundary element $n+1$ behaves in a very distinct way.  Instead, we proceed as follows: 
Fix a fusion $(\caA,\caJ)$ with $\caA \neq \emptyset$ and a set  $A \in \caA$.  We modify the collection $\caA$ by replacing the set $A$ by $A \cup \{n+1\}$ and calling the obtained collection $\caA_A$, i.e.\
\beq 
\caA_{A} := (\caA \setminus \{A\} )\cup  \{A \cup \{n+1\} \}.
\eeq
We can then define the operator  $V((\caA_A,\caJ))$ via \eqref{eqdef polymer weight big v with collections} as an operator on $\scrR_{\supp (\caA\cup \caJ) \cup \{n+1 \}}$ because   $G^c_A$ with $ n+1
\in A$ is well-defined. Then  we set
\beq
 {\scriptsize \Si}V (A' \cup \{n+1\}) :=   \sum_{ \substack{ (\caA,\caJ) \in \frS^f_n,  \caA \neq \emptyset \\  \supp (\caA\cup\caJ) =  A' } }  \,   \sum_{A \in \caA} V((\caA_{A} ,\caJ)),        \label{eqdef polymer weight big v boundary}
\eeq
and we simply define $\hat v(A \cup \{n+1\})$ and $v(A \cup \{n+1\})$ by the relations   \eqref{eqdef polymer weight v} and  \eqref{def: boundary values} with $A$ replaced by $A \cup \{n+1\}$.  For consistency with later formulas, we also set $v(\{ n+1\})=0$. 
 Note that we do not extend the setup to include the possibility that $n+1 \in \supp\caJ$. This is indeed not necessary since such a contribution would necessarily vanish because $T_{n+1}=0$, see \eqref{Tfinal}. 
Now, the final expression for $Z_n(\d\Ga(b), \rho_0)$ reads
\beq \label{eq: basic rep zn}
Z_n(\d\Ga(b), \rho_0)=     \sum_{\caA \in \frB^1_{0,n}}     \sum_{A \in \caA}   v(A \cup \{ n+1\})   \prod_{A' \in \caA \setminus \{A\} }  v(A')  
\eeq
where it is understood that $\caA = \emptyset$ does not contribute to the right hand side and the empty product is set to $1$.

\subsection{Estimates on operator-valued polymers} \label{sec: estimates}

\subsubsection{Dyson expansion}
We will now derive a formula for the correlation functions $G_A^c$ in graphical terms.
Recalling that $H=H_S+H_F+H_I$ we decompose $L=\ad(H)$ as
\beq
L=L_\res+ L_\sys+ L_\inter
\eeq
and  introduce
 \beq 
 L_\inter(s)=  \begin{cases} \e^{\i s L_\res}L_\inter  \e^{-\i s L_\res} &  s\geq 0\\   \ad(\Phi(\psi_\realinitial)) & -1 \leq s <0 \end{cases}
 \eeq
We develop the evolution operator $\e^{-\i t L}$ and the Weyl operator $\caW(\psi_\realinitial)$ in a standard way in a Dyson expansion, arriving at
\baq \label{eq: first duhamel series hat}
\e^{\i t L_\sys} Q_{n} \rho_{\sys,0}=    \sum_{m \in \bbN}  (-1
)^{m}\mathop{\int}\limits_{-1\leq t_1 < \ldots < t_{2m} <n/\la^2} \d t_1 \ldots \d t_{2m} \,   \Tr_{\res}\left[  L_{\inter}(t_{2m})  \ldots  L_{\inter}({t_2})  L_{\inter}({t_1})  (\rho_{\sys,0}\otimes P_{\Om})\right]. 
\eaq
Since the operators $L_\inter$ are unbounded, the formula and its derivation require justification that we provide in \cite{deroeckkupiainenphotonbound}.
The
 integrand can be written in terms of the formalism developed in Section \ref{sec: operator correlation functions}  with obvious modifications\beq
 \Tr_{\res}\left[  L_{\inter}(t_{2m})  \ldots  L_{\inter}({t_2})  L_{\inter}({t_1})  (\rho_{\sys,0}\otimes P_{\Om})\right]
  = \left( \caT  \bbE\left[L_{\inter}(t_{2m})  \otimes_\sys\ldots  \otimes_\sys L_{\inter}({t_2}) \otimes_\sys  L_{\inter}({t_1})  \right] \right)\rho_{\sys,0}.
\eeq
These modifications will not be discussed here in detail (see \cite{deroeckkupiainenphotonbound}). Briefly said, 
we introduce copies of $\scrR$ indexed by the times $t_1,t_2,\ldots,t_m$ and labelled products of them. For example, the term $ \bbE\left[ \ldots \right]$ above is an element of $\scrR^{{\otimes^m}}$ that we identify with an element of $ \scrR_{\{t_1,\ldots,t_m\}}$, and the operator $\caT$ contracts it into an element of $\scrR$. 
Applying Wick's theorem, one gets
\beq \label{eq: wick gives pairs}
 (-1
)^{m}
 \bbE\left[L_{\inter}(t_{2m})  \otimes_\sys\ldots  \otimes_\sys L_{\inter}({t_2}) \otimes_\sys  L_{\inter}({t_1})  \right] =\sum_{\pi \in \textrm{Pair}(t_1,\ldots,t_{2m})}  \mathop{\otimes}\limits_{(u,v ) \in \pi}  K_{u,v}  
\eeq
where  $K_{u,v} $ is defined as an operator in $\scrR^{{\otimes^2}}$:
\beq
K_{u,v} = -\bbE( L_{\inter}(v)  \otimes_\sys  L_{\inter}(u)  ),
\eeq
identified with an element of $\scrR_{\{u,v\}}$, and 
 $ \textrm{Pair}(t_1,\ldots,t_{2m})$ denotes the set of pairings of the set $\{t_1,\ldots,t_{2m}\}$, and we write  the pairs as ordered pairs $( u,v)$ with the convention $u\leq v$. 
Substituting \eqref{eq: wick gives pairs} in \eqref{eq: first duhamel series  hat}  we arrive at
\beq \label{eq: 2duhamel series }
\e^{\i t L_\sys} Q_{n}=     \sum_{m \in \bbN}  \mathop{\int}\limits_{-1\leq t_1 < \ldots < t_{2m} <n/\la^2} \d t_1 \ldots \d t_{2m}
\sum_{\pi \in \textrm{Pair}(t_1,\ldots,t_{2m})}  \, \,  \caT\left[  \mathop{\opprod}\limits_{w \in \pi } K_{w}    \right]
\eeq
where we abbreviate the pairs as $w=(u,v)$. 
In \cite{deroeckkupiainenphotonbound} it is explained how this expression may be written
as an integral in a suitable space. Consider a set whose elements are families $\uw$
of pairs of times:  $\uw=\{w_1,w_2,\dots,w_m\}$ with $m\geq 0$
and $w_i=(u_i,v_i )$, $u_i\leq v_i$ and  $u_i,v_i\in [-1,n/\la^2]$. This set carries a  $\sigma$-algebra
and a measure $\mu(\d \uw)$ so that \eqref{eq: 2duhamel series } becomes
\beq 
\e^{\i t L_\sys} Q_{n}=
\mathop{\int}
 \mu(\d \uw)   \caT\left[  \mathop{\opprod}\limits_{i } K_{w_i}    \right].
 \label{eq: 4th duhamel series}
\eeq
It is understood that $\uw=\emptyset$ contributes $1$ to the right hand side. Let us now additionally define
\beq
K_{w \str b} := -\bbE( U_{n+1}\otimes_\sys L_{\inter}(v)  \otimes_\sys  L_{\inter}(u)  ),\qquad w=(u,v),\eeq
as an operator in $\scrR^{{\otimes^3}}$ that we identify with $\scrR_{\{u,v, t \}}$ (recall  $t=n/\la^2$)  such that the operator $U_{n+1}$ acts on the space indexed by $t$. 
Then, the expansion \eqref{eq: 2duhamel series } can also be performed in the presence of the observable $\d\Gamma(b)$:
\begin{align}
\e^{\i t L_\sys} {Q}_{n \str b} &  =     \sum_{m \in \bbN}  (-1
)^{m}\mathop{\int}\limits_{-1\leq t_1 < \ldots < t_{2m} <n/\la^2} \d t_1 \ldots \d t_{2m} \, \,  \Tr_{\res}\left[ U_{n+1} L_{\inter}(t_{2m})  \ldots  L_{\inter}({t_1})  (\cdot\otimes P_{\Om})\right] \nonumber\\[2mm]
&=   \sum_{m \in \bbN}  \mathop{\int}\limits_{-1\leq t_1 < \dots < t_{2m} \leq n/\la^2} \d t_1 \ldots \d t_{2m} \, \,  \sum_{\pi \in \textrm{Pair}(t_1,\ldots,t_{2m})}   \sum_{w_0 \in \pi} \caT\left[  K_{w_0 \str b} \mathop{\opprod}\limits_{\substack{w \in \pi \\  w \neq w_0 }}  K_{w}  \right]   \nonumber\\[2mm]
&=
\mathop{\int}
 \mu(\d \uw)  \sum_{i}  \caT\left[ K_{w_i \str b} \mathop{\opprod}\limits_{j\neq i} K_{w_j}    \right].
 \label{eq: 5th duhamel series}
\end{align}

We proceed with the identification of $G^c_{A}$ from these expansions. To do that
we need to coarse grain them to the  macroscopic time scale (in units of $1/\la^2$).
Given an $s\in [-1,n/\la^2]$ let $[s]$ denote the smallest integer not smaller than $\la^2s$
i.e. $s\in   ]  \la^{-2} ([s]-1),   \la^{-2}[s]  ] $.
Then, 
given  $\uw=\{w_1,w_2,\dots,w_m\}$ let $[\uw]\subset \bbN$ be the union of
the  $[u_i]$ and  $[v_i]$ for $w_i=(u_i,v_i)$. 

The contraction operator $\caT[\cdot]$ defined in Section \ref{sec: contraction} contracts operators from $\scrR_A$ to $\scrR$. We now define a contraction operator $\caT_A$ that produces operators in $\scrR_A$.
Let us consider a finite family of operators $V_{t_i} \in \scrR_{t_i} $ where the indexed times $t_i$ satisfy $t_i < t_{i+1}$ and $[t_i]  \in A$. Then we set
\beq
\caT_{A} \left[  \mathop{\otimes}\limits_{i }  V_{t_i}   \right]   :=         \mathop{\otimes}\limits_{\tau \in A}  \bsI_{\tau} \left[  \caT \left[
\mathop{\otimes}\limits_{j:  [t_j] =\tau } V_{t_j} \right] \right]
\eeq 
and we extend by linearity to the whole of $\otimes_i\scrR_{ t_i }$, obtaining   $\caT_{A}: \otimes_i\scrR_{ t_i } \mapsto \scrR_A$. In words, $\caT_{A}$ puts each operator into the right 'macroscopic' time-copy and contracts the operators within each macroscopic time-copy. 
Coarsegraining \eqref{eq: 4th duhamel series} this way leads to the formula
\beq  
\tilde Y_{A}    G_A Y_{A} =    \mathop{\int}\mu(\d \uw)  \indicator_{[\uw]=A}  \caT_A\left[  \mathop{\opprod}\limits_{i} K_{w_i}    \right].
\eeq
The factors $\tilde Y_{A} $ and $  Y_{A}$ come from the  free $\sys$-evolutions  in 
\eqref{eq: 4th duhamel series}  and the definition of $ G_A$. They are defined as 
\beq
 Y_A =  \mathop{\otimes}\limits_{\tau \in A \setminus \{0\}}Y_\tau,  \qquad     \widetilde Y_A =  \mathop{\otimes}\limits_{\tau \in A \setminus \{0\}}  \widetilde Y_\tau,
\label{eq: y operators}
\eeq 
with \beq
Y_\tau =   \bsI_{\tau}[\e^{\i (\tau-1) L_\sys}], \qquad   \widetilde Y_\tau =    \bsI_{\tau}[\e^{-\i \tau L_\sys}].\
\eeq 
Since $\e^{-\i \tau L_\sys}$ is an isometry in the operator norm of $\scrB_1(\scrH_\sys)$, left and right multiplication by $Y_A,\widetilde Y_A$ is an isometry on $\scrR_A$ in the norm $\norm \cdot \normw$, and therefore 
$\widetilde Y_{A} $ and $ Y_{A}$ play no
role in what follows.

The connected correlations  $G^c_A$ have  similar quite obvious expressions. 
Given a   $\uw$ we can define an undirected graph $\caG(\uw)$ with vertex set $[\uw]$ and
edges  $\{\tau,\tau'\},  \tau \leq\tau'$ whenever  there is a pair $w_i=(u_i,v_i)$ such that  $[u_i]= \tau$
and  $ [v_i]= \tau'$.   Let moreover
\begin{align} \caC(A) & := \{ \uw \,  \big\str \,   [\uw]=A\, \,  \text{and $\caG(\uw)$  is  connected} \}, \\[2mm] 
\caC'(A) & := \begin{cases} \{ \uw \,  \big\str \,   [\uw]=A \}  &  \str A \str =1   \\[1mm]     \caC(A)  &  \str A \str >1.    \end{cases} \end{align}
We have then
\begin{lemma}
\label{lem: identification correlation diagrams}  Let $A\in I_{0,n}$. Then 
\beq    \label{eq: b connected in diagrams identity}
 G^c_A  \cong    \mathop{\int} \mu(\d \uw)\indicator_{\caC(A)}  \caT_A\left[  \mathop{\opprod}\limits_{i} K_{w_i}    \right],
\eeq
\beq     \label{eq: b connected in diagrams identity final}
  G^c_{A \cup \{n+1\}} \cong  \mathop{\int} \mu(\d \uw)\indicator_{\caC'(A)} \sum_{i} \caT_A\left[  K_{w_i \str b} \mathop{\opprod}\limits_{j\neq i} K_{w_j}    \right]
\eeq 
 where  $\cong$ denotes an  isometry in the norm $\norm\cdot \normw$.  
\end{lemma}
The obvious proof of \eqref{eq: b connected in diagrams identity} is in  \cite{deroeckkupiainenphotonbound}, the proof of \eqref{eq: b connected in diagrams identity final} is analogous.   By Lemma \ref{lem: identification correlation diagrams}, and the properties of the norm $\norm \cdot \norm_\diamond$, we immediately get the bounds
\beq    \label{eq: b connected in diagrams}
\norm   G^c_A \normw \leq   \mathop{\int} \mu(\d \uw) 
\indicator_{\caC(A)} 
 \prod_{i=1}^m  \norm K_{w_i}   \normw,
\eeq
\beq    \label{eq: b connected in diagrams boundary}
\norm   G^c_{A \cup \{n+1\}} \normw \leq   \mathop{\int} \mu(\d \uw) \indicator_{\widetilde\caC(A)}  \sum_{i=1}^m  \norm  K_{w_i \str b}  \norm_{\diamond}   \prod_{j\neq i}  \norm   K_{w_j}   \normw.
\eeq

\subsubsection{Bounds on the operators $K_{w}, K_{w \str b}$ and  $G^c_A$ }

To bound the operators $K_{w}$, we first have to address the fact that these operators are qualitatively different whenever one or both of the times $\{u,v\}$ is smaller than $0$ (because then it originates from the expansion of the Weyl operator, rather than from the interaction).  Let us write (recall the form factor $\phi$)
\beq
\phi_s =  1_{s\geq 0} \e^{\i s \om } \phi+   1_{s< 0}  \psi_\realinitial.
\eeq
Then we define the functions $h(u,v)$ and $h(u,v \str b)
$ by 
\beq
 \str \la \str^{1_{u\geq 0} + 1_{v\geq 0}}h(u,v) := \norm K_{u,v} \norm_{\diamond},\qquad    \str \la \str^{1_{u\geq 0} + 1_{v\geq 0}}  h(u,v \str b)
 := \norm K_{u,v \str b} \norm_{\diamond} 
\eeq
where we should however  keep in mind that  $ h(u,v \str b_k)
$ depends on $\delta$ and    $ h(u,v \str b_x)
$  depends on $t_c$ and the final time $t$. We usually do not indicate this dependence (see however item 1) of  Proposition \ref{prop: bounds on correlation functions}).
From the definition of the norm $\norm \cdot \norm_\diamond$ and the definition of $U_{n+1}$ in \eqref{def: u final} we have
\beq
h(u,v) \leq    4 \norm D \norm^2 \str \langle \phi_{v}, \phi_{u} \rangle_{\frh} \str, \qquad  h(u,v \str b)
    \leq   4 \norm D \norm^2   \str  \langle  \phi_v,  b(t) \phi_u \rangle_\frh \str
 \eeq
where $b(t)= \e^{\i \om t} b \e^{-\i \om t}$.
The important properties of the  functions $h(u,v), h(u,v \str b)
$ are  collected in
 \begin{proposition}[Bounds on correlation functions] \label{prop: bounds on correlation functions}
 Unless mentioned otherwise, let $u>-1$. 
 \begin{enumerate}
\item  If $u\geq 0$ and $s\geq 0$, then 
 \beq h(u,v)= h(u+s,v+s), \qquad h(u,v \str b_k)
= h(u+s,v+s \str b_k). \eeq 
The function $h(u,v \str b_x)
$ depends on the final time $t$ and in general $h(u,v \str b_x)
 \neq h(u+s,v+s \str b_x)$. We can indicate this dependence by writing $h(u,v, t \str b_x)$, then 
\beq 
h(u,v, t \str b_x)=  h(u+s,v+s, t+s \str b_x).
\eeq
\item 
 \beq  \int_u^{t} \d v   \langle v-u \rangle^{1+\al}  h(u,v)    \leq C 1_{u\geq 0} + \breve C 1_{u < 0}.    \label{eq: standard bound h} \eeq    
\item 
   \beq  \int_u^{t} \d v   \langle v-u\rangle^{1+\al/2}  h(u,v \str b_k)
    \leq \breve C \delta^{\al/2}. \eeq   
\item   
     \beq 
  \int_u^{t} \d v   \langle v-u\rangle^{1+\al}  h(u,v \str b_x)_{\sup_t}
    \leq \breve C.   \label{eq: nonstandard bound h} \eeq 
    where $h(u,v \str b_x)_{\sup_t}:= \sup_{q \in \bbR}  \big(h(u+q,v+q \str b_x) 1_{v+q \leq t} 1_{u+q \geq  -1}\big)  $. 
\item  Recall $r_\theta <1$ is the radius of a ball containing $\supp \theta$. Fix a number $m_\theta$ such that  $r_{\theta} <  m_\theta < 1$, then
   \beq   \int_{-1}^{t-m_\theta t_c}  \d u\int_u^{t} \d v   \,   \langle t-m_\theta t_c-u \rangle^{\al}   h(u,v \str b_x)
    \leq \breve C \eeq   
    where $\breve C$ can depend on $m_\theta$. 
 \end{enumerate}
Whenever applicable, the bounds above are uniform in $u$ and $t$. 
 \end{proposition}
 Note that in item 4), $h(u,v \str b_x)_{\sup_t}$ differs from $h(u,v \str b_x)$ in that; unlike the latter, it is a function of $v-u$, i.e.\ it is translation invariant. The same remark applies to the upcoming bounds \eqref{eq: sum edge factors b x} and \eqref{eqbound scalar polymers mpos general}. 
 Item $1)$ follows immediately from the fact that $\theta(k/\delta)$ commutes with $\e^{\i s \om}, s \in \bbR$ and the group property $ \e^{\i s \om}\e^{\i s' \om}= \e^{\i (s+s') \om}$. 
The proofs of the other claims concern only the one-boson problem and  they are of a completely different nature than the rest of this paper. Therefore, we gather those proofs in  Appendix \ref{app: propagation estimates}.

\subsubsection{Bounds for operator-valued polymers }

The next step is to use the bounds on the $h$-functions and the formulae (\ref{eq: b connected in diagrams},\ref{eq: b connected in diagrams boundary}) to derive bounds on the operator-valued correlation functions $G_A^c$.

Consider first \eqref{eq: b connected in diagrams}. Let $\scrT(A)
$ be the
set of  $\uw$ such that $ [\uw]=A$ and $\caG(\uw)$  is  a (connected) tree. Then
\beq    \label{eq: b connected in diagrams1}
   \mathop{\int} \mu(\d \uw) 
\indicator_{\caC(A)} 
 \prod_{i=1}^m  \norm K_{w_i}   \norm_{\diamond}\leq  \mathop{\int} \mu(\d \uw') 
\indicator_{\scrT(A)
} 
 \prod_{i=1}^{m'}  \norm K_{w'_i}  \norm_{\diamond}  \mathop{\int} \mu(\d \uw'') 
\indicator_{[\uw'']\subset A} 
 \prod_{i=1}^{m''}  \norm K_{w''_i}   \norm_{\diamond}.
\eeq
Indeed, the pairings and integrals on the left hand side form a subset of the ones on the
right hand side: since  $\caG(\uw)$ is  connected it contains a (in general not unique) spanning tree $\scrT$ (a tree with the same vertex set as the total graph, i.e.\  $\caG(\uw)$)
and thus there is a subset   $\uw'$  of $\uw$ so that $\caG(\uw')=\scrT$. The remaining set of pairs
$\uw''$ in $\uw$ meets the constraint ${[\uw'']\subset A} $.

We first perform the integral over $\uw''$. The integrability results in Proposition \ref{prop: bounds on correlation functions} lead to the estimate
\beq
 \mathop{\int} \mu(\d \uw'') 
\indicator_{[\uw'']\subset A} 
 \prod_{i=1}^{m''}  \norm K_{w''_i}   \norm_{\diamond}\leq 
(1+C_{\realinitial}1_{0 \in A })\e^{ C\str A  \str }.
\eeq
  This is explained in detail in \cite{deroeckkupiainenphotonbound} (see the proofs of Lemma 3.1 and Lemma 3.4 therein).  
To perform the integral over  $\uw'$ let us define for $\tau,\tau'\in \bbN$, $\tau<\tau'$
  \beq  \label{def: e factor}
\hat{e}(\tau, \tau')=  \str \la \str^{1+1_{\tau>0}}\int_{\Dom(\tau)}    \d u \int_{\Dom(\tau')}   \d v \,   h(u,v).
\eeq 
  Here we use the notation  $\Dom(\tau)=]\la^{-2}(\tau-1),\la^{-2}\tau]$ for $\tau>0$ and $\Dom(0)=]-1,0]$, i.e.\ $\Dom (\tau) = \{s \geq -1, [s] =\tau\}$. 
  Then
\beq    \label{eq: b connected in diagrams2}
   \mathop{\int} \mu(\d \uw') 
\indicator_{\scrT(A)
} 
 \prod_{i=1}^{m'}  \norm K_{w'_i}  \norm_{\diamond}\leq 
  \sum_{ \scrT:  \caV(\scrT)=A}
 \prod_{\{\tau,\tau' \} \in\caE(\scrT)}  \hat e (\tau,\tau')   
\eeq
where the sum runs over all trees $\scrT$ whose vertex set $\caV(\scrT)$ is $A$, i.e.\ over all spanning trees on $A$, and $\caE(\scrT)$ is the edge set of the tree $\scrT$. 
Altogether we have obtained
\begin{lemma} Let $A \subset I_{0,n}$, then
\beq \label{eq: g from trees standard}
\norm G^c_{A} \normw \leq  (1 + 1_{0 \in A} \breve C)   \e^{ C \str A\str}  \sum_{ \scrT:  \caV(\scrT)=A}
 \prod_{\{\tau,\tau' \} \in\caE(\scrT)}  \hat e (\tau,\tau').   
\eeq 
\end{lemma}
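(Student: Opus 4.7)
The plan is to combine the diagrammatic bound \eqref{eq: b connected in diagrams} with a spanning-tree decomposition, namely \eqref{eq: b connected in diagrams1}, which has already been set up in the preceding discussion. The target inequality essentially just records the outcome of these two operations, so what remains is to (a) bound the non-tree factor by $(1+1_{[0\in A]}\breve C)\,\e^{C|A|}$ and (b) rewrite the tree factor in terms of the coarse-grained weights $\hat e(\tau,\tau')$.

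First I would fix $A\subset I_{0,n}$ and start from \eqref{eq: b connected in diagrams1}. The non-tree integral
\[
\mathop{\int} \mu(\d \uw'')\,\indicator_{[\uw'']\subset A}\,\prod_{i=1}^{m''}\norm K_{w''_i}\norm_{\diamond}
\]
is a sum over the number $m''$ of extra pairs, each pair lying inside $\cup_{\tau\in A}\Dom(\tau)$. Using the definition of $h$ and the factor $\str\la\str^{1_{u\ge 0}+1_{v\ge 0}}$, each pair contributes at most $\int_{u}\d v\, h(u,v)$ times the $u$-integration over $\cup_{\tau\in A}\Dom(\tau)$. The $v$-integral is bounded by a constant via item 2 of Proposition \ref{prop: bounds on correlation functions} (even without the $\langle v-u\rangle^{1+\al}$ weight). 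The $u$-integration contributes a factor proportional to $|A|\la^{-2}$, which combines with the explicit $\la^2$ from $\hat e$-type normalisation to give an $O(|A|)$ contribution per pair. Summing over $m''$ exponentiates this into $\e^{C|A|}$. The factor $(1+1_{[0\in A]}\breve C)$ appears when $0\in A$, because one or both of $u,v$ may lie in $\Dom(0)=\,]-1,0]$, where $\phi_u=\psi_\realinitial$ is replaced by the initial Weyl-profile and there is no $\str\la\str$ suppression; this is exactly the bookkeeping already done in \cite{deroeckkupiainenphotonbound}, to which the text defers.

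Next I would handle the tree integral on the right-hand side of \eqref{eq: b connected in diagrams1}. For a fixed spanning tree $\scrT$ with vertex set $A$, the requirement $[\uw']=A$ combined with $\caG(\uw')=\scrT$ says that for each edge $\{\tau,\tau'\}\in\caE(\scrT)$ there is exactly one pair $w=(u,v)$ with $[u]=\tau,[v]=\tau'$ (ordering so $\tau<\tau'$). The measure $\mu$ factorises on these pairs into $\prod_{\{\tau,\tau'\}}\d u\,\d v$, so the constrained integral reduces to
\[
\prod_{\{\tau,\tau'\}\in\caE(\scrT)}\str\la\str^{1+1_{\tau>0}}\int_{\Dom(\tau)}\d u\int_{\Dom(\tau')}\d v\, h(u,v),
\]
which is exactly $\prod_{\{\tau,\tau'\}\in\caE(\scrT)}\hat e(\tau,\tau')$ by the definition of $\hat e$. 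Summing over all spanning trees yields the stated bound.

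The main obstacle is the non-tree estimate: one has to be careful that the integration domain of each extra pair is truly confined to $\cup_{\tau\in A}\Dom(\tau)$, and that the prefactors of $\la$ correctly match the $h$-normalisation so that the resulting exponential $\e^{C|A|}$ has a $\la$-independent constant. The boundary contribution from $\tau=0$ requires a separate treatment because $\Dom(0)$ has length $1$ rather than $\la^{-2}$ and involves $\psi_\realinitial$, which is what produces the explicit $(1+1_{[0\in A]}\breve C)$ enhancement; this is precisely why the initial pair carries only $\str\la\str^{1_{v\ge 0}}$ rather than $\str\la\str^{2}$. Once these bookkeeping issues are handled in line with the corresponding argument in \cite{deroeckkupiainenphotonbound}, inserting both bounds into \eqref{eq: b connected in diagrams1} and then into \eqref{eq: b connected in diagrams} yields \eqref{eq: g from trees standard}.
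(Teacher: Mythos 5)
Your proof is correct and follows the same route as the paper: start from the spanning-tree split \eqref{eq: b connected in diagrams1}, bound the non-tree integral by $(1+1_{[0\in A]}\breve C)\,\e^{C|A|}$ (where, as you note, the $0\in A$ enhancement comes from the $\Dom(0)$ pairs lacking $\la$-suppression), and recognise the tree integral as $\sum_\scrT\prod_e\hat e$. The paper itself leaves both sub-steps largely implicit (the non-tree bound is deferred to \cite{deroeckkupiainenphotonbound}, and \eqref{eq: b connected in diagrams2} is simply stated), so your slightly more explicit account is a faithful reconstruction. One small imprecision: when you write that for a fixed spanning tree "there is exactly one pair $w$ per edge", the indicator $\indicator_{\scrT(A)}$ as literally defined permits several pairs mapping to the same edge of $\scrT$; the intended convention (consistent with how $\uw'$ is extracted from $\uw$ in \eqref{eq: b connected in diagrams1}) is that the tree part $\uw'$ is a minimal realisation with exactly $|A|-1$ pairs, in which case your computation and the stated bound \eqref{eq: b connected in diagrams2} are exact. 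Even reading $\indicator_{\scrT(A)}$ loosely only introduces per-edge factors $\e^{\hat e}-1\leq C\hat e$, which are absorbed into the $\e^{C|A|}$ already present, so the conclusion is unaffected.
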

Let us now derive an analogous bound for $\norm G^c_{A \cup \{n+1\}} \normw $.  We first define, for $\tau \leq \tau'$ (contrary to the above we will need the case $\tau=\tau'$); 
\beq  \label{def: e factor final}
{\hat{e}}(\tau, \tau' \str b) : =   \str \la \str^{1_{\tau>0}+1_{\tau'>0}}  \int_{\Dom(\tau)}    \d u \int_{\Dom(\tau')} \d v \,   h(u,v \str b)
  1_{v\geq u}.
\eeq
In  \eqref{eq: b connected in diagrams boundary}, we distinguish the cases where  $[w_i] = \{\tau,\tau'\}, \tau \neq \tau'$ and $[w_i] =\tau_0$.  In the first case, we make the edge 
$\{\tau,\tau'\}$ part of the spanning tree, in the second case we add the factor $ \hat e(\tau_0,\tau_0 \str b)$ by hand to the product of  edge factors of the spanning tree.   The resulting estimate is
\begin{lemma}  Let $A \subset I_{0,n}$ with $\str A \str >1$, then
\begin{align} 
\norm G^c_{A \cup \{n+1\}} \normw   \leq   \breve C   \e^{ C \str A \str}    \sum_{ \scrT:  \caV(\scrT)=A}
 \Big( \sum_{ \{\tau_0,\tau'_0\}  \in \caE(\scrT)  }  \hat e(\tau_0,\tau'_0 \str b) \prod_{\substack{\{\tau,\tau'\} \in\caE(\scrT) \\ \{\tau,\tau'\} \neq \{\tau^{}_0,\tau_0'\}   }} \hat e (\tau,\tau')  \qquad &  \nonumber \\[1mm]
+\,\,  \sum_{\tau_0 \in A  }  \hat e(\tau_0,\tau_0 \str b) \prod_{\{\tau,\tau' \} \in\caE(\scrT)} \hat e (\tau,\tau') \Big). &   \label{eq: g from trees nonstandard}
\end{align} 
In case $A= \{\tau\} $, we have  simply  $\norm G^c_{ \{ \tau, n+1\}} \normw   \leq   \breve C  \hat e(\tau,\tau \str b)   $. 
\end{lemma}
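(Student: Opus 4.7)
The plan is to follow the pattern used to obtain \eqref{eq: g from trees standard}, starting from \eqref{eq: b connected in diagrams boundary}, and to track carefully which pair in $\uw$ carries the extra $b$-decoration. As before, for each admissible $\uw$ with $[\uw]=A$ and connected graph $\caG(\uw)$, pick a spanning tree $\scrT \subseteq \caG(\uw)$ on $A$, let $\uw'$ be a minimal sub-collection of $\uw$ realizing $\scrT$, and let $\uw''=\uw\setminus\uw'$ be the remaining pairs, which by construction satisfy $[\uw'']\subset A$. The distinguished pair $w_i$ (the one producing $K_{w_i\str b}$) then falls into exactly one of two cases:

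Case (b), $[u_i]\neq[v_i]$: Here $w_i$ is an ordinary edge, so I choose the spanning tree $\scrT$ to contain $w_i$, identifying it with some edge $e_0=\{\tau_0,\tau_0'\}\in\caE(\scrT)$, and put $w_i\in\uw'$. Case (a), $[u_i]=[v_i]$: Here $w_i$ is a self-edge which cannot lie in any tree, so it must belong to $\uw''$, with both endpoints landing in some common macroscopic slot $\tau_0\in A$.

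After this bookkeeping, the integral over $\uw''$ is performed first. Since $[\uw'']\subset A$ and only norms $\norm K_{w''_j}\normw$ (without $b$) appear under the integral, the same estimate that was used for \eqref{eq: g from trees standard}, relying on item 2 of Proposition \ref{prop: bounds on correlation functions} and the argument recalled from \cite{deroeckkupiainenphotonbound}, yields
\beq
\int \mu(\d\uw'')\,\indicator_{[\uw'']\subset A}\prod_j \norm K_{w''_j}\normw \;\leq\; \breve C\,\e^{C\str A\str},
\eeq
where the constant may be taken to absorb the factor $(1+C_\realinitial 1_{0\in A})$. The remaining integral over $\uw'$ is an integral over spanning-tree configurations: each ordinary edge $\{\tau,\tau'\}\in\caE(\scrT)$ is weighted by $\hat e(\tau,\tau')$ after integrating $u\in\Dom(\tau)$ and $v\in\Dom(\tau')$, while in Case (b) the distinguished tree edge $e_0$ is weighted by $\hat e(\tau_0,\tau_0'\str b)$, and in Case (a) the distinguished self-pair is weighted by $\hat e(\tau_0,\tau_0\str b)$ in addition to the full tree weight. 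Summing over the choice of spanning tree on $A$, then over the location $e_0$ or $\tau_0$ of the $b$-pair, yields exactly the two summands in the asserted bound.

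The step that requires the most care is the bookkeeping in Case (a), where the self-edge of the $b$-pair is the sole reason $\caG(\uw)$ is connected: this happens precisely when $A$ is a singleton, so $\scrT$ is the empty tree on a single vertex and only the second sum contributes, consistent with the remark that $G^c(A\cup\{n+1\})$ does not vanish for singletons $A$ while $G^c(A)$ does. Apart from this point, the proof is a direct transcription of the argument leading to \eqref{eq: g from trees standard}, with the only new ingredient being the replacement of one factor $\hat e$ by its $b$-decorated version. All integrability properties required to turn the pair-norms into the quantities $\hat e(\tau,\tau')$ and $\hat e(\tau,\tau'\str b)$ are provided by items 2--4 of Proposition \ref{prop: bounds on correlation functions}.
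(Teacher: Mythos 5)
Your proof is correct and takes essentially the same approach as the paper: start from \eqref{eq: b connected in diagrams boundary}, distinguish cases by whether the $b$-decorated pair is a self-edge ($[u_i]=[v_i]$) or not, place it in the spanning tree $\uw'$ or in the remainder $\uw''$ accordingly, and integrate to obtain the two terms in the bound. One minor imprecision: your claim that the $\uw''$ integral involves ``only norms without $b$'' holds verbatim in case (b), but in case (a) it applies only after the $b$-decorated self-pair has been extracted from $\uw''$ and weighted separately by $\hat e(\tau_0,\tau_0\str b)$ --- which is indeed what you do at the end of the paragraph, so the logic is intact.
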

To proceed, we need bounds on the $\hat e$ factors.  They follow rather straightforwardly from the bounds on $h(u,v), h(u,v \str b)
$.  For convenience we set $\hat{e}(\tau', \tau):= \hat{e}(\tau, \tau')$ and $\hat{e}(\tau', \tau \str b ):= \hat{e}(\tau, \tau' \str b )$. 
For $\hat{e}(\cdot, \cdot)$, we repeat the bound from \cite{deroeckkupiainenphotonbound}. \beq  \label{eq: sum edge factors}
\sum_{\tau' \in I_{1,n} \setminus \{\tau\}}   \langle \tau'-\tau \rangle^{1+\al} \hat{e}(\tau, \tau')  \leq \begin{cases} C \la^2  & \tau \neq 0 \\  \breve C \str\la\str  & \tau = 0. \end{cases}
 \eeq
 To obtain this bound, we bound the sum by (a constant times) the integrals $\int \d u \int \d v$. For $\str \tau'-\tau\str>1$, we gain a factor $\str\la\str^{2(1+\al)}$ by using $ \langle \tau'-\tau \rangle^{1+\al}  \leq \str \la\str^{ 2(1+\al)}  \langle v-u \rangle^{1+\al} $ and item 2) of Proposition \ref{prop: bounds on correlation functions}.  This factor compensates the $\la^{-2}$ coming from the integration over $u$ (in case $\tau>0$) so that the explicit $\str\la\str^{1+ 1_{\tau>0}}$ factor from \eqref{def: e factor} is retained on the right hand side of \eqref{eq: sum edge factors}.   For $\tau'=\tau+1$, we estimate (for definiteness, take $\tau>0$, the other case is trivial)
 \beq
 \int_{(\tau-1)/\la^2}^{\tau/\la^2} \d u   \int_{\tau/\la^2}^{(\tau+1)/\la^2} \d v  \,  h(u,v) \leq    \int_0^{\infty} \d v \str v\str \, h(0,v) \leq C.
 \eeq
 where we used translation invariance  (Item 1) of Proposition \ref{prop: bounds on correlation functions}).
 For $b=b_k$,    we get
\beq  \label{eq: sum edge factors bmom}
\sum_{\tau' \in I_{0,n}}    \langle \tau'-\tau \rangle^{1+\al/2} \hat{e}(\tau, \tau' \str b_k)
  \leq \breve  C\delta^{\al/2}.
 \eeq
Compared to \eqref{eq: sum edge factors}, the term $\tau=\tau'$ is now included in the sum. The derivation proceeds as above, but now starting from items 1,3) of Proposition \ref{prop: bounds on correlation functions}. In case $\tau'=\tau$, one cannot extract any $\la$-dependent small factor from the change of variables $(\tau,\tau')\to (u,v)$ so that the explicit $\str\la\str^{1_{\tau'>0}+ 1_{\tau>0}}$ factor from \eqref{def: e factor final} is used to cancel the $u,v$  integration and therefore there is no $\la$-dependent small factor on the right hand side of \eqref{eq: sum edge factors bmom}.

For $b=b_x$, we similarly derive the analogue of \eqref{eq: sum edge factors bmom}, using Proposition \ref{prop: bounds on correlation functions},
 item 4);
\beq  \label{eq: sum edge factors b x}
\sum_{\tau' \in  I_{0,n}}     \langle \tau'-\tau \rangle^{1+\al} \hat{e}(\tau, \tau' \str b_x)_{\sup_n}  \leq  \breve C  
 \eeq
 where $\hat{e}(\tau, \tau' \str b_x)_{\sup_n} := \sup_{\tau'' \in \bbZ} \big(  \hat{e}(\tau+\tau'', \tau' +\tau''\str b_x) 1_{\tau+\tau'' \geq 0} 1_{\tau'+\tau'' \leq n}\big)   $. Using
 Proposition \ref{prop: bounds on correlation functions}, item 5), we also get
\beq \label{eq: sum edge factors b x special}
\sum_{\tau \leq n-m_\theta n_c } \sum_{ \tau \leq \tau' \leq n}        \langle n-m_\theta n_c-\tau \rangle^{\al}      \hat{e}(\tau, \tau' \str b_x)
  \leq  \breve C
\eeq
where $n_c$ was defined at the beginning of Section \ref{sec: polymer rep} (recall $t_c=\la^{-2}n_c$).

\subsection{Properties of scalar polymers}  \label{sec: bounds on operator valued polymers}

The scalar polymer weights $v(A)$ were defined in Section \ref{sec: scalar polymer model}.  We state some bounds. 
\begin{lemma}[Bounds on scalar polymers]       \label{lem: bound on scalar polymers}   All estimates hold uniformly in $\tau$.
\ben 
\item
For bulk polymers, i.e.\ $0 \not \in A$, we have for  $\tau\in I_{1,n}$, 
\beq
 \sum_{A \subset I_{1,n}:  \,  \tau\in  A  }  \e^{c\str A \str }    \dist(A)^{1+\al}  \str v(A) \str \leq   C \la^2.  \label{eqbound scalar polymers}
\eeq
\item
For polymers containing $0$, we have 
 \beq
 \sum_{A \subset I_{0,n}:  \,  0\in A }  \e^{c\str A \str }   \dist(A)^{1+\al}  \str v(A) \str \leq   \breve C \str \la\str.  \label{eqbound scalar polymers initial}
\eeq
\item  Let $b=b_k$, then  for $\tau\in I_{0,n}$,
\beq 
 \sum_{A \subset I_{n,0}:   \,  \tau \in A }  \e^{c\str A \str }  \dist(A)^{1+\al/2}  \str v(A \cup \{\final\}) \str \leq  \breve C \delta^{\al/2}.  \label{eqbound scalar polymers bmom}
\eeq
\item  Let $b=b_x$, then for  $\tau\in I_{0,n}$,
\beq
 \sum_{A \subset I_{0,n}:  \, \tau \in A }  \e^{c\str A \str }    \dist(A)^{1+\al}  \str v(A \cup \{\final\}) \str_{\sup_n}  \leq   \breve C  \label{eqbound scalar polymers mpos general}
\eeq
where  $\str v(A \cup \{\final\}) \str_{\sup_n} := \sup_{\tau' \in \bbZ} \big(\str v((A+\tau') \cup \{\final\})\str\,  1_{\min (A+\tau') \geq 0} 1_{\max (A+\tau') \leq n}  \big) $.
\item Let $b=b_x$, then
\beq   
 \sum_{A \subset I_{0,n}: \min A \leq  n-m_\theta {n_c} }     \e^{c\str A \str }    \langle n-m_\theta n_c - \min A\rangle^{\al}  \str  v(A \cup \{\final\} ) \str \leq  \breve C.
  \label{eqbound scalar polymers mpos special}
\eeq

\een

\end{lemma}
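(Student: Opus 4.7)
The strategy is to reduce $|v(A)|$ to a sum over spanning tree-like structures on $A$ with small per-edge weights, and then invoke the edge-factor estimates \eqref{eq: sum edge factors}, \eqref{eq: sum edge factors bmom}, \eqref{eq: sum edge factors b x} and \eqref{eq: sum edge factors b x special}. Starting from \eqref{def: boundary values}, \eqref{eqdef polymer weight v}, \eqref{eqdef polymer weight big v}, \eqref{eqdef polymer weight big v with collections}, together with the sub-multiplicative and contractive properties of $\norm\cdot\normw$ recalled in Section \ref{sec: operator correlation functions}, I first bound
$$
|v(A)| \leq \breve C \sum_{(\caA,\caJ) \in \frS^f_n:\, \supp(\caA\cup\caJ)=A} \Bigl(\prod_{A' \in \caA}\norm G^c_{A'}\normw\Bigr)\Bigl(\prod_{\tau \in \supp\caJ} \norm T^\perp\norm\Bigr),
$$
and by Lemma \ref{lem: spectral gap} the second product is bounded by $C\e^{-g|\supp\caJ|}$. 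Inserting the tree bound \eqref{eq: g from trees standard} for each $G^c_{A'}$ and connecting the resulting spanning trees on the components $A' \in \caA$ through the $J$-intervals in $\caJ$ (each such $J$ is adjacent to some $A' \in \caA$, by definition of a fusion) yields a single connected tree-like structure on $A$, whose internal edges carry $\hat e(\tau,\tau')$-weights while each vertex $\tau \in \supp\caJ$ supplies a factor $\e^{-g}$ that is absorbable into $\e^{c|A|}$.

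To produce the distance factor $\dist(A)^{1+\al}$, I observe that in any spanning tree $\scrT$ of $A$, the path from $\min A$ to $\max A$ has total edge-length at least $\dist(A)-1$, so by subadditivity
$$
\dist(A)^{1+\al} \leq |A|^{1+\al}\sum_{e=\{\tau,\tau'\}\in \caE(\scrT)} \langle \tau-\tau'\rangle^{1+\al},
$$
and the combinatorial prefactor $|A|^{1+\al}$ is again swept into $\e^{c|A|}$. The summation over spanning trees is then carried out by a standard walk-on-trees argument: pick the distinguished root vertex ($\tau$ in items 1, 3, 4, $0$ in item 2, or $\min A$ in item 5), and iteratively estimate the sum over adjacent edges by \eqref{eq: sum edge factors}, which contributes $C\la^2$ per ordinary edge. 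This establishes item 1. Item 2 is identical except that the edge incident to $\tau=0$ contributes $\breve C|\la|$ rather than $C\la^2$ by the second case of \eqref{eq: sum edge factors}, producing the advertised linear-in-$|\la|$ factor.

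For items 3--5, the set $A \cup \{n+1\}$ contains the final label, and the tree bound \eqref{eq: g from trees nonstandard} identifies one distinguished edge of the spanning tree that carries the observable factor $\hat e(\cdot,\cdot|b)$ in place of $\hat e$ (or, in the singleton case, a self-loop). I place the per-edge distance weight on this distinguished edge: invoking \eqref{eq: sum edge factors bmom} yields $\breve C\delta^{\al/2}$ (item 3); invoking \eqref{eq: sum edge factors b x} yields $\breve C$ (item 4); in item 5, the weight $\langle n-m_\theta n_c -\min A\rangle^\al$ is placed on the distinguished edge viewed as anchored at $\min A$, and \eqref{eq: sum edge factors b x special} yields the claim. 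The remaining ordinary edges in all these cases are controlled by \eqref{eq: sum edge factors}, whose smallness $\la^2$ per edge absorbs all combinatorial factors coming from the tree summation.

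\textbf{Main obstacle.} The technical heart of the argument is the combinatorial bookkeeping in the tree expansion: organizing each fusion $(\caA,\caJ)$ into a genuine spanning tree-structure on $A$, distributing the polynomial distance weight to a single well-chosen edge, and carrying out the walk-on-trees summation so that all combinatorial multiplicities remain absorbable into $\e^{c|A|}$ while preserving the per-edge smallness $\la^2$ (or $\delta^{\al/2}$, or $\breve C$ on the observable edge). This closely parallels the corresponding polymer-gas argument in \cite{deroeckkupiainenphotonbound}.
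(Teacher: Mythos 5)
Your overall strategy---bound $\str v(A)\str$ by a sum over fusions, apply the tree bound on each $G^c_{A'}$ and the spectral gap bound on $T^\perp$, then distribute the distance weight over tree edges and sum via a walk-on-trees argument using the edge estimates \eqref{eq: sum edge factors}--\eqref{eq: sum edge factors b x special}---is the same in spirit as the paper's. The paper formalizes this by assembling the edges $E$ and intervals $J$ into a single labelled set $\scrS$ with per-object weights $w^{(\be)}_{\mathrm{s}}$ satisfying a Kotecky--Preiss criterion, and then invokes the standard cluster-expansion combinatorial lemma rather than a hand-run walk-on-trees; your approach is an alternative (and potentially workable) bookkeeping of essentially the same tree-graph expansion, and your ``sum over edges'' distribution of $\dist(A)^{1+\al}$ replaces the paper's multiplicative distribution via $\langle\tau-\tau'\rangle\langle\tau'-\tau''\rangle\geq\langle\tau-\tau''\rangle$. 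One small imprecision: in item 2 you speak of ``the edge incident to $\tau=0$,'' but if $0\in\supp\caJ$ there is no such edge; that sub-case is controlled by the $\e^{-g\str J\str}$ decay of the $T^\perp$-interval instead, and the paper treats it as a separate term in \eqref{eq: bound by connected sets zero} (it happens to give $C\la^2$, which is even smaller than the advertised $\breve C\str\la\str$, so the conclusion is unaffected).

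The genuine gap is item 5. The weight $\langle n-m_\theta n_c-\min A\rangle^\al$ is anchored at $\min A$, the left-most vertex of the entire polymer, whereas the bound \eqref{eq: sum edge factors b x special} only controls $\langle n-m_\theta n_c-\tau_0\rangle^\al\,\hat e(\tau_0,\tau_0'\str b)$ with $\tau_0=\min A_0$ the left endpoint of the distinguished observable edge $A_0$. Your phrase ``placed on the distinguished edge viewed as anchored at $\min A$'' silently assumes $\min A_0 = \min A$ (or at least $\min A_0\leq n-m_\theta n_c$), which need not hold: it can happen that $\min A \leq n-m_\theta n_c <\min A_0$, so the distinguished edge lies entirely to the right of the cutoff and the polynomial weight must instead be paid for by the decay of the ordinary tree edges and $T^\perp$-intervals connecting $\min A$ to $A_0$. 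The paper's proof of item 5 recasts $v(A\cup\{n+1\})$ via \eqref{eq: rep with x and z} as a convolution of an observable factor $x(A_0)$ with a cluster factor $z_{A_0}(A_1)$, then explicitly splits into the two cases $\min A_0\leq\tilde n$ and $\min A_1\leq\tilde n<\min A_0$; the first uses the submultiplicative splitting $\langle\tilde n-\min(A_0\cup A_1)\rangle\leq C\langle\tilde n-\min A_0\rangle\langle|\min A_0-\min A_1|_+\rangle$ together with property $b)$, and the second uses an $\ell^1$--$\ell^\infty$ argument plus the translation invariance property $c)$. None of this is implicit in what you wrote, and without it the item 5 estimate does not close.
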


\subsubsection{Proof of Lemma \ref{lem: bound on scalar polymers}}

First, we restrict to $A \subset I_{1,n}$.  By using the definitions  (\ref{def: boundary values}, \ref{eqdef polymer weight v}, \ref {eqdef polymer weight big v}, \ref{eqdef polymer weight big v with collections}), we can bound the polymer weight $v(A)$ by a product of  $\norm  \cdot \norm_\diamond$-norms of  operators $G_A^c$,  projections  $R$ and $(T^\perp)^{\str J \str} $,  i.e.\
\beq
\str v(A) \str \leq  \sum_{(\caA,\caJ) \in \frS^f_n: \supp (\caA\cup\caJ)=A}  \norm R \norm^{\str \supp \caA \str} \prod_{A' \in \caA}    \norm  G_{A'}^c \norm_\diamond \prod_{J \in \caJ}    \norm (T^\perp)^{\str J \str}  \norm.
\eeq
Next, we use  $\norm R \norm \leq C, \norm (T^\perp)^m \norm \leq C \e^{-m g}$ and the bounds  \eqref{eq: g from trees standard} on $\norm  G_{A'}^c \norm_\diamond$ to get
\beq \label{eq: bound on va norms}
\str v(A) \str \leq   \sum_{(\caA, \caJ) \in \frS^f_n: \supp (\caA \cup \caJ)=A}  \prod_{J \in \caJ} (C \e^{- \str J \str g})  \prod_{A' \in \caA} \, \e^{C \str A'\str}  \sum_{\scrT: \caV(\scrT)=A'}  \,   \prod_{\{\tau,\tau'\} \in \caE(\scrT)}\hat e(\tau,\tau').  
\eeq
Let us now take $ A \subset I_{0,n}$, i.e.\ we allow $0 \in A$, than the bound \eqref{eq: bound on va norms} remains valid
 if we multiply the right hand side by $1+ 1_{0 \in A} \breve C$.  Indeed, the only changes are 1)  at most one of the factors $\norm T^{\perp} \norm $ is replaced by $\norm (T^{\perp})_0 \norm  = C\norm T^{\perp} \norm $ and 2) the bound   \eqref{eq: g from trees standard} has the factor $1+ 1_{0 \in A'} \breve C$ for at most one of the sets $A'$.

We estimate \eqref{eq: bound on va norms}  by viewing the sums on the right hand side as a sum over certain connected graphs.  Let
$\scrS = \caJ\sqcup \caE(\scrT)$, i.e.\ we label the element of $\scrS$ as intervals $(J)$ or edges ($E$). The elements of $\scrS$ are denoted by $S,S'$ and collections of them are denoted by $\caS$.  We write $\supp S$ to denote the subset of $\bbN$ defined by $S$, i.e.\ $S$ without the interval/edge label, and $\supp \caS=\cup_{S\in \caS}\supp S$. 
We assign to any $S \in \scrS$ a weight $w^{(\be)}_{\mathrm{s}}(S)$, with $\be>0$, as follows:
\beq
w^{(\be)}_{\mathrm{s}}
(S) :=  c(w)\times \begin{cases}  \breve c(w)  \langle \tau \rangle^{\be}  \str\la\str^{-1} \hat e(0,\tau)  &  S\, \,   \textrm{is the edge} \, E= \{0,\tau \}, \qquad 0 <\tau    \\[2mm]    
  \langle \tau'-\tau \rangle^{\be}  \str\la\str^{-2} \hat e(\tau,\tau')  &  S\, \,   \textrm{is the edge} \, E= \{\tau,\tau' \}, \qquad 0 <\tau < \tau'   \\[2mm]    
   \str J\str^{\beta} \e^{-(g/2) \str J \str }&S\, \,   \textrm{is the interval} \, J
 \end{cases}
\eeq
where $g$ is as in Lemma \ref{lem: spectral gap} and the constants $c(w),\breve c(w)$ will be fixed below. 
We define an adjacency relation $\sims$ on $\scrS$ by 
\begin{align}
J \sims E &   \Leftrightarrow      \distance( J, E ) = 1,   \nonumber    \\ 
E  \sims E' &   \Leftrightarrow     E  \cap E' \neq \emptyset,    \nonumber    \\ 
J  \sims J' &   \Leftrightarrow   J=J'.   \label{def: sims}
\end{align}
Then, using \eqref{eq: sum edge factors} and $g>0$, we can choose $c(w),\breve c(w)$ small enough such that, for any  $\be \leq 1 + \al$
\beq \label{eq: kotecky preiss}
\sum_{S \in \scrS:\,  S \sims S'}    w^{(\be)}_{\mathrm{s}}
(S)  \leq  1/\e,
\eeq
uniformly for small enough $\la$.  We now  claim that, for sufficiently small $c>0$, 
\beq \label{eq: bound by connected sets}
 \e^{c \str A \str} \dist(A)^{1+\al} \str v(A) \str \leq  (\la^2C 1_{0 \notin A} +\str\la\str \breve C 1_{0 \in A} )   \sum_{\substack{\caS \subset \scrS: \,  \supp \caS=A  \\[1mm]   \caS \,  \textrm{connected}   }}     \prod_{S \in \caS}   w^{(1+\al)}_{\mathrm{s}}
(S) 
\eeq
where $\caS \,  \textrm{connected}$ means that the graph with vertex set $\caS$ and edges $\{S,S'\}$ if $S \sims S'$, is connected.
To check \eqref{eq: bound by connected sets}, note that 
\begin{enumerate}
\item $ \langle \tau-\tau'\rangle \langle \tau'-\tau'' \rangle \leq  \langle \tau-\tau'' \rangle$
\item  The right hand side of \eqref{eq: bound on va norms} contains, through the edge factors $\hat e$, at least one factor $\la^2$ when $0 \notin A$ and at least one factor $\str\la\str$ or $\la^2$ when $0 \in A$.   This is because any contributing fusion has $\caA \neq\emptyset$, see Remark \ref{rem: fusions with zero weight}.   Additional factors  $\e^{C \str A' \str} $ are killed by additional powers of $\la^2$.
\item  The notion of connectedness defined by the relation $\sims$ corresponds to the one on the right hand side of \eqref{eq: bound on va norms} in the following sense:  We start from a fusion $(\caA, \caJ)$ and we choose for any $A' \in \caA$, a spanning tree $\scrT_{A'}$ on $A'$. 
Then, consider the subset of $\scrS$ that consists of $\cup_{A' \in \caA}\caE(\scrT_{A'})$ and of the intervals $J \in \caJ$.  This subset is connected by the adjacency relation $\sims$.
\item  There is at most one edge $E$ containing $0$ so we can absorb an eventual $\breve c(w)$ into the prefactor $\breve C$. 
\end{enumerate}
To finish the proof, we invoke  a combinatorial bound stating that, provided \eqref{eq: kotecky preiss} holds, we have, for any $S_0 \in \scrS$,
\beq  \label{eq: applied cak bound}
\sum_{\substack{\caS \subset \scrS: \caS \sims S_0 \\  \caS \,  \textrm{connected}   }}    \prod_{S \in \caS}   w^{(\be)}_{\mathrm{s}}
(S)   \leq  1,   \qquad     \sum_{\substack{\caS \subset \scrS \\  \caS \cup \{S_0\} \,  \textrm{connected}   }}    \prod_{S \in \caS}   w^{(\be)}_{\mathrm{s}}
(S)   \leq    \e 
\eeq
where $\caS \sims S_0$ means that $S\sims S_0$ for at least one $S \in \caS$. An extended presentation of (a more general version of) this bound is found in Appendix A of \cite{deroeckkupiainenphotonbound}, it is a standard ingredient of cluster expansions.
From  \eqref{eq: applied cak bound}, we  get
\begin{align}
\sum_{\substack{\caS \subset \scrS:\,  \tau \in \supp \caS  \\[1mm]  \caS \,  \textrm{connected}   }}     \prod_{S \in \caS}   w^{(1+\al)}_{\mathrm{s}}
(S)  \leq  C.  \label{eq: bound on v a by graphs}
\end{align}
Indeed, it is straightforward to relate the constraint $\tau \in \supp \caS$ to the adjacency structure defined by $ \sims$. For example: pick an arbitrary $\tau'$ with $\str\tau'-\tau\str \geq 2$ and let $E_{\tau''}$ be the edge $\{\tau',\tau''\}$.  Then,  $\tau \in \supp \caS$ implies that $\caS \sims E_{\tau''}$ for at least one $\tau'' \in \{\tau-1, \tau, \tau+1\}$, and hence \eqref{eq: bound on v a by graphs} follows by the first inequality of \eqref{eq: applied cak bound}.
Combining  \eqref{eq: bound by connected sets} and  \eqref{eq: bound on v a by graphs} yields item 1) and item 2). 
 
Next, we turn to the case where $(n+1)\in A$.  In the simplest case, $A= \{\tau, n+1\}$ for some $\tau$, we have
\beq \label{eq: v bound for singleton}
\str v(\{\tau, n+1\}) \str \leq \breve C \hat e(\tau,\tau\str b)
\eeq 
and all claimed properties, i.e.\ items 3,4,5 follow immediately from properties of $\hat e(\tau,\tau\str b)$.    Let us hence assume that $\str A \setminus \{n+1\}\str >1$ in the remainder of the proof.
Proceeding as in \eqref{eq: bound by connected sets}, we derive
\beq \label{eq: bound by connected sets boundary}
 \e^{c \str A \str}  \str v(A \cup \{n+1\}) \str \leq  \breve C   \sum_{\tau_1,\tau_2\in A, \tau_1\leq\tau_2} \hat e(\tau_1,\tau_2 \str  b)  \sum_{\caS \subset \scrS}   1_{ \supp \caS'=A}  1_{\caS' \,  \textrm{connected} }  \prod_{S \in \caS}   w^{(0)}_{\mathrm{s}}
(S),
\eeq
with $\caS' = \caS \cup \{ \{\tau_1,\tau_2\}  \}$ in case $\tau_1 \neq \tau_2$ and $\caS'=\caS$ if $\tau_1 = \tau_2$. We did not extract $\la^2, \str \la \str$-factors from the right hand side, in contrast to \eqref{eq: bound by connected sets}, because this smallness is anyhow missing in \eqref{eq: v bound for singleton}.   Note furthermore that \eqref{eq: bound by connected sets boundary}  remains valid when we multiply the left hand side by $\dist(A)^{\be}$, and, on the right hand side, we replace $\hat e(\tau_1,\tau_2 \str  b)$ by $  \langle \tau_2-\tau_1\rangle^{\be} \hat e(\tau_1,\tau_2 \str  b)$ and $w^{(0)}_{\mathrm{s}}$ by $w^{(\be)}_{\mathrm{s}}$, for $\be \leq 1+\al$.
To obtain item 4), we use \eqref{eq: bound by connected sets boundary} with these replacements, choosing $\be=1+\al$, and additionally replacing $\str v(\cdot) \str \to \str v(\cdot) \str_{\sup_n}$ and $\hat e(\cdot,\cdot\str b_x)  \to \hat e(\cdot,\cdot\str b_x)_{\sup_n}$. 
  Using the same strategy as in the proof of items 1,2), relying on \eqref{eq: applied cak bound}, we  sum over the collections $\caS$  and over $\tau_1,\tau_2$, using  the bound \eqref{eq: sum edge factors b x} for the edge factor $\hat e(\tau_1,\tau_2 \str  b_x) $. 

To get item 3), we choose $\be=1+\al/2$ and we proceed as previously; the only difference is that we can extract an additional small factor $\delta^{\al/2}$ from the edge factor $\hat e(\tau,\tau' \str b_k)$, i.e.\ we use  \eqref{eq: sum edge factors bmom}.
 
Finally, we deal with item 5).  We abbreviate $ \tilde n:=n-m_\theta n_c$. Note that, if we restrict the sum in \eqref{eqbound scalar polymers mpos special}  to $A$ such that $\max A > \tilde n$, then the desired bound follows from item 4), hence it suffices in the remainder of the proof to restrict the sum to $\max A \leq \tilde n$.  We perform this proof in a more abstract  way than necessary,  because at a later stage we will need an analogous estimate.
We recast the bound \eqref{eq: bound by connected sets boundary} as
\beq \label{eq: rep with x and z}
 \e^{c \str A\str}\str v(A \cup \{n+1\}) \str \leq  \breve C      \sum_{\substack{A_0,A_1 \subset I_{0,n}: \, \str A_0\str =1, 2 \\[1mm] A_0\cup A_1=A  }}   x(A_0)   z_{A_0}(A_1) 
\eeq
where we introduced the weights
\begin{align}
x(A_0) &:=  \hat e(\tau_1,\tau_2 \str b_x), \qquad    A_0=\{\tau_1,\tau_2 \}  \,\,  \text{(possibly $\tau_1=\tau_2 $)},  \\[1mm]
z_{A_0}(A_1) & :=    \sum_{\caS \subset \scrS}   1_{ \supp \caS=A_1}  1_{\caS' \,  \textrm{connected} }  \prod_{S \in \caS}   w^{(0)}_{\mathrm{s}}(S) \label{eq: def z azero}
\end{align}
with  $\caS'$ as in \eqref{eq: bound by connected sets boundary}, and $z_{A_0}(\emptyset):=1$.
The $x,z$-weights  satisfy the properties
\begin{itemize}
\item[$a)$]  
Let $\str a \str_+= \max(a,0)$ for $a \in \bbR$,   
\beq \label{eq: property a}
 \sum_{A_1 \subset I_{0,n} }   \langle \str \min A_0-\min A_1\str_+ \rangle^{1+\al}    z_{A_0}(A_1)  \leq  \breve C \str A_0\str.
 \eeq 
\item[$b)$] 
\beq  \label{eq: property b}
  \sum_{A_0 \subset I_{0,n}: \, \str A_0\str =1, 2, \, \min A_0 \leq \tilde n  } \langle \tilde n -\min A_0 \rangle^{\al}    \e^{ c\str A_0\str}  x(A_0)  \leq  \breve C. \eeq 
\end{itemize}
 Of course, in the case at hand, the right hand side of \eqref{eq: property a}  is simply $\breve C$ by the constraint on $\str A_0\str$. 
Property a) follows by the same reasoning as the proofs of items 1)-4), after writing  the constrained sum over  $A_1$  as $\sum_{\tau \in A_0} \sum_{A_1: A_1 \ni \tau} $, and  property b) is just the bound \eqref{eq: sum edge factors b x special}. The statement of item 5), restricted to $\max A \leq \tilde n$, is now 
\beq \label{eq: sum item five}
 \sum_{\substack{A_0,  A_1 \subset I_{0,n}:\,  \str A_0\str =1, 2,    \\[1mm]      \max (A_0 \cup A_1) \leq \tilde n  } }  \langle \tilde n -\min (A_0 \cup A_1)\rangle^{\al}  x(A_0)   z_{A_0}(A_1) \leq \tilde C.
\eeq
Note that
\begin{align}
\langle \tilde n -\min (A_0 \cup A_1)\rangle  
&\leq    C \langle  \tilde n -\min A_0\rangle  \langle \str\min A_0 -\min A_1 \str_+\rangle. 
\end{align}
We substitute this in the left hand side of \eqref{eq: sum item five} and use property a) to bound the sum over $A_1$ by $C\str A_0\str \leq \e^{c \str A_0\str } $. Then, we  perform the sum over $A_0$ by property b). This proves the inequality \eqref{eq: sum item five}.  \qed

\section{Proofs of the main theorems}
In this Section, we give the final proof of our main results, Theorems \ref{thm: propagation estimate} and \ref{thm: soft photon bound}. First, in Section \ref{sec: general considerations}, we introduce some general tools, applying to both choices of the operator $b$. Most importantly, we develop a refinement of the representation \eqref{eq: basic rep zn}.   In Section \ref{sec: propagation bound}, we specialise to the case $b=b_x$ and we prove the minimal velocity estimate, i.e.\ Theorem \ref{thm: propagation estimate}. In Section \ref{sec: soft photon bound}, we take $b=b_k$ and we obtain the soft boson bound, i.e.\ Theorem \ref{thm: soft photon bound}. 

\subsection{General Tools}\label{sec: general considerations}

As announced, we do not distinguish for now between the two different choices  for $b$, except in Lemma \ref{lem: bounds barred weights}.
We start from the representation \eqref{eq: basic rep zn} 
and we introduce some notation to simplify it.  We will use the adjacency relation $A \sim A'   \Leftrightarrow \distance(A,A') \leq 1$  for subsets of $I_{0,n}$, and extended to subsets of $I_{0,n+1}$ by simply ignoring the element $n+1$, i.e.:
  \beq
  A \sim A'   \Leftrightarrow \distance(A \setminus \{n+1\},A'  \setminus \{n+1\}) \leq 1,\qquad   A,A'  \neq \{n+1\},
  \eeq
 (we never need the case where $A$ or $A'$ is the singleton $\{n+1\}$). As previously, we write $\caA \sim A'$ if there is  at least one $A  \in \caA$ such that $A\sim A'  $, and $\caA \nsim A'$ if there is no $A \in \caA$ such that $A \sim A'$.
 
We recast \eqref{eq: basic rep zn} 
by separating each collection $\caA$ into its boundary and bulk polymers; 
 \beq \label{eq: splititng of zn into boundary parts}
Z_{n} =   \sum_{A_\realfinal} v(A_\realfinal) Z_{n, A_\realfinal} +    \sum_{A_\realfinal, A_\realinitial:  A_\realfinal \not\sim A_\realinitial} v(A_\realinitial) v(A_\realfinal)   Z_{n, A_\realfinal \cup A_\realinitial}   +  \sum_{A_{\realinitial,\realfinal}}   v(A_{\realinitial,\realfinal})   Z_{n,A_{\realinitial,\realfinal}}  \eeq
where we abbreviated $Z_n=Z_n(\d\Ga(b), \rho_0)$ and where  $A_\realinitial, A_\realfinal, A_{\realinitial,\realfinal}$ run over nonempty subsets of $I_{0,n+1}$ that, respectively, 
\begin{itemize}
\item  contain $0$ but not $n+1$,
\item contain $n+1$ and at least one other element, but not $0$.
\item contain both $0$ and $n+1$.
\end{itemize}
and the factors $Z_{n,A}$ in \eqref{eq: splititng of zn into boundary parts} are defined as
\beq   Z_{n, A'}  :=
  \sum_{\scriptsize{\left.\begin{array}{c}   \caA \in {\frB}^1_{1,n}     \\   \caA \nsim  A' 
  \end{array} \right. }}    \prod_{A \in \caA}  v(A)      \label{eq: def z restricted}
\eeq
where it is understood that $\caA=\emptyset$ contributes  $1$ to the right hand side. 
Note that $Z_{n, A}$ depends only on bulk polymer weights.  Moreover, by \eqref{eq: z only bulk},
\beq  \label{eq: definition empty zet}
1= Z_n(\lone, \rho_0) =   Z_{n , \emptyset}, \qquad \textrm{for}\,\,  \rho_0 = \eta \otimes P_\Om.
\eeq  
As explained in \cite{deroeckkupiainenphotonbound}, 
the quantity $Z_{n, A'}$ can be viewed as the partition function of a polymer gas with  polymer weights  $
w (A) \equiv  v(A)   1_{[A \nsim  A']} $. 
For $\la$ small enough, the bound \eqref{eqbound scalar polymers} (a 'Kotecky-Preiss' criterion, in the terminology of \cite{deroeckkupiainenphotonbound}) allows us to apply the cluster expansion and obtain
\beq \label{eq: rep z n a}
\log Z_{n, A'}    =  \sum_{\caA \in \frB_{1,n}}   v^T(\caA)      \indicator_{[ \caA \nsim A' ]} 
\eeq
where the \emph{truncated} weights $v^T(\cdot)$ are defined as
  \beq  \label{def: truncated cluster weights}
v^T(\caA) :=  \sum_{\scrG \in \frG^c(\caA) }  (-1)^{\str  \scrE(\scrG) \str}    \prod_{\{A_{i}, A_j\} \in  \scrE(\scrG)} 1_{[A_{i} \sim A_j]}  \prod_{A_{i} \in \caA} v(A_i)
\eeq
where $\frG^c(\caA)$ is the set of connected graphs with vertex set $\caA$, and 
 $\scrE(\scrG) $ is the edge set of the graph $\scrG$, see Appendix A of \cite{deroeckkupiainenphotonbound} for more details.  The only property of the weights $v^T(\cdot)$ that we need here is\footnote{The property stated in  Appendix A of \cite{deroeckkupiainenphotonbound} misses the factor $\e^{c \str \supp\caA\str}$ but this can be easily obtained by redefining $v(A) \to \e^{c \str A \str} v(A)$ and taking $\str \la \str$ smaller.}
\beq  \label{eq: bound z decay}
\sum_{\caA \in \frB_{1,n}:\, \caA \sim A}      \dist(\caA)^{1+\al}  \e^{c \str \supp \caA \str}  \str v^T(\caA) \str        \leq  C\str \la^2 \str \str A \str.
\eeq
Comparing to the expansion of $Z_{n, \emptyset}$ and using $\log Z_{n, \emptyset}=0$ (see \eqref{eq: definition empty zet}), we get
\beq \label{eqgeneral expression for excluded z}
\log Z_{n,A'}  =   \log \frac{Z_{n,A'} }{Z_{n, \emptyset} }   =   -\sum_{\caA \in \frB_{1,n}}  v^T(\caA)       \indicator_{[ \caA \sim A' ]}. 
\eeq
We now decompose 
\beq \label{eq: decomposition of zet}
Z_{n,A'}=   \sum_{A \subset I_{1,n}} p_{A'}(A)
\eeq
with $p_{A'}(\emptyset)=1$ and, for $A \neq \emptyset$, 
\beq  \label{eq: weights p}
 p_{A'}(A)=  \sum_{  \substack{ \frA \subset  \frB_{1,n}   \\[0.5mm]   \supp \frA=A }} \mathfrak{p}_{A'}(\frA), \qquad  
   \mathfrak{p}_{A'}(\frA)   =    \prod_{ \caA \in \frA } (\e^{-v^T(\caA)}-1)  1_{  \caA \sim A'}
\eeq
with  $\supp \frA= \cup_{\caA \in \frA} \supp \caA$.  The decomposition \eqref{eq: decomposition of zet} follows from the identity
$$\prod_{x \in X} \e^{f(x)} = \sum_{Y \subset X}\prod_{x \in Y} (\e^{f(x)}-1), $$
 for a finite set $X$ and $f: X \to \bbC$ and with $\prod_{x \in \emptyset} := 1$. 
 
Next, we simplify \eqref{eq: splititng of zn into boundary parts} by introducing new weights $\bar v(\cdot)$.  In what follows, $A_1 $ ranges over subsets of $I_{1,n}$, $A_\realinitial, A_\realfinal, A_{\realinitial, \realfinal}$ have the same meaning as before in \eqref{eq: splititng of zn into boundary parts}. 
First we define 
\beq
\bar v^{(1)}(A_{\sharp}) :=
 \sum_{A'_{\sharp},A_1:  A'_{\sharp} \cup A_1=A_\sharp
 }     v(A'_{\sharp})  p_{A'_{\sharp}}(A_1)     \label{def: bar v 1}
\eeq
where $({\sharp})$ stands for either one of the three subscripts $({\realinitial}),({ \realfinal}), ({ \realinitial, \realfinal})$, the same subscript on the left and right hand side of the equation.    Then, we also need 
\begin{align}
& \bar v^{(2)}(A_{\realinitial})  :=0,  \qquad \qquad  \bar v^{(2)}(A_{\realfinal}) :=0, &   \\[1mm] 
& \bar v^{(2)}(A_{\realinitial,\realfinal})  :=
 \sum_{ \substack{A_\realinitial, A_\realfinal,   A_1   \\   A_\realinitial \nsim A_\realfinal, A_\realinitial \cup A_\realfinal \cup A_1=A_{\realinitial,\realfinal}
 } }   v(A_\realinitial) v(A_\realfinal) p_{ A_\realinitial \cup A_\realfinal}(A_1)  &    \label{def: bar v 2}
\end{align}
and  finally
\beq
\bar v^{}(A_{\sharp}):= \bar v^{(1)}(A_{\sharp}) +\bar v^{(2)}(A_{\sharp}).
\eeq
Moreover, we define again $\bar v(\{n+1\}):=v(\{n+1\})= 0$. 
Relying on \eqref{eq: decomposition of zet}, we recast \eqref{eq: splititng of zn into boundary parts}  as  
\beq  \label{eq: basic rep}
Z_n = \sum_{A_\realfinal }   \bar v(A_{\realfinal}) + \sum_{A_\realinitial, A_\realfinal: A_\realinitial \nsim A_\realfinal }   \bar v(A_{\realinitial})  \bar v(A_{\realfinal}) +    \sum_{A_{\realinitial,\realfinal}}   \bar v(A_{\realinitial,\realfinal}). 
\eeq
For example, note that the $\bar v^{(2)}(\cdot)$ weights account for contributions to the second term of \eqref{eq: splititng of zn into boundary parts} that contribute to the third term in \eqref{eq: basic rep}.  In other words, if we expand  $Z_{n,A_\realinitial \cup A_\realfinal}$ in the second term of \eqref{eq: splititng of zn into boundary parts} according to \eqref{eq: decomposition of zet}, then the terms with $A \sim A_\realinitial,  A \sim A_\realfinal $ contribute to the $\bar v^{(2)}(\cdot)$ weights. 

Furthermore, we rewrite  \eqref{eq: basic rep} by first remarking that (for any $n$)
\beq \label{eq: sum initial zero}
\sum_{ A_{\realinitial} } \bar v(A_{\realinitial}) =0. 
\eeq
Indeed, consider  \eqref{eq: basic rep}  for $Z_n(\lone, \rho_0)=1$, then  polymers $A$ with $n+1 \in A$ never appear and in that case  \eqref{eq: basic rep} simply reads  $1= 1+\sum_{ A_{\realinitial} } \bar v(A_{\realinitial})   $. 
Then, we  decompose $\sum_{A_\realinitial \nsim A_\realfinal} = (\sum_{A_\realinitial } )(\sum_{A_\realfinal} ) - \sum_{A_\realinitial \sim A_\realfinal} $  so that we get our final expression
\beq  \label{eq: zn three terms general}
 Z_n =  \sum_{A_\realfinal } \bar v(A_\realfinal)  -  \sum_{ A_\realinitial \sim A_\realfinal } \bar v(A_{\realinitial})  \bar v(A_{\realfinal}) +    \sum_{A_{\realinitial,\realfinal} } \bar v(A_{\realinitial,\realfinal}).  
\eeq 

The weights $\bar v(\cdot)$ have analogous properties to the   $v(\cdot)$-weights. The time-translation invariance properties will be stated later, here we deal with the bounds:
\begin{lemma} \label{lem: bounds barred weights}
Items $2,3,4,5$ of Lemma \ref{lem: bound on scalar polymers} hold with $v(\cdot)$ replaced by $\bar v(\cdot)$,  possibly with different constants $c, C,\breve C$. 
\end{lemma}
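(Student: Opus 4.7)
The plan is to reduce each of the four inequalities asserted for $\bar v(\cdot)$ to its counterpart for $v(\cdot)$ already furnished by Lemma \ref{lem: bound on scalar polymers}, treating the extra factors $\prod_{\caA\in\frA}(\e^{-v^T(\caA)}-1)$ that distinguish $\bar v$ from $v$ as small perturbations controlled by the summability estimate \eqref{eq: bound z decay}. Note that \eqref{eq: bound z decay} in particular gives $\sum_{\caA\sim A_0}\e^{c|\supp\caA|}|v^T(\caA)|\leq C\la^{2}|A_0|$, which is small for small $\la$.

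First I will collect three preparatory observations. (a) From $|\e^{-x}-1|\leq|x|\e^{|x|}$ and the uniform bound on the sum just displayed, one has $\prod_{\caA\in\frA}|\e^{-v^T(\caA)}-1|\leq C\prod_{\caA\in\frA}|v^T(\caA)|$ for small $\la$. (b) Geometrically, the connectivity constraint $\caA\sim A_0$ forces $\supp\caA\subset[\min A_0-\dist(\caA)-1,\,\max A_0+\dist(\caA)+1]$, and therefore for $A=A_0\cup\supp\frA$,
\beq \label{plan: geom bound}
\dist(A)^{1+\al}\leq C\dist(A_0)^{1+\al}+C\sum_{\caA\in\frA}\dist(\caA)^{1+\al},\qquad |A|\leq|A_0|+\sum_{\caA\in\frA}|\supp\caA|,
\eeq
with analogous additive bounds for $\dist(A)^{1+\al/2}$ and for the one-sided weight $\langle n-m_\theta n_c-\min A\rangle^\al$ of item 5 (using $\min A\geq \min A_0-\max_\caA\dist(\caA)-1$ and $\langle x+y\rangle^\al\leq\langle x\rangle^\al\langle y\rangle^\al$). (c) By the standard Kotecky-Preiss exponentiation of \eqref{eq: bound z decay}, for each fixed $A_0$,
\beq
\sum_{\frA:\;\caA\sim A_0\,\forall\caA\in\frA}\,\prod_{\caA\in\frA}\e^{c|\supp\caA|}|v^T(\caA)|\,\leq\,\e^{C\la^{2}|A_0|},
\eeq
and for small $\la$ this is reabsorbed into the exponential prefactor $\e^{c|A|}$ that appears in the target bound.

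With these in hand, each of the sums $\Sigma:=\sum_{A\,\text{admissible}}\e^{c|A|}\bar\Delta(A)|\bar v^{(1)}(A)|$, where $\bar\Delta$ is the relevant distance weight of items 2--5, is handled as follows. Inserting the decomposition $A=A_0\cup\supp\frA$ from the definition of $\bar v^{(1)}$ and applying \eqref{plan: geom bound} splits $\Sigma$ into two terms, according to whether the heavy weight $\bar\Delta$ is attached to $A_0$ or to a distinguished cluster $\caA^{*}\in\frA$. The first term factorises as $\bigl(\sum_{A_0}\e^{c|A_0|}\bar\Delta(A_0)|v(A_0)|\bigr)$, bounded by the corresponding item of Lemma \ref{lem: bound on scalar polymers}, times the harmless sum of (c). The second term factorises as $\bigl(\sum_{A_0}\e^{c|A_0|}|v(A_0)|\bigr)\cdot \sup_{A_0}\sum_{\caA^{*}\sim A_0}\bar\Delta(\caA^{*})\e^{c|\supp\caA^{*}|}|v^T(\caA^{*})|$, where the first factor is controlled by Lemma \ref{lem: bound on scalar polymers} without the decay weight, and the second is $O(\la^{2})$ by \eqref{eq: bound z decay}. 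The analysis of $\bar v^{(2)}(A_{\realinitial,\realfinal})$ proceeds identically with the anchor pair $(A_\realinitial,A_\realfinal)$ playing the role of $A_0$; the non-adjacency constraint $A_\realinitial\nsim A_\realfinal$ is harmless, and the joint adjacency condition $\caA\sim A_\realinitial$, $\caA\sim A_\realfinal$ only reduces the domain of summation.

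The main bookkeeping obstacle is item 5 (and its $\bar v^{(2)}$ analogue), where the distance weight has exponent only $\al$ rather than $1+\al$, so the margin between \eqref{eq: bound z decay} and the target bound is narrower. In the ``$\caA^{*}\in\frA$'' branch one has to use \eqref{eq: sum edge factors b x special} for the anchor sum together with the fact that the surplus factor $\dist(\caA^{*})^\al$ extracted from the geometric inequality still fits comfortably inside the full exponent $1+\al$ available in \eqref{eq: bound z decay}; once this exponent accounting is done, the rest of the argument goes through as in the routine items 2--4.
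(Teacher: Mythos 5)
Your plan for items 2, 3, 4 is essentially correct, and differs from the paper's only in small bookkeeping: you distribute the distance weight additively, $\dist(A)^{1+\al}\leq C\dist(A_0)^{1+\al}+C\sum_{\caA\in\frA}\dist(\caA)^{1+\al}$, and then split into an ``anchor'' and a ``distinguished cluster'' branch, whereas the paper exploits $\dist\geq1$ to distribute the weight multiplicatively, $\dist(A)^{1+\al}\leq C\dist(A_0)^{1+\al}\prod_\caA\dist(\caA)^{1+\al}$, and then exponentiates the cluster sum directly into a factor $\e^{C\la^2|A_0|}$ absorbed into $\e^{c'|A_0|}$. Both are fine; your (a) is stated imprecisely (the constant in $\prod_\caA|\e^{-v^T(\caA)}-1|\leq C\prod_\caA|v^T(\caA)|$ is really $\e^{C\la^2|A_0|}$, not uniform in $A_0$), but this does no harm since it is absorbed the same way. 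Note also that in your second branch the reference to \eqref{eq: sum edge factors b x special} is misplaced: for $\bar v$ the anchor carries the weight $v(A_0\cup\{n+1\})$, so the $A_0$-sum must be controlled by items 4 and 5 of Lemma \ref{lem: bound on scalar polymers}, not by the raw edge-factor bounds.

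For item 5 there is a genuine gap. Your product inequality $\langle\tilde n-\min A\rangle^\al\leq C\langle\tilde n-\min A_0\rangle^\al\langle\min A_0-\min A\rangle^\al$ followed by ``apply item 5 of Lemma \ref{lem: bound on scalar polymers} to the $A_0$-sum'' only works when $\min A_0\leq\tilde n=n-m_\theta n_c$. The case $\min A_1\leq\tilde n<\min A_0$ (where $A_1=\supp\frA$) is not covered: item 5 for $v(\cdot)$ requires $\min A_0\leq\tilde n$, and the surplus factor you would attach to the distinguished cluster, $\dist(\caA^*)^\al$, together with the $\dist(\caA^*)^{-(1+\al)}$ decay from \eqref{eq: bound z decay} and the constraint $\dist(\caA^*)\geq\min A_0-\tilde n$ produces a borderline-divergent $\sum_d d^{-1}$ if one sums naively over $\min A_0$. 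The paper handles this precisely by setting up the $x,z$ factorization $\bar v(A)=\sum_{A_0\subset A}z_{A_0}(A\setminus A_0)x(A_0)$ and, crucially, by invoking translation invariance of the $z$-factors (property (c), inherited from \eqref{eq: v trunc invariant}): fixing $\tau=\min A_0-\min A_1$, the number of admissible pairs $(\min A_0,\min A_1)$ with $\min A_1\leq\tilde n<\min A_0$ is exactly $\tau$, and the resulting $\langle\tau\rangle^{1+\al}$ is absorbed only because the $z$-sum decays like $\langle\tau\rangle^{-(1+\al)}$ uniformly in the pair (not merely for one endpoint fixed). Without this translation-invariant estimate, your second branch does not close. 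You should add the case distinction and the translation-invariant $z$-bound to complete item 5.
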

We will henceforth refer to items $2,3,4,5$ of Lemma \ref{lem: bounds barred weights} (There is no item 1) since we did not define $\bar v(A)$  for $A \subset I_{1,n}$).
\begin{proof}
The $\bar v$-weights are built from the $v$-weights by `dressing' - in the sense of (\ref{def: bar v 1}, \ref{def: bar v 2}) - the $v$-weights with bulk polymers whose $p$-weights are small and have strong summability properties,  as Lemma \ref{lem: bound on fras} below shows. This should be compared to the proof of Lemma \ref{lem: bound on scalar polymers}  where the edge factors $\hat e(\tau_1, \tau_2\str b)$ were dressed with collections $\caS$, cfr.\ \eqref{eq: bound by connected sets boundary}.  
We first abbreviate
\beq
  \frp_{A'}(\caA)=  (\e^{-v^T(\caA)}-1)  1_{  \caA \sim A'}, \qquad \text{such that} \,    \mathfrak{p}_{A'}(\frA)   =    \prod_{ \caA \in \frA }   \frp_{A'}(\caA)
  \eeq
and
\beq
r(A) = d(A)^{1+\al}  \e^{c \str A \str}, \qquad r( \caA) =  r(\supp \caA). 
\eeq
Then
\begin{lemma}\label{lem: bound on fras}
For any $A' \neq \emptyset$, 
\beq \label{eq: prop of p weights}
  \sum_{\frA \subset \frB_{1,n} }      \prod_{\caA \in \frA} r(\caA) \str  \frp_{A'}(\caA) \str    \leq   \e^{C \lambda^2   \str A' \str}.  
\eeq
\end{lemma}
\begin{proof}
We bound the left hand side of \eqref{eq: prop of p weights} by
\beq
1+   \sum_{k=1}^{\infty} \frac{1}{k!}  \sum_{\caA_1, \ldots, \caA_k \in \frB_{1,n}  }       \prod_{j=1}^k  r(\caA_j)  \str   \frp_{A'}(\caA_j) \str  \leq  \e^{\sum_{\caA \in \frB_{1,n} }    r(\caA)  \str   \frp_{A'}(\caA) \str }.  \nonumber
\eeq
The exponent is bounded by $C\la^2 \str A' \str$, by \eqref{eq: bound z decay}. 
\end{proof}
We give sketches of the proofs of items 3) and 5). The remaining items 2) and 4) are treated similarly to 3) and we omit their proofs. 
We start with item 3).   For the sake of simplicity we drop the  $\bar v^{(2)}(\cdot)$ contribution as its treatment is similar to the $\bar v^{(1)}(\cdot)$ contribution. 
To get item 3) with $\bar v(\cdot)$ replaced by $\bar v^{(1)}(\cdot)$, it suffices to show
\beq
\sum_{\substack{A_0 \subset I_{0,n}, \frA \in \frB_{1,n} \\[0.5mm]   \tau \in A_0\cup \supp \frA  }}  \big(\prod_{\caA\in \frA} r( \caA)  \str p_{A_0}(\caA )\str\big) \,   r(A_0) \str v(A_0 \cup \{n+1\} )\str   \leq \breve C.
\eeq
We dominate this as $\sum_{A_0, \frA:  \tau \in A_0\cup \supp \frA} \leq \sum_{ A_0:  \tau \in A_0 }+ \sum_{\frA: \tau \in  \supp \frA } $.   In the first term, we first estimate the sum over $\frA$ by $\e^{C \la^2 \str A_0\str}$ using Lemma \ref{lem: bound on fras}, and then we use item 3) of Lemma \ref{lem: bound on scalar polymers}  to perform the sum over $A_0 $ with $ \tau \in A_0$.  In the second term, we pick arbitrarily a $\caA \in \frA $ such that $\tau \in \supp \caA$ (hence in particular $\caA \sim A_0$ and $\caA \sim \{\tau\}$)   and we dominate this term by
\beq
\sum_{\caA \in \frB_{1,n}}  \str   \frp_{\{\tau\}}(\caA)  \str    r( \caA)  \sum_{A_0 \subset I_{0,n}: A_0 \sim \caA}  r(A_0)  \str   v(A_0 \cup \{n+1\} ) \str     \sum_{\frA' } \prod_{\caA' \in \frA'}  r(\caA')  \str  \frp_{A_0}(\caA')  \str.  
\eeq
The sum over $\frA'$ is dominated by $\e^{C \la^2 \str A_0 \str}$ by Lemma \ref{lem: bound on fras}, the sum over $A_0$ is dominated by $\breve C \str \supp \caA\str \leq \breve C \e^{c \str \supp \caA\str}$ by item 3) of Lemma \ref{lem: bound on scalar polymers} upon adjusting $c$, and the final sum over $\caA$ is dominated by $\breve C$  by \eqref{eq: bound z decay}, again adjusting $c$. 
Finally, we treat item 5). As argued in the proof of item 5) in Lemma \ref{lem: bound on scalar polymers}, we can assume $\max  A \leq \tilde n$ (the case $\max  A > \tilde n$ being handled by item 4)), hence it suffices to show
\beq \label{eq: sum item five again}
 \sum_{\substack{A_0 \in I_{0,n},  A_1 \subset I_{1,n} \\[0.5mm]   \max (A_0 \cup A_1) \leq \tilde n} }  \langle \tilde n -\min (A_0 \cup A_1)\rangle^{\al}  x(A_0)   z_{A_0}(A_1) \leq \tilde C
\eeq
with $\tilde n= n-n_c m_\theta$ and 
$$x(A_0):= \e^{c \str A_0\str} \str v(A_0 \cup \{n+1\})\str, \qquad    z_{A_0}(A_1):= \e^{c \str A_1\str} \str p_{A_0}(A_1)\str.$$ 
Note the similarity of \eqref{eq: sum item five again} with \eqref{eq: sum item five}, the only difference being that here we do not restrict $\str A_0\str$ and that we have $A_1 \subset I_{1,n}$, i.e.\ $ 0 \notin A_1$.  With these small changes, the properties $a),b)$ (\ref{eq: property a}, \ref{eq: property b}) hold with the $x,z$ weights as defined here: Property  a) by \eqref{eq: weights p} and Lemma \ref{lem: bound on fras}, and property b) by  item 5) of Lemma \ref{lem: bound on scalar polymers}.   Therefore, we can repeat the short proof given in the proof of item 5) of Lemma \ref{lem: bound on scalar polymers} to get the desired claim. 
\end{proof}

\subsection{Symmetry properties of the $\bar v(\cdot)$ weights} \label{sec: symmetry of weights}

We list some symmetry properties of the $\bar v(\cdot)$ weights. We indicate the dependence on the final time explicitly by writing $
\bar v_n(A) $  instead of  $\bar v(A)$.

Let first  $b=b_k$. For $\tau \in \bbN$, $n'>n$ and $A$ such that  both $A,A+\tau$  are subsets of $ I_{1,n}$
\beq \label{eq: invariance of barred v mom} 
\bar v_{n}((A+\tau)\cup \{n+1\})    = \bar v_{n}(A\cup \{n+1\})  =  \bar v_{n'}(A\cup \{n'+1\}). 
\eeq
Let now $b=b_x$, then these equalities do not hold in general, but we still have 
\beq \label{eq: invariance of barred v pos} 
\bar v_{n}(A\cup \{n+1\})   =    \bar v_{n+\tau}((A+\tau)\cup \{n+\tau+1\})  
\eeq
where it is understood that $n_c$ is kept fixed. 
To establish these properties, one first checks that the same properties hold for the $v(\cdot)$ weights.  This follows easily from the symmetry properties in Section \ref{sec: symmetry prop}.  Since (unlike the $\bar v(\cdot)$ weights) the $v(\cdot)$ weights are also defined for $A \subset I_{1,n}$, we can also state an additional symmetry property, namely, for $\tau \in \bbN$, $n'>n$ and $A$ such that  both $A,A+\tau$  are subsets of $ I_{1,n}$,
\beq \label{eq: invariance of v} 
 v_n({A+\tau}) =  v_n({A})  =   v_{n'}({A}). 
\eeq
Finally the symmetry properties of the $\bar v(\cdot)$ weights follow from the corresponding properties for the $v(\cdot)$ weights and from \eqref{eq: invariance of v}.

\subsection{Minimal velocity estimate} \label{sec: propagation bound}
In this section, we take throughout $b=b_x= \theta(x/t_c)$ and we again abbreviate $Z_n=Z_n(\d \Gamma(b_x),\rho_0)$.  Assume the same conventions for the sets $A_{\realinitial}, A_{\realfinal}$ as above, then we have
\begin{lemma}\label{lem: finite size corrections zn} 
\beq
\left \str Z_n -   \sum_{A_\realfinal } \bar v(A_\realfinal)  \right\str \leq  \breve C \langle n \rangle^{-\al}.    \label{eq: finite size zn} 
\eeq 
\end{lemma}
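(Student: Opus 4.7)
The plan is to start from the decomposition \eqref{eq: zn three terms general}. Its first term is precisely $\sum_{A_\realfinal}\bar v(A_\realfinal)$, so it suffices to bound
\[
S_2 := \sum_{A_\realinitial\sim A_\realfinal}\bar v(A_\realinitial)\bar v(A_\realfinal),\qquad S_3 := \sum_{A_{\realinitial,\realfinal}}\bar v(A_{\realinitial,\realfinal})
\]
by $\breve C\langle n\rangle^{-\al}$. The guiding picture is that in both sums polymers are forced to span the macroscopic distance from $0$ to $n+1$: either a single polymer contains both endpoints ($S_3$), or two polymers must meet in the bulk, one anchored at $0$ and the other at $n+1$ ($S_2$). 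Everything will be controlled by the decay estimates in Lemma \ref{lem: bounds barred weights}, combined with the elementary bound $n-m_\theta n_c\ge(1-m_\theta)n$ (which follows from $n\ge n_c$ and $m_\theta<1$), so that $\langle n-m_\theta n_c\rangle$ is comparable to $\langle n\rangle$.

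The term $S_3$ will be immediate. Writing $A_{\realinitial,\realfinal}=A\cup\{n+1\}$ with $0\in A\subset I_{0,n}$, we have $\min A=0\le n-m_\theta n_c$, so item $5$ of Lemma \ref{lem: bounds barred weights} yields
\[
|S_3|\le\sum_{A\ni 0,\,A\subset I_{0,n}}|\bar v(A\cup\{n+1\})|\le\breve C\,\langle n-m_\theta n_c\rangle^{-\al}\le\breve C\,\langle n\rangle^{-\al}.
\]

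For $S_2$ I would write $A_\realfinal=A'\cup\{n+1\}$ with $A'\subset I_{1,n}$ nonempty, and introduce $\tau^* := \min A'$. Because $0\in A_\realinitial$, the adjacency $A_\realinitial\sim A_\realfinal$ is equivalent to $\max A_\realinitial\ge\tau^*-1$, forcing $\dist(A_\realinitial)\ge\tau^*$. Item $2$ of Lemma \ref{lem: bounds barred weights} then gives
\[
\sum_{\substack{A_\realinitial\ni 0\\ \dist(A_\realinitial)\ge\tau^*}}|\bar v(A_\realinitial)|\le\breve C\,|\la|\,\langle\tau^*\rangle^{-(1+\al)}.
\]
For the sum over $A'$ at fixed $\tau^*$, I would distinguish two regimes. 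When $\tau^*\le n-m_\theta n_c$, item $5$ supplies $\sum_{A':\min A'=\tau^*}|\bar v(A'\cup\{n+1\})|\le\breve C\,\langle n-m_\theta n_c-\tau^*\rangle^{-\al}$; a dyadic split of the $\tau^*$-sum at $(n-m_\theta n_c)/2$ then extracts $\breve C\langle n\rangle^{-\al}$ from either half (one half uses the $\langle n-m_\theta n_c-\tau^*\rangle^{-\al}\le C\langle n\rangle^{-\al}$ factor and the summability of $\langle\tau^*\rangle^{-(1+\al)}$; the other uses $\langle\tau^*\rangle^{-(1+\al)}\le C\langle n\rangle^{-(1+\al)}$ against a remaining sum of order $n^{1-\al}$). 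When $\tau^*>n-m_\theta n_c$, only the uniform bound $\breve C$ from item $4$ is available, but $\langle\tau^*\rangle^{-(1+\al)}\le C\langle n\rangle^{-(1+\al)}$ and the number of admissible $\tau^*$ is $\le m_\theta n_c\le n$, producing again $\breve C\langle n\rangle^{-\al}$.

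The main obstacle is the bookkeeping for $S_2$: items $2$ and $4/5$ of Lemma \ref{lem: bounds barred weights} supply two distinct kinds of decay (in the diameter of $A_\realinitial$ and in the distance from $\min A'$ to $n-m_\theta n_c$), and one must combine them via the adjacency $A_\realinitial\sim A_\realfinal$. Turning the dichotomy ``either $A_\realinitial$ reaches far to the right, or $A'$ is not pinned too close to $n$'' into a clean $\langle n\rangle^{-\al}$ bound is the only place that requires care; once the two decay weights are aligned via the dyadic decomposition of the $\tau^*$-sum, the remaining estimates are routine.
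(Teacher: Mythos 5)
Your proposal is correct and follows essentially the same route as the paper: start from \eqref{eq: zn three terms general}, handle the doubly-anchored polymers via item 5 of Lemma \ref{lem: bounds barred weights}, and for the cross term $S_2$ exploit that adjacency forces $\dist(A_\realinitial)\ge\min A_\realfinal$, combining item 2 with items 4/5. The only difference is cosmetic: the paper splits the $\min A_\realfinal$-sum into two cases at $(n-m_\theta n_c)/2$, while you use a three-way split (with the extra cut at $n-m_\theta n_c$), but the decay factors used in each piece are the same and both arrangements yield $\breve C\langle n\rangle^{-\al}$.
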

\begin{proof}
 From \eqref{eq: zn three terms general},  the expression between $\str \cdot \str$ equals
\beq \label{eq: error terms}
-\sum_{ A_\realinitial \sim A_\realfinal }  \bar v(A_{\realinitial})  \bar v(A_{\realfinal}) +    \sum_{A_{\realinitial,\realfinal}}     \bar v(A_{\realinitial,\realfinal}). 
\eeq
Let us treat the first term.   We distinguish  the cases $\min A_\realfinal \leq (n-m_\theta n_c)/2$ and  $\min A_\realfinal > (n-m_\theta n_c)/2$. 
In the first case, we first sum over $A_\realinitial$ using item
 2) of Lemma \ref{lem: bounds barred weights}, yielding a factor $\breve C$ (in fact $\breve C \str \la \str$) and then over $A_\realfinal$, using item
 5) of Lemma \ref{lem: bounds barred weights}  and the fact that 
$$ n-n_cm_\theta-(n-m_\theta n_c)/2 \geq (1-m_\theta)n/2 = \breve c n, $$ 
and obtaining 
$\breve C \langle n \rangle^{-\al} $ (recall that $n_c \leq n$).  In the second case, we first sum over $A_\realinitial$ (item
 2) of Lemma \ref{lem: bounds barred weights})  obtaining a  factor $\breve C \langle \min A_{\realfinal} \rangle^{-(1+\al)}$ because $\max A_\realinitial \geq \min A_\realfinal-1$ (since $A_\realinitial \sim A_\realfinal$),  then we sum over $A_\realfinal$ keeping $\min A_\realfinal$ fixed, yielding $\breve C$ by item
 4) of  Lemma \ref{lem: bounds barred weights}, and  finally over $\min A_\realfinal$,  yielding $ \breve C \langle  (n-m_\theta n_c)/2 \rangle^{-\al}=\breve C \langle  n \rangle^{-\al}$. The second term in \eqref{eq: error terms} is estimated in an analogous way.
\end{proof}

Now we are ready to consider the limit $n \to \infty$ in the expression for $Z_n(\d\Gamma(b_x),\rho_0)$.  Therefore, we should render the $n$-dependence in the weights explicit, as we did in Section \ref{sec: symmetry of weights}. Therefore,  we introduce new notation, namely, for a finite $A \subset \bbN_0$,
\baq  \label{eq: def nu weights}
\nu(A):= \bar v_n((n+1-A) \cup \{n+1\} ),  \qquad \text{with $n$ such that $A \subset I_{1,n}$}.
\eaq
With this notation,  we have (recall that the sum over $A_{\realfinal}$ runs over sets not including $0$, and hence $\bar v(A_\realfinal)$ does not depend on $\rho_0$)
\beq \label{eq: new notation}
 \sum_{A_\realfinal } \bar v(A_\realfinal) =   \sum_{A \subset I_{1,n} } \nu(A). 
\eeq
Let us define
\beq \label{def: z infty}
Z_{\infty}: =  \sum_{A \subset \bbN :\, \str A \str < \infty}       \nu(A).   
\eeq
Note that $Z_{\infty}$ still depends on $t_c$ (or $n_c$), but not on $\rho_0$. 
\begin{lemma}   \label{lem: z infity is limit}  
This sum on the right hand side of \eqref{def: z infty} is absolutely convergent and 
\begin{eqnarray} 
 \left \str Z_n -Z_{\infty} \right \str &\leq &  \breve C \langle n \rangle^{-\al}.    \end{eqnarray}
\end{lemma}
\begin{proof} 
We have
\begin{align}
 \big\str Z_{\infty}-    \sum_{A \subset I_{1,n} }   \nu(A)    \big\str  &\leq  \limsup_{n' \to \infty}  \sum_{A \subset I_{1,n'}:\, \min A < (n'-n)+1} \str \bar v_{n'}(A \cup  \{n'+1\})\str \\
&\leq   \breve C \langle n \rangle^{-\al}.& \label{eq: conv to zrealfinal}
 \end{align}
The first inequality is by \eqref{eq: def nu weights}, the second is by   item 5) of Lemma \ref{lem: bounds barred weights} since $Cn \leq (n'-m_\theta n_c-\min A) $ for any $A$ contributing to  the sum. 
The Lemma then follows by the triangle inequality from (\ref{eq: conv to zrealfinal},\ref{eq: finite size zn})  and \eqref{eq: new notation}. 
\end{proof}

\subsubsection{Proof of  Theorem \ref{thm: propagation estimate}}  \label{sec: proof of thm propagation}

First, we slightly  generalise the setup;  we consider $Z_n(O, \rho_0)$ with $O=\d\Ga(b_x), \lone $ and  $\rho_0$ now not longer a density matrix but the rank-1 operator
\beq
\rho_0= \big\str  \psi_{\sys}\otimes \caW(\psi_{\realinitial})\Om \big\rangle \big\langle   \psi'_{\sys}\otimes \caW(\psi_{\realinitial}')\Om \big\str
\eeq
with $\psi_\sys,\psi_\sys' \in \scrH_\sys$ and $\psi_\realinitial, \psi_\realinitial' \in \frh_\al$.  
 We can easily go through all arguments, with obvious changes, and get the following analogue of Lemma \ref{lem: z infity is limit} 
\beq \label{eq: generalized z conv}
\left \str Z_n(\d\Ga(b_x), \rho_0)  -  Z_{\infty}  \Tr \rho_0 \right \str \leq  \breve C \langle n \rangle^{-\al} ,   \qquad Z_n(\lone, \rho_0) =  \Tr \rho_0 
\eeq
with $Z_\infty$ as above in \eqref{def: z infty}.
Now, take  $\Psi \in \caD_\al$, i.e.\ a finite linear combination $\Psi=\sum_{i}\Psi_{i}$ with $\Psi_i= \psi_{\sys,i}\otimes \caW(\psi_{\realinitial,i})\Om$, then 
\beq \label{eq: conv cadalpha}
\lim_{n \to \infty} \langle    \Psi
(n/\la^2), \d\Ga(b_x)   \Psi
(n/\la^2) \rangle   = Z_{\infty} \sum_{i,j} \langle \Psi_i , \Psi_j \rangle =  Z_{\infty} \norm \Psi \norm^2.
\eeq
Hence,  we  get the statement of Theorem   \ref{thm: propagation estimate}  for times $t$ taken along a subsequence $n/\la^2$ and $t_c$ of the form $n_c/\la^2$. To get the full statement, we should again generalise the reasoning in a straightforward way. 

Assume that the time-discretisation of the model was chosen based on 'mesoscopic time-blocks' of length $\ell \str\la\str^{-2}, \ell \in [1,2]$, instead of $\ell=1$ as we did previously: this means that we change the definition of $Q_n,{Q}_{n \str b}$ and $U_{\tau}, \tau=1,\ldots,n$ by replacing $\str\la\str^{-2}$ by $\ell \str\la\str^{-2}$, for example, instead of \eqref{def: u tau}, we have
\beq \label{def: u tau new}
U_{\tau} :=      \e^{\i \tau (\ell /\la^2)  L_\res} \e^{-\i (\ell/\la^2)  L} \e^{-\i (\tau-1)(\ell/\la^2)  L_\res}   ,\qquad \tau \in I_{1,n}.
\eeq
Then, Lemma \ref{lem: spectral gap} holds as well with a constant $C^{(\ell)}$ and gap $g^{(\ell)}$ that can be chosen uniform in $\ell \in [1,2]$, as we easily get from the results in \cite{deroeckkupiainenphotonbound}, in particular from the proof of Lemma 2.3 1) therein.  The rest of the reasoning goes through without any change except for the readjusting of constants. Hence we have now proven  Theorem   \ref{thm: propagation estimate}  restricted to times $t$ taken along a subsequence $n \ell/\la^2$ and $t_c$ of the form $n_c\ell/\la^2$, and with constants $\breve C$ on the right hand side that can be chosen uniform in $\ell \in [1,2]$. 
Finally, $t_c$ can be tuned independently of $t$ by changing the function $\theta(\cdot)$ to $\theta(\ell\cdot)$ for  $\ell \in [1,2]$ and again the constants $\breve C$ can be chosen uniform. This allows to choose any $t_c \geq \la^{-2} $  (smaller $t_c$ would require to take $\ell$ dependent on $\la$ which we prefer to avoid) and to  establish the full Theorem   \ref{thm: propagation estimate}.

\subsection{Soft boson bound}\label{sec: soft photon bound}
In this section, we take to $b=b_k= \theta(k/\delta)$.  Recall the conventions for $A_\realinitial, A_\realfinal, A_{\realinitial,\realfinal}$ and the expression \eqref{eq: zn three terms general}:
\beq  \label{eq: zn three terms mom}
 Z_n =  \sum_{A_\realfinal } \bar v(A_\realfinal)  - \sum_{ A_\realinitial \sim A_\realfinal } \bar v(A_{\realinitial})  \bar v(A_{\realfinal}) +    \sum_{A_{\realinitial,\realfinal} } \bar v(A_{\realinitial,\realfinal}).  
\eeq 
The second and third term on the right hand side are bounded by $ \breve C \delta^{\al/2}$, using items 2) and 3)  of Lemma \ref{lem: bounds barred weights}. 
For the first term on the right hand side, we argue 
\begin{lemma} \label{lem: extensive soft photon}
There is an $n$-independent number $a$ such that 
\beq \label{eq: finite size k}
\big\str\sum_{A_\realfinal } \bar v(A_\realfinal) - n a  \big\str \leq \breve C \delta^{\al/2}.
\eeq
\end{lemma}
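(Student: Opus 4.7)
The plan is to exploit a stronger translation invariance that is available specifically in the soft-photon case. For $b=\theta(k/\delta)$, Proposition \ref{prop: bounds on correlation functions} item 1 asserts that both pair functions $K_w(u,v)$ and $K_{w|b}(u,v)$ depend only on the relative time $v-u$, with no reference to the absolute times nor to the final time $n/\la^2$. Combined with the rank-one projector identity $R^m=R$ for every $m\geq 1$ (Lemma \ref{lem: spectral gap}), this yields the following key property: whenever $A_\realfinal$ sits in a \emph{bulk position}, meaning $\max(A_\realfinal\setminus\{n+1\})\leq n-1$ so that at least one $R$-slot separates its rightmost ``bulk'' element from $n+1$, the weight $\bar v_n(A_\realfinal)$ depends only on the translation-equivalence class of $A_\realfinal$ (its \emph{shape} $\sigma$), not on its absolute position within $I_{1,n}$.

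For each shape $\sigma$, let $w(\sigma)$ denote the common bulk value of $\bar v_n(A_\realfinal)$, and define
\beq
a:=\sum_\sigma w(\sigma).
\eeq
Summability of this series, together with $|a|\leq\breve C\delta^{\alpha/2}$, follows immediately from item 3 of Lemma \ref{lem: bounds barred weights} applied with $\tau=\min A_\realfinal$ held fixed. Partitioning $\sum_{A_\realfinal}\bar v(A_\realfinal)$ by shape and counting bulk positions (a shape $\sigma$ of diameter $s(\sigma)$ admits $(n-1-s(\sigma))_+$ bulk positions for $\min A_\realfinal$), the bulk part of the sum evaluates to $(n-1)a$ plus two correction terms, namely $-\sum_\sigma s(\sigma)w(\sigma)$ and $-(n-1)\sum_{\sigma:s(\sigma)>n-2}w(\sigma)$, both of which are bounded by $\breve C\delta^{\alpha/2}$ via the $\dist^{1+\alpha/2}$-weighted bound of item 3. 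The non-bulk contributions, corresponding to $A_\realfinal$ with $n\in A_\realfinal$, are likewise bounded by $\breve C\delta^{\alpha/2}$ by applying item 3 of Lemma \ref{lem: bounds barred weights} with $\tau=n$. Since $|a|\leq\breve C\delta^{\alpha/2}$, replacing $(n-1)a$ by $na$ is free, and the lemma follows.

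The main obstacle is verifying the strengthened translation invariance for the modified weights $\bar v$ themselves, not merely for their constituent ingredients $v$ and $v^T$. Tracing through the defining formula \eqref{def: bar v}, each ingredient behaves as required: the boundary weights $v(A_0)$ for bulk $A_0$ inherit position-independence from the combination of translation-invariant $K_w$, $K_{w|b}$ with the $R^m=R$ collapse in the contraction $\caT$; the truncated weights $v^T(\caA)$ are translation invariant by \eqref{eq: v trunc invariant}; and the adjacency relation $\sim$ is manifestly translation invariant. Careful bookkeeping of the shape-dependent range of bulk positions and of the boundary corrections then delivers the claimed bound.
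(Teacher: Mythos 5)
Your proposal is correct and follows essentially the same route as the paper: exploit the translation invariance of the barred weights in the soft-photon case (a consequence of $h(u,v\,|\,b)$ depending only on $v-u$ when $b=\theta(k/\delta)$, together with $R^m=R$ and \eqref{eq: v trunc invariant}) to define $a$ as a sum over shapes, then bound the finite-$n$ corrections using the $\dist(\cdot)^{1+\al/2}$-weighted summability from item 3 of Lemma~\ref{lem: bounds barred weights}. The paper's version is a bit cleaner in that it records the invariance as $\bar v_n(A\cup\{n+1\})=\bar v_{n+\tau}(A\cup\{n+\tau+1\})$ with $A$ held fixed, which makes $\bar v(A\,|\,b)$ well defined for every finite $A\subset\bbN$ and eliminates your separate treatment of the non-bulk ($n\in A_\realfinal$) contributions (and the minor off-by-one in your count of bulk positions).
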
 
\begin{proof}
To deal with the $n$ -(in)dependence of the weights, we again introduce new notation:
\beq
\bar v(A\str b_k) :=  \bar v_{n}(A \cup \{n+1\}), \qquad   \text{where $n\geq \max A$},
\eeq
and, by  \eqref{eq: invariance of barred v mom} we  have $\bar v(A\str b_k)=\bar v(A+\tau \str b_k) $ provided both $A,A+\tau$ are finite subsets of $\bbN_0$. 
Then, let us define 
\beq
a := \sum_{A \subset \bbN_0:\,  \str A\str <\infty, \min A=1}   \bar v(A\str b_k)
\eeq
where the sum  is absolutely convergent by item
 3) of Lemma \ref{lem: bounds barred weights}, and we have
\beq
na- \sum_{A \subset I_{1,n}}   \bar v(A\str b_k) =  \sum_{A \subset \bbN, \min A=1}    \min (\max A-1,n)\bar v(A\str b_k).
\eeq
The left hand side is the expression between $\str \cdot \str$ in \eqref{eq: finite size k}, and the right hand side  can be bounded by  $\breve C \delta^{\al/2}$ by using again item
 3) of Lemma \ref{lem: bounds barred weights}. 
\end{proof}

By the boson number bound in \cite{deroeckkupiainenphotonbound}, we know that $\sup_t \langle \Psi_t, N \Psi_t \rangle  \leq \breve C$, and therefore $ \sup_n Z_n(\d \Ga(b_k),\rho_0) \leq \breve C$, see the remark following Theorem \ref{thm: soft photon bound}. However,  Lemma \ref{lem: extensive soft photon} and the bounds on the other terms (second and third) of \eqref{eq: zn three terms mom} imply that  $Z_n(\d \Ga(b_k),\rho_0)-an $ is uniformly bounded in $n$. Combining  these two statements, we conclude $a=0$, and therefore, we have shown 
\beq
\str Z_n(\d \Ga(b_k),\rho_0) \str \leq \breve C \delta^{\al/2}.
\eeq
We have hence obtained Theorem \ref{thm: soft photon bound} for $t$ restricted to particular vectors $\Psi_0= \psi_{\sys}\otimes \caW(\psi_\realinitial)\Om$ and times of the form $t= n/\la^2$.  By the same trick as applied at the end of the proof of Theorem \ref{thm: propagation estimate} in Section
\ref{sec: proof of thm propagation} (involving the change of mesoscopic scale $\str\la\str^{-2} \to \ell \str\la\str^{-2}$), we get the statement for all times $t$. By the Cauchy-Schwarz inequality, we get  the statement for any $\Psi \in \caD_\al$.   This proves the full  Theorem \ref{thm: soft photon bound}.

\appendix
\renewcommand{\theequation}{\Alph{section}\arabic{equation}}
\setcounter{equation}{0}

 \section{One-particle estimates}  
 \label{app: propagation estimates}

\renewcommand{\theequation}{A-\arabic{equation}}
  \setcounter{equation}{0}  

In this appendix we prove some estimates concerning the dynamics of a single free boson.  Very similar estimates were also established by different methods in \cite{gerardscatteringmasslessnelson} (the approach here is less elegant, but more self-contained)

 Since this section stands apart from the rest of the paper, we we do not rely on previous definitions and conventions, unless explicitly mentioned.  In particular, we do not adhere to our earlier convention to distinguish constants $C$ and $\breve C$. 
 First, we state
 \begin{lemma} \label{lem: decay}
 For $f \in L^2(\bbR^d)$, assume that  $\hat f  \in C^1(\bbR^d \setminus \{0\})$  such  that  for some  $0<\gamma<1$,
\beq  \label{eq: ga decay}
 |k|^{1-\gamma} \partial \hat f  \in L^1(\bbR^{d}; \bbC^d), \qquad   |k|^{-\gamma}\hat f \in L^1(\bbR^{d}).
  \eeq 
Then,  
$$
\left | f(x)\right |\leq   C(\gamma)   \str x\str^{-\gamma}
  \left(  \norm   \str k\str^{-\gamma}\hat f \norm_1 + \norm    \str k\str^{1-\gamma} \partial \hat f \norm_1 \right).  $$
\end{lemma}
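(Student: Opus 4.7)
The plan is a standard frequency-localized decomposition at scale $1/|x|$, combined with a single integration by parts on the high-frequency piece. By Fourier inversion,
\[
f(x) = (2\pi)^{-d/2} \int_{\bbR^d} \hat f(k) \, \e^{\i k\cdot x} \, \d k,
\]
and I would like to split the integrand at $|k| \simeq 1/|x|$. To avoid boundary terms that arise from a sharp cutoff, I would fix a smooth $\chi \in C^\infty_0(\bbR^d)$ with $\chi=1$ near $0$ and $\supp \chi \subset \{|k|\leq 2\}$, and set $\chi_x(k):=\chi(|x| k)$, which equals $1$ on $\{|k|\leq 1/|x|\}$ and vanishes on $\{|k|\geq 2/|x|\}$. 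Writing $f = f_1+f_2$ with the obvious splitting, $f_1$ corresponds to $\hat f \chi_x$ and $f_2$ to $\hat f(1-\chi_x)$.

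For the low-frequency piece I would bound crudely by the $L^1$ norm, inserting the weight $|k|^\gamma\cdot|k|^{-\gamma}$ and using $|k|\leq 2/|x|$ on the support:
\[
|f_1(x)| \;\leq\; C \int_{|k|\leq 2/|x|} |k|^\gamma \cdot |k|^{-\gamma}|\hat f(k)|\, \d k \;\leq\; C\,|x|^{-\gamma}\,\|\,|k|^{-\gamma} \hat f\|_1 .
\]
For the high-frequency piece I would integrate by parts once using the identity $\e^{\i k\cdot x} = (\i|x|^2)^{-1}\,(x\cdot \nabla_k) \e^{\i k\cdot x}$, which, since $(1-\chi_x)\hat f$ vanishes near $k=0$ and decays at infinity (as $\hat f \in L^2$ with a locally integrable derivative), gives
\[
|f_2(x)| \;\leq\; \frac{C}{|x|} \int_{\bbR^d} \bigl| \nabla_k\bigl[(1-\chi_x(k))\hat f(k)\bigr] \bigr| \, \d k .
\]
The derivative produces two terms: $(1-\chi_x)\nabla \hat f$, which I estimate on $\{|k|\geq 1/|x|\}$ by inserting $|k|^{1-\gamma}\cdot |k|^{-(1-\gamma)}$ and bounding $|k|^{-(1-\gamma)} \leq |x|^{1-\gamma}$, yielding $|x|^{1-\gamma}\|\,|k|^{1-\gamma}\partial\hat f\|_1$; and $(\nabla \chi_x)\hat f$, where $|\nabla\chi_x(k)|\leq C|x|$ is supported in the annulus $1/|x|\leq|k|\leq 2/|x|$, so inserting $|k|^\gamma \cdot |k|^{-\gamma}$ again bounds it by $C\,|x|\cdot|x|^{-\gamma}\|\,|k|^{-\gamma}\hat f\|_1$. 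Multiplying by the prefactor $1/|x|$ produces exactly the claimed $|x|^{-\gamma}$ decay.

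Summing the two contributions gives the stated estimate with some constant $C(\gamma)$ absorbing the dimension and the choice of $\chi$. The only mildly delicate point is justifying the integration by parts, which is why I prefer the smooth cutoff $\chi_x$ over a hard one: then $(1-\chi_x)\hat f$ is supported away from the singularity of $\hat f$ at $0$, all boundary terms vanish, and the hypotheses $|k|^{-\gamma}\hat f,\,|k|^{1-\gamma}\partial\hat f\in L^1$ guarantee absolute convergence of every integral that appears.
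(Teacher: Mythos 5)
Your proof is correct, and it reaches the same estimate, but it is a genuinely different argument from the one in the paper, so a comparison is in order. The paper avoids any cutoff altogether by using a translation trick: it writes
\[
f(x)=\frac{1}{\e^{-\i}-1}\int\bigl(\hat f(k+\hat x/|x|)-\hat f(k)\bigr)\e^{\i k\cdot x}\,\d k,
\]
which is legitimate because shifting $k$ by $\hat x/|x|$ multiplies $\e^{\i k\cdot x}$ by $\e^{\i}$. It then splits at $|k|\sim 1/|x|$: on the low-frequency ball it bounds the difference by the sum of absolute values and inserts $1\leq C|x|^{-\gamma}|k|^{-\gamma}$, exactly as you do on your $f_1$; on the high-frequency region it writes the difference as $|x|^{-1}\int_0^1\hat x\cdot\partial\hat f(k+s\hat x/|x|)\,\d s$ and inserts $|x|^{-1+\gamma}\leq|k|^{1-\gamma}$, hitting the same two $L^1$ norms. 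Your route replaces the finite difference by a smooth frequency cutoff $\chi_x$ at the same scale and one integration by parts on the high piece. The mechanism extracting the $|x|^{-\gamma}$ decay is a single derivative in both cases, but you pay for the cutoff with a commutator term $(\nabla\chi_x)\hat f$, which then has to be recognised as another low-frequency contribution; the paper's finite-difference version sidesteps that bookkeeping, at the cost of evaluating $\hat f$ at shifted points. Both arguments are standard and of comparable length; yours is perhaps the more widely used template, the paper's is marginally tighter in that no cutoff derivative ever appears. One small caveat on your write-up: the justification for discarding boundary terms at infinity in the integration by parts should rest on $(1-\chi_x)\hat f$ and its gradient being $L^1$ (so the surface integral vanishes along a sequence of radii), not on $\hat f\in L^2$, which by itself gives no pointwise decay; this is the same implicit technicality the paper also glosses over, so it does not affect the validity of your argument.
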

\begin{proof}
We write
$$
f(x)=\frac{1}{\e^{-\i}-1} \int \d k\, ({\hat f(k +\hat x/|x|)-\hat f(k )}) 
\e^{\i kx }.
$$
Divide the integral to $|k|\leq 2 |x|^{-1}$ and  $|k|> 2 |x|^{-1}$. For the first one 
insert $1\leq 2^\gamma |x|^{-\gamma}|k|^{-\gamma}$
and the integral  is bounded by $2^{1+\gamma} |x|^{-\gamma} \norm   \str k\str^{-\gamma}\hat f \norm_1$. For the second 
integral, we can assume that $\hat f $ is $C^1$, hence we
insert 
$$\hat f(k +\hat x/|x|)-\hat f(k )=|x|^{-1}\int_0^1\d s\, \hat x\cdot \partial\hat f(k+s\hat x/|x|).
$$
to bound it by
\baq
\int_0^1 \d s\int_{|k|>\frac{2}{ |x|}} \d k\,  |x|^{-1}
|\partial \hat f(k +s\hat x/|x|)|&\leq&
 |x|^{-\gamma}\int_{|k|> \frac{1}{ |x|}} \d k\,  |x|^{-1+\gamma}
 |\partial \hat f(k )|\nonumber\\
&\leq&C |x|^{-\gamma} \norm    \str k\str^{1-\gamma} \partial \hat f \norm_1.
\eaq
since   $|x|^{-1+\gamma}\leq |k|^{1-\gamma}$  in the last integral.
\end{proof}

\subsection{Minimal velocity estimates}
Recall the  function $\theta$ introduced above Theorem \ref{thm: soft photon bound}. It is a spherically symmetric $C^{\infty}$ function $\bbR^d \to [0,1]$ with support contained in a ball with radius $r_\theta <1$ and we write $\theta_s(x):= \theta(x/s)$.  Recall also the dense subspace $\frh_\al \subset L^2(\bbR^d)$ and write $\psi_s=\e^{-\i \om s} \psi$. 
 We prove 
 \begin{lemma}\label{lem: bounds g}  Let $d\geq 3$ and  $\psi, \psi' \in \frh_\al$. There is a $\ga>\al$ such that, for any $s>0$ and 
 $1> m_\theta > r_\theta$,
     \beq 
  \str (\psi'_{s_2},\theta_s ({x})\psi_{s_1}) \str   \leq C \langle s_2-s_1 \rangle^{-2-\ga},     \label{eq: standard bound h app} \eeq 
     \beq   \int_{ s m_{\theta}}^{\infty}  \d s_2 \int_{0}^{s_2}   \d s_1   \,    \langle s_2-m_{\theta} s \rangle^{\al}  \str (\psi'_{s_2},\theta_s ({x})\psi_{s_1}) \str    \leq C
   \label{eq: nonstandard bound h app} \eeq 
   where $C$  depends on $\ga$,  $\theta$ and $m_\theta$ (in particular it diverges when $m_\theta \to r_\theta $), but not on $s$. 
   \end{lemma}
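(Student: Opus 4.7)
My plan is to analyze the matrix element
\[
(\psi'_{t_2},\theta_t(x)\psi_{t_1}) \;=\; \int\!\!\int dk\,dq\; \overline{\hat\psi'(k)}\,\hat\theta(q)\,\hat\psi(k-q/t)\,e^{\iu\Phi(k,q)},\qquad \Phi(k,q) := t_2\str k\str - t_1\str k-q/t\str,
\]
obtained by writing $\theta_t$ in Fourier space and rescaling $q\to q/t$. The Schwartz decay of $\hat\theta$ effectively localizes $\str q\str = O(1)$, so the phase splits as $(t_2-t_1)\str k\str$ plus a subdominant $O(t_1\str q\str/(t\str k\str))$ correction. Radial integration by parts in $k$ then gains a factor $\str t_2-t_1\str^{-1}$ per IBP, at the cost of one $k$-derivative of $\hat\psi$. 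The infrared bound $\str\partial^m\hat\psi(k)\str\lesssim\str k\str^{(\beta-d+2)/2-m}$ (with $\beta>\al$, $\str m\str\leq 3$) combined with the polar Jacobian $\str k\str^{d-1}$ caps admissible IBPs at $m<(\beta+d+2)/2$, which for $d\geq 3$ allows two IBPs and gives $\str(\psi'_{t_2},\theta_t\psi_{t_1})\str \leq C\str t_2-t_1\str^{-2}$.

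The main obstacle is that this marginal decay is not integrable against the weight $\langle t_2-t_1\rangle^{1+\al}$ in (i), and a third IBP is blocked by the IR behavior. To squeeze out the missing $\al$ I will apply a van der Corput--type analysis to the radial oscillatory integral
\[
(\psi',e^{\iu s\omega}\psi) \;=\; \int_0^\infty e^{\iu s\rho}\,F(\rho)\,d\rho,\qquad F(\rho) := \rho^{d-1}\int_{S^{d-1}}\overline{\hat\psi'(\rho\sigma)}\hat\psi(\rho\sigma)\,d\sigma,
\]
where $F\in C^3((0,\infty))$ with $F(\rho)\sim\rho^{\beta+1}$ as $\rho\to 0$. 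Splitting $(0,\infty)=(0,\delta)\cup(\delta,\infty)$, IBPing three times on the second piece (cost $\delta^{\beta-2}\str s\str^{-3}$) and bounding the first piece crudely by $\delta^{\beta+2}$, then optimizing $\delta\sim\str s\str^{-1}$, yields $\str(\psi',e^{\iu s\omega}\psi)\str\leq C\str s\str^{-(2+\beta)}$. Since $\beta>\al$, this is integrable against $\langle s\rangle^{1+\al}$. The full matrix element including $\theta_t$ inherits this decay because the cutoff introduces only subdominant perturbations of the phase through the $q$-integration.

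With $\str(\psi'_{t_2},\theta_t\psi_{t_1})\str\leq C\langle t_2-t_1\rangle^{-(2+\beta)}$ established, (i) follows immediately by integration. For (ii), the constraint $t_2\geq m_\theta t$ permits an additional IBP in $k$ exploiting the phase term $t_2\str k\str$, combined with the position-space observation $\str x-t_2\hat k\str\geq t_2-m_\theta t$ on the support of $\theta_t$. Together with the temporal van der Corput bound, this yields $\str(\psi'_{t_2},\theta_t\psi_{t_1})\str\leq C\,\min(1,(t_2-m_\theta t)^{-(1+\al+\eta)})\,\langle t_2-t_1\rangle^{-(1+\eta)}$ for some $\eta>0$. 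The inner $t_1$-integral is then bounded uniformly by $C\min(1,(t_2-m_\theta t)^{-(1+\al+\eta)})$, and the outer integration against $\langle t_2-m_\theta t\rangle^\al$ converges uniformly in $t$.
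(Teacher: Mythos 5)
Your proposal follows the same broad strategy as the paper (Fourier representation, radial oscillatory integral, van der Corput / integration by parts, geometry of $\supp\theta_t$), but it glosses over the one place where the real work happens, and as a result it has a genuine gap.

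\textbf{The cutoff is not subdominant.} For the core estimate \eqref{eq: standard bound h app}, you reduce to the free radial oscillatory integral $\int_0^\infty e^{\iu s\rho}F(\rho)\,d\rho$ with $F(\rho) = \rho^{d-1}\int_{S^{d-1}}\overline{\hat\psi'}\hat\psi$, get $\str s\str^{-(2+\beta)}$ by a $\delta\sim\str s\str^{-1}$ split, and then assert that the full matrix element ``inherits this decay because the cutoff introduces only subdominant perturbations of the phase.'' This is precisely the step the paper does not take for granted, and for good reason. Writing $k_1=k$, $k_2=k-q/t$ with $\str q\str=O(1)$, the phase correction $-(t_2/t)\,\hat k\cdot q$ is $O(1)$ — approximately $\rho$-independent, so it does not spoil the radial IBP, but it does feed into the amplitude through $\hat\psi'(k-q/t)$ — and the higher-order correction $O(t_2\str q\str^2/(t^2\rho))$ has an IR singularity in $\rho$. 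More importantly, the width of $\hat\theta_t$ scales as $1/t$ in $k$-space, and the bound must be uniform in $t$; the paper handles this by passing to coordinates $\omega_\pm=\omega_1\pm\omega_2$, proving Lemma~\ref{lem: prop k} with explicit factors $\langle t^2\omega_-^2\rangle^{-N/2}$, $t(1/t+\omega_\pm)^{1-d}$, and $(\sin\vartheta)^{d-2}$, and doing the IBP in $\omega_+$ (where the cutoff derivatives only cost $\omega_+^{-n}$, not powers of $t$). The localization $\omega_2=\omega_1+O(1/t)$ then reduces the $\omega_2$-integral and the final $\omega_1$-integral converges uniformly in $t$ \emph{precisely} because $d\geq3$ (via the factor $(\omega_1/(\omega_1+1/t))^{(d-2-\beta)/2}$). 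Your sketch invokes $d\geq3$ only to count admissible IBPs against the IR singularity, which is not the role it actually plays; the nontrivial uniform-in-$t$ control of the cutoff is absent.

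\textbf{Part (ii) is not derived.} For \eqref{eq: nonstandard bound h app} your geometric observation $\str x-t_2\hat k\str\geq t_2-m_\theta t$ is the right one, and it matches the paper's ``$\str\hat k\cdot x\str\leq m_\theta t$ on $\supp\theta_t$.'' But the claimed intermediate bound $C\min\bigl(1,(t_2-m_\theta t)^{-(1+\al+\eta)}\bigr)\langle t_2-t_1\rangle^{-(1+\eta)}$ is asserted, not obtained: ``an additional IBP in $k$ exploiting the phase term $t_2\str k\str$'' is not a derivation. The paper's route is different and concrete: it first obtains a pointwise bound $\str\psi_t(x)\str\lesssim\int_{S^{d-1}}d\hat k\,\langle t-\hat k\cdot x\rangle^{-d+(1-\beta)/2}$ from the fractional van der Corput Lemma~\ref{lem: decay}, which, on $\supp\theta_t$, gives $\str(\psi_{t_1},\theta_t\psi'_{t_2})\str\leq Ct^d\langle t_1-m_\theta t\rangle^{-d+(1-\beta)/2}\langle t_2-m_\theta t\rangle^{-d+(1-\beta)/2}$. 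Note the $t^d$ factor: this bound alone does \emph{not} suffice and must be interpolated against \eqref{eq: standard bound h app} to remove the $t$-dependence. Your proposed intermediate bound has no $t^d$ factor, which suggests the interpolation has been silently performed — that is where the argument needs to be supplied.

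In short: the oscillatory-integral skeleton and the geometry are right, but the two hard steps — uniform-in-$t$ control of $\theta_t$ in (i), and the interpolation that eats the $t^d$ factor in (ii) — are each replaced by a sentence of wishful thinking. A complete proof would look like the paper's Lemmas~\ref{lem: decay} and \ref{lem: prop k} or their equivalents; the split-and-optimize van der Corput you use is a perfectly acceptable substitute for Lemma~\ref{lem: decay} in the radial variable, but it does not by itself handle the cutoff.
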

Upon renaming the time variables, 
Lemma \ref{lem: bounds g} yields the claims of Proposition \ref{prop: bounds on correlation functions} with $b=b_x$: 
Item
 4) follows from the bound \eqref{eq: standard bound h app}  by choosing $\psi,\psi'$ either $\phi$ or $\psi_\realinitial$, $s=t_c$ and noting that, for example,
 $\langle \e^{\i v \om}\phi, b_x(t) \e^{\i u \om}\phi\rangle$ is the complex conjugate of  $\langle \e^{-\i (t-u) \om}\phi, \theta(x/t_c)\, \e^{-\i (t-v) \om}\phi\rangle $. 
Item
 5) follows in the same spirit from \eqref{eq: nonstandard bound h app}. 
 Items
 2,3) in Proposition \ref{prop: bounds on correlation functions} are addressed in Section \ref{sec: correlation functions mom cutoff}.

\subsubsection{Proof of the bound \eqref{eq: standard bound h app} in Lemma \ref{lem: bounds g} }
We write
\begin{align}
(\psi'_{s_2},\theta_{s}\psi_{s_1}) &=\int \e^{\i(|k_2|s_2-|k_1|s_1)}\hat\theta_{s}(k_1-k_2)
\overline{\hat\psi'(k_2)} \hat\psi(k_1)\d k_1\d k_2 \\
 &= \int_0^\infty \d\omega_1 \int_0^\infty \d\omega_2 
\e^{-\i(\omega_+s_-+\omega_-s_+)
}K(\omega_1,\omega_2)
\end{align}
where  $\omega_\pm:=\omega_1\pm\omega_2$,  $s_\pm=(s_1\pm s_2)/2$ and
$$
K(\omega_1,\omega_2)=(\omega_1\omega_2)^{d-1}\int_{S^{d-1}}\d\hat k_1\int_{S^{d-1}}\d\hat k_2 \, 
\overline{\hat\psi'(\omega_2\hat k_2)}\hat\psi(\omega_1\hat k_1)
s^{d}\zeta(s^2(\omega_1\hat k_1-\omega_2\hat k_2)^2)
$$
where by rotation invariance of $\theta$ we have written it as
  $\hat\theta(k)=\zeta(k^2)$ where $ \zeta$ satisfies
\beq
|\partial^n\zeta(x)|\leq C({n,N}) \langle x\rangle^{-N}  \label{appeq: bound on zeta}
\eeq
for all $n, N >0$.

\bigskip

\begin{lemma} \label{lem: prop k} There is a $\be>\al$ such that for $n=0,1,´2,3$, 
\beq  |\partial^{n}_{\omega_+}
K(\omega_1,\omega_2)|\leq C
(\omega_1^{-n}+\omega_2^{-n})
{(\omega_1\omega_2)^{d+\beta\over 2} \over
\langle\omega_1\rangle^{2}  \langle\omega_2\rangle^{2} }
\left({s(1/s+\omega_+)^{1-d}\over
\langle{_1\over^2}s^2\omega_-^2\rangle^{N/2}}
+
{s(1/s+|\omega_-|)^{1-d}\over
\langle {_1\over^2}s^2\omega_+^2\rangle^{N/2}}\right). \label{appeq: bounds on k}
\eeq
\end{lemma}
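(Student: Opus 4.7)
The plan is to Leibniz-expand $\partial^n_{\omega_+}$ at fixed $\omega_-$, so that $\partial_{\omega_+}\omega_1=\partial_{\omega_+}\omega_2=1/2$. Each derivative falls on one of three types of factors: (a) the prefactor $(\omega_1\omega_2)^{d-1}$, (b) the wavefunctions $\hat\psi(\omega_1\hat k_1),\hat\psi'(\omega_2\hat k_2)$, or (c) the cutoff $\zeta(t^2|\omega_1\hat k_1-\omega_2\hat k_2|^2)$. Contributions of types (a) and (b) combine with the infrared bound $|\partial^b\hat\psi(\omega\hat k)|\leq C\omega^{(\beta-d+2)/2-b}\langle\omega\rangle^{-M}$ (valid for $b\leq 3$ and any $M$, after allowing polynomial decay instead of compact support in $\frh_\al$) to produce an overall factor
\[
\bigl(\omega_1^{-a_1}+\omega_2^{-a_2}\bigr)\,\frac{(\omega_1\omega_2)^{(d+\beta)/2}}{\langle\omega_1\rangle^2\langle\omega_2\rangle^2}
\]
with $a_1+a_2\leq n$. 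For contributions of type (c), the key identity
\[
\partial_{\omega_+}|\omega_1\hat k_1-\omega_2\hat k_2|^2=\omega_+(1-\cos\theta)=\frac{\omega_+}{2\omega_1\omega_2}\bigl(|\omega_1\hat k_1-\omega_2\hat k_2|^2-\omega_-^2\bigr),
\]
where $\theta$ is the angle between $\hat k_1$ and $\hat k_2$, shows that each derivative on $\zeta$ brings a factor $(t^2\omega_+/2)(1-\cos\theta)\zeta^{(j)}(t^2|q|^2)$. Using $|q|^2\geq \omega_-^2$ and absorbing $t^2|q|^2$ by one power of the rapid decay of $\zeta^{(j)}$ reduces each such factor to $\omega_+/(2\omega_1\omega_2)=(\omega_1^{-1}+\omega_2^{-1})/2$, which again fits into the $(\omega_1^{-n}+\omega_2^{-n})$ prefactor.

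What remains is to estimate the angular integral
\[
I(\omega_1,\omega_2)=\int_{S^{d-1}\times S^{d-1}}t^d\bigl\langle t^2(\omega_-^2+2\omega_1\omega_2(1-\cos\theta))\bigr\rangle^{-N}\d\hat k_1\d\hat k_2 .
\]
Using rotation invariance to fix $\hat k_1$ and substituting $u=1-\cos\theta$, this reduces up to constants to
\[
\tilde I=t^d\int_0^2(u(2-u))^{(d-3)/2}\bigl\langle t^2\omega_-^2+2t^2\omega_1\omega_2 u\bigr\rangle^{-N}\d u .
\]
On $u\in[0,1]$, bounding $(u(2-u))^{(d-3)/2}\leq 2^{(d-3)/2}u^{(d-3)/2}$ and substituting $v=2t^2\omega_1\omega_2 u$, one extracts $\langle t^2\omega_-^2\rangle^{-N/2}$ from half of the decay and computes the remaining integral to obtain $C t(1+t\sqrt{\omega_1\omega_2})^{-(d-1)}\langle t^2\omega_-^2\rangle^{-N/2}$. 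Multiplied by $(\omega_1\omega_2)^{(d+\beta)/2}$ and combined with $\omega_+\geq 2\sqrt{\omega_1\omega_2}$, this yields a contribution of the form of the first term in \eqref{appeq: bounds on k}. On $u\in[1,2]$, the argument of $\langle\cdot\rangle^{-N}$ exceeds $t^2\omega_+^2/2$, so we factor out $\langle t^2\omega_+^2\rangle^{-N/2}$ and bound the remaining elementary integral, producing the second term.

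The main obstacle is making the geometric factor $(1/t+\omega_+)^{1-d}$ sharp when $\omega_1$ and $\omega_2$ are very different: the angular cap around $\hat k_1\approx\hat k_2$ has measure $\sim(1+t\sqrt{\omega_1\omega_2})^{-(d-1)}$, which agrees with $(1/t+\omega_+)^{-(d-1)}$ only in the balanced regime $\omega_1\sim\omega_2$. However, in the unbalanced case $\omega_1\gg\omega_2$ one has $|\omega_-|\sim\omega_+$, so the rapid decay factor $\langle t^2\omega_-^2\rangle^{-N/2}$ forces $|\omega_-|\lesssim 1/t$ and hence $\omega_1\sim\omega_2$ whenever the first term is not already negligible. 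The presence of the second term in \eqref{appeq: bounds on k}, carrying decay in $\omega_+$ instead, is essential to cover the deep-infrared regime $\omega_++|\omega_-|\lesssim 1/t$, in which $\tilde I$ saturates at $\sim t^d$ and must be absorbed by $t(1/t+|\omega_-|)^{1-d}\sim t^d$.
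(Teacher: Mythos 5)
Your argument is correct and follows the same overall strategy as the paper's proof (parametrise by the angle $\vartheta$ between $\hat k_1,\hat k_2$, split the angular integral into forward $\cos\vartheta\geq 0$ and backward $\cos\vartheta\leq 0$ caps, and extract $(1/t+\omega_+)^{1-d}$ resp.\ $(1/t+|\omega_-|)^{1-d}$ from the measure of the cap in which the rapid decay of $\zeta$ is not active). There are two organizational differences worth noting. First, you do a direct Leibniz/Fa\`a di Bruno expansion of $\partial_{\omega_+}^n$ and convert $\partial_{\omega_+}|q|^2 = \omega_+(1-\cos\vartheta) = \tfrac{\omega_+}{2\omega_1\omega_2}(|q|^2-\omega_-^2)$ into a factor $\lesssim (\omega_1^{-1}+\omega_2^{-1})$ per derivative after absorbing $t^2|q|^2$ into $\zeta^{(j)}$; the paper instead bounds $\partial_{\omega_+}^n G$ and $\partial_{\omega_+}^n\zeta(t^2Z)$ separately, obtaining the cleaner $\omega_+^{-n}$ for the latter from $Z\geq \sin^2(\vartheta/2)\,\omega_+^2$. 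Both routes land inside $C(\omega_1^{-n}+\omega_2^{-n})$. Second, your change of variables $v=2t^2\omega_1\omega_2\,u$ gives $(1/t+\sqrt{\omega_1\omega_2})^{1-d}$ instead of $(1/t+\omega_+)^{1-d}$, which you then repair by borrowing from $\langle t^2\omega_-^2\rangle^{-N/2}$; the paper avoids this detour by using the decomposition $Z=\cos^2(\vartheta/2)\,\omega_-^2+\sin^2(\vartheta/2)\,\omega_+^2\geq \tfrac12\omega_-^2+\tfrac{u}{2}\omega_+^2$ on $u\in[0,1]$, i.e.\ by keeping $\omega_\pm$ rather than $\omega_1\omega_2$ in the coefficient of the angular variable. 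Your repair step is correct: from $\omega_+\leq |\omega_-|+2\sqrt{\omega_1\omega_2}$ one gets $\tfrac{1/t+\omega_+}{1/t+\sqrt{\omega_1\omega_2}}\leq 2+t|\omega_-|$, and the extra $(1+t|\omega_-|)^{d-1}$ is absorbed by enlarging $N$. Finally, your treatment of $u\in[1,2]$ is stated a little tersely: "bound the remaining elementary integral" in fact requires the same substitution near $u=2$ using $|q|^2\geq \tfrac{2-u}{2}\,\omega_-^2$ (not merely the uniform bound $|q|^2\geq \omega_+^2/2$) to produce the factor $(1/t+|\omega_-|)^{1-d}$ — though one can also get away with the uniform bound if one is willing to give up a further piece of the $\langle t^2\omega_+^2\rangle$ decay. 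None of this is a gap; it is a matter of which form of $Z$ one substitutes in, and the paper's choice is slightly more streamlined.
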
 
\begin{proof}

Let $\hat k_1\cdot\hat k_2=:\cos\vartheta$ with $\vartheta \in [0, \pi]$. Then,
$$\omega_1^2+\omega_2^2-2
\omega_1\omega_2\cos\vartheta=\cos^2({_\vartheta\over^2}) \omega_-^2
+\sin^2({_\vartheta\over^2}) \omega_+^2 =: Z(\om_+,\om_-,\vartheta)
$$
and hence
$$
K(\omega_1,\omega_2)= s^{d} 
\int_{0}^\pi \d\vartheta\ (\sin\vartheta)^{d-2}
\zeta(s^2  Z(\om_+,\om_-,\vartheta)  )G(\omega_1,\omega_2,\vartheta)
$$
with
$$
G(\omega_1,\omega_2,\vartheta)=(\omega_1\omega_2)^{d-1}
\int_{S^{d-1}}\d\hat k_1\int_{S^{d-2}}\d\hat p  \, 
\overline{\hat\psi'(\omega_2\hat k_2)}\hat\psi(\omega_1\hat k_1)
$$
where $\hat k_2=\sin\vartheta\hat p+\cos\vartheta\hat k_1$ (and $\hat p\perp\hat k_1$).  Since $\psi,\psi' \in \frh_\al$, 
\beq
|\partial^{n_1}_{\omega_1}\partial^{n_2}_{\omega_2}
G(\omega_1,\omega_2,\vartheta)|\leq C
\prod_{i=1}^2\omega_i^{{{d+\beta}\over{2}}-n_i}
\langle \omega_i \rangle^{-2}, \label{appeq: bound on g one}
\eeq
uniformly in $\vartheta$, for some $\be >\al$.  This implies
$$
|\partial^{n}_{\omega_+}
G(\omega_1,\omega_2,\vartheta)|\leq C
(\omega_1^{-n}+\omega_2^{-n})
\prod_{i=1}^2\omega_i^{{{d+\beta}\over{2}}}
\langle \omega_i \rangle^{-2}.  \label{appeq: bound on g two}
$$
 From \eqref{appeq: bound on zeta} we deduce (we abbreviate $Z=Z(\om_+,\om_-,\vartheta) $)
$$
\str \partial_{\omega_+}^n \zeta(s^2 Z  ) \str \leq C\omega_+^{-n} 
\langle s^2Z \rangle^{-N}.\label{appeq: bound on Z derivatives}
$$
Combining the two previous inequalities, we get
\beq \label{appeq: bound on k}
|\partial^{n}_{\omega_+}
K(\omega_1,\omega_2)|\leq C
(\omega_1^{-n}+\omega_2^{-n})
{(\omega_1\omega_2)^{d+\beta\over 2} H(\omega_1,\omega_2)\over
\langle \omega_1 \rangle^{2} \langle \omega_2 \rangle^{2}}
\eeq
with
$$
H(\omega_1,\omega_2)= s^d\int_{0}^\pi \d\vartheta\ (\sin\vartheta)^{d-2}
\langle s^2Z \rangle^{-N}.
$$
For $\vartheta\in¬†[0,{\pi\over 2}]$ we have 
$$
Z
\geq {_1\over^2}(\omega_1-\omega_2)^2+{_1\over^4}\vartheta^2(\omega_1+\omega_2)^2
$$ 
and so
$$
\langle s^2Z \rangle^{-N}
\leq
\langle {_1\over^2}s^2\omega_-^2 \rangle^{-N/2}
\langle {_1\over^4}s^2\vartheta^2
\omega_+^2 \rangle^{-N/2}
$$
and for $\vartheta\in¬†[{\pi\over 2},{\pi}]$
$$
\langle s^2Z \rangle^{-N}
\leq
\langle {_1\over^2}s^2\omega_+^2 \rangle^{-N/2}
\langle {_1\over^4}s^2(\pi-\vartheta)^2
\omega_-^2 \rangle^{-N/2}.
$$
Since
$$
\int_{0}^{\pi/2} \d\vartheta\ (\sin\vartheta)^{d-2} \langle {_1\over^4}s^2\vartheta^2
\omega_+^2 \rangle^{-N/2}\leq C
(1+s\omega_+)^{1-d}
$$
and similarly for the integral over $[\pi/2,\pi]$ we get 
$$
H(\omega_1,\omega_2)
\leq
{s(1/s+\omega_+)^{1-d}\over
\langle {_1\over^2}s^2\omega_-^2 \rangle^{N/2}}
+
{s(1/s+\omega_-)^{1-d}\over
\langle{_1\over^2}s^2\omega_+^2\rangle^{N/2}}
$$
which yields the claim upon substitution in \eqref{appeq: bound on k}. 

\end{proof}

 Lemma \ref{lem: prop k}  implies  that the functions $ \omega^{1-\be}_+ \partial^{2}_{\omega_+}
K(\omega_1,\omega_2)$ and $\omega^{\be}_+  \partial^{3}_{\omega_+}
K(\omega_1,\omega_2)$ are integrable 
and  $\partial_{\omega_+}
K(\omega_1,\omega_2)$ vanishes if $\omega_1=0$ or $\omega_2=0$.
Hence
$$
(s_2-s_1)^2(\psi'_{s_2},\theta_{s}\psi_{s_1}) =-4\int_0^\infty \d\omega_1 \int_0^\infty \d\omega_2 
\e^{-\i(\omega_+s_-+\omega_-s_+)
}\partial_{\omega_+}^2K(\omega_1,\omega_2).
$$
By Lemma \ref{lem: decay} with $d=1$, we get, with $0<\ga<1$,  
$$
|(\psi'_{s_2},\theta_{s}\psi_{s_1}) |\leq C (s_2-s_1)^{-2-\gamma}\int_0^\infty \d\omega_1 \int_0^\infty \d\omega_2  \, \left(
 \str    \om^{-\ga}_+ \partial_{\omega_+}^{2}K(\omega_1,\omega_2) \str+   \str \om^{1-\ga}_+ \partial_{\omega_+}^{3}K(\omega_1,\omega_2) \str \right)
$$
provided that the right hand side is finite, which we prove now.
The contribution of the first term in the second parenthesis in \eqref{appeq: bounds on k} is dominated by
\beq
\int_0^\infty \d\omega_1 \int_0^\infty \d\omega_2 
{\omega_1^{{d+\beta\over 2}-2-\gamma} \omega_2^{d+\beta\over 2} \over
\langle\omega_1\rangle^{2}  \langle\omega_2\rangle^{2}}
{s(1/s+\omega_+)^{1-d}\over
\langle {_1\over^2}s^2\omega_-^2\rangle^{N/2}}   \quad +\quad   (\om_1 \leftrightarrow \om_2)   \label{eq: plug bounds on k into integral}
\eeq
where $ (\om_1 \leftrightarrow \om_2)$ stands for the same term but with $\om_1, \om_2$ interchanged. Since this term is treated in the same way, we drop it. 
Then, the $\omega_2$ integral in \eqref{eq: plug bounds on k into integral} gives the bound
$$
 \int_0^\infty \d\omega_2 
{ \omega_2^{d+\beta\over 2} \over
\langle\omega_2\rangle^{2} }
{s(1/s+\omega_+)^{1-d}\over
\langle {_1\over^2}s^2\omega_-^2\rangle^{N/2}}
\leq
C (\omega_1+1/s)^{{2+\beta-d\over 2}}.
$$
Indeed, for large $N$ the $
\langle {_1\over^2}s^2\omega_-^2\rangle^{-N/2}$ factor fixes $\omega_2=\omega_1+{\cal O}(1/s)$.  Therefore, we have
$$
\eqref{eq: plug bounds on k into integral} \leq C
\int_0^\infty {\d\omega_1} 
{\omega_1^{\beta-\gamma-1} \over
\langle\omega_1\rangle^{2}}
({\omega_1\over \omega_1+1/s})^{{d-2-\beta\over 2}}
\leq C
$$
(uniformly in $s$)  if $\gamma<\beta$ because $d \geq 3$. 

The second term between brackets in \eqref{appeq: bounds on k} is bounded uniformly in $s$ for $\ga<\be$, so the bound \eqref{eq: standard bound h app} is proven.

\subsubsection{Proof of the bound \eqref{eq: nonstandard bound h app} in Lemma \ref{lem: bounds g}}
For $\psi \in \frh_\al$, we write
$$
\psi_s(x)=\int_{S^{d-1}} \d\hat k\int_0^\infty \d \om \,  \om^{d-1}\e^{-\i \om (s+\hat k\cdot x)}\hat\psi(\om\hat k), 
$$
and, by Lemma \ref{lem: decay},  for some $\be>\al$,
$$
\str\psi_{s}(x)\str
\leq   C 
\int_{S^{d-1}} \d \hat k\  \langle s+\hat k\cdot x\rangle^{-d+{1-\beta\over 2}}.
$$
Therefore, for $\psi,\psi' \in \frh_\al$ and $m'_\theta$ such that $r_\theta<m'_\theta <m_\theta$
$$
\str (\psi_{s_1}, \theta_{s}(x)\psi'_{s_2}) \str \leq 
Cs^d \langle s_1-{ m'_\theta s }\rangle^{-d+{1-\beta\over 2}}
  \langle s_2-m'_\theta s\rangle^{-d+{1-\beta\over 2}}
$$
since $\str \hat k\cdot x \str \leq {m'_\theta s}$ on the support of $\theta_{s}$.  Combined with \eqref{eq: standard bound h app}, this yields \eqref{eq: nonstandard bound h app}.

\subsection{Momentum cutoff} \label{sec: correlation functions mom cutoff}

We treat the case where $b= \theta(k/\delta)$, i.e.\ in momentum space. 
\begin{lemma}\label{lem: app momentum cutoff} Let $\psi,\psi' \in \frh_{\al}$, then there is a $\be>\al$  such that for any $\ga,\ga' \geq 0$ with $\ga+\ga' \leq \be$;  
\beq \label{eq: integral mom cutoff}
 \str (\psi_{s_1}, \theta(k/\delta) \psi'_{s_2}) \str   \leq  C \delta^{\ga'}   \langle s_2-s_1 \rangle^{-(2+\ga)}. 
\eeq
\end{lemma}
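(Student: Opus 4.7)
The strategy is to reduce to a one-dimensional oscillatory integral in the radial variable $\om=|k|$, establish a weighted $C^3$ regularity of the resulting density (with derivatives tolerating blowup of size $\om^{\be+1-n}$ at the origin), and combine a dyadic decomposition with integration by parts and interpolation.

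Writing $T := t_2 - t_1$ (by conjugation we may assume $T\geq 0$) and switching to polar coordinates $k=\om\hat k$, one obtains
\[
(\psi_{t_1}, \theta(k/\delta)\psi'_{t_2}) = \int_0^\infty F(\om)\, e^{i\om T}\, d\om,\qquad F(\om):=\om^{d-1}\int_{S^{d-1}} \overline{\hat\psi(\om\hat k)}\hat\psi'(\om\hat k)\theta(\om\hat k/\delta)\,d\hat k,
\]
which is supported in $[0,R_\delta]$ with $R_\delta:=\min(m_\theta\delta, r_0)$ and $r_0$ a radius containing the supports of both $\hat\psi$ and $\hat\psi'$. The first step is to prove $|F^{(n)}(\om)| \leq C\om^{\be+1-n}$ for $n=0,1,2,3$, by applying Leibniz to $F(\om)=\om^{d-1}G(\om)$ with $G(\om):=\int_{S^{d-1}} H_\delta(\om\hat k)\,d\hat k$ and $H_\delta(k):=\overline{\hat\psi(k)}\hat\psi'(k)\theta(k/\delta)$. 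The key (and only nontrivial) observation is that each factor $\delta^{-j}$ produced by $j$ derivatives hitting $\theta(\cdot/\delta)$ is absorbed by the bound $|k|^{j}\leq(m_\theta\delta)^{j}$ on its support; thus the cutoff introduces no $\delta$-blowup in the derivative bounds on $H_\delta$.

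Next, introduce a smooth dyadic partition of unity $1=\sum_{j\geq 0}\chi_j$ on $(0,R_\delta]$, with $\chi_j$ localized at $\om_j:=2^{-j}R_\delta$, and set $F_j:=\chi_j F$. For each $j$, three-fold integration by parts (no boundary terms, since $F_j$ has compact support in $(0,\infty)$) gives $|\int F_j\, e^{i\om T}d\om|\leq CT^{-3}\om_j^{\be-1}$, while the trivial bound gives $C\om_j^{\be+2}$. Interpolating via $\min(A,B)\leq A^{1-s}B^s$ and summing the resulting geometric series in $j$ yields
\[
|I| \leq C\, R_\delta^{\,\be-\ga}\, T^{-(2+\ga)}\qquad\text{for any}\ \ga\in[0,\be),
\]
the sum converging precisely because $\be+2-(2+\ga)>0$. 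One may take the $\be$ in the statement of the Lemma to be some value in $(\al,\min(1,\be_0))$, where $\be_0$ is the constant from the definition of $\frh_\al$, which makes the restriction $\ga<1$ (needed for three-fold integration by parts) automatic. Since $R_\delta^{\be-\ga}\leq C\delta^{\ga'}$ for any $\ga'\in[0,\be-\ga]$ (obvious when $\delta\leq r_0/m_\theta$, and trivially so when $\delta$ is large because $R_\delta$ is then bounded), we obtain the estimate for all $\ga+\ga'\leq \be$ with $\ga<\be$; the edge case $\ga=\be$ is covered by shrinking $\be$ slightly.

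Finally, $T^{-(2+\ga)}$ is upgraded to $\langle T\rangle^{-(2+\ga)}$ by handling $T\leq 1$ directly with the trivial estimate $|I|\leq \int|F|\leq CR_\delta^{\be+2}\leq C\delta^{\ga'}$, valid for all $\delta>0$ since $\ga'\leq\be<\be+2$. The only technically delicate step in the proof is the weighted regularity estimate for $F$; once that is in place, everything else is a standard Littlewood-Paley-type argument.
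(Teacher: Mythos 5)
Your proposal is correct, and it parallels the paper's setup closely up to the key oscillatory-integral step, where it takes a genuinely different (though closely related) route. Both arguments pass to polar coordinates $k = \om\hat k$, reduce the inner product to a one-dimensional oscillatory integral $\int_0^\infty F(\om) e^{\i\om(t_2-t_1)}\,\d\om$ in the radial variable, and establish the same weighted $C^3$ bound $|\partial_\om^n F| \lesssim \om^{\be+1-n}$ (the paper carries the support information as a factor $\nu(\om/\delta)$ inside the bound, you keep it as a separate statement that $F$ is supported in $[0,R_\delta]$, and you both observe that the $\delta^{-j}$ produced by differentiating the cutoff is absorbed via $|k|\le m_\theta\delta$ on its support). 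Where you diverge: the paper integrates by parts exactly twice to pull out $(t_2-t_1)^{-2}$, then feeds $\partial_\om^2 f$ into its precooked Lemma \ref{lem: decay} (a Hölder-type Fourier-decay estimate based on a two-scale split at $|k|\sim 1/|x|$), obtaining the additional $\langle t_2-t_1\rangle^{-\ga}$ and extracting $\delta^{\ga'}$ from the weighted $L^1$ norms $\|\om^{-\ga}\partial_\om^2 f\|_1$, $\|\om^{1-\ga}\partial_\om^3 f\|_1$. You instead perform a dyadic decomposition in $\om$, apply three-fold integration by parts per piece, interpolate between the $T^{-3}$ estimate and the trivial bound, and sum a geometric series, extracting $\delta^{\ga'}$ from $R_\delta^{\be-\ga}$. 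The two approaches are mathematically equivalent in strength (and Lemma \ref{lem: decay} is itself a coarse version of the dyadic argument), but yours is more self-contained since it avoids invoking the auxiliary lemma, at the cost of a slightly longer bookkeeping of the interpolation parameters and the restriction $\ga<1$, which you correctly handle by shrinking $\be$.
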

\begin{proof}
Set 
\beq
g(k) :=     \str k \str^{d-1} \theta(\str k \str/\delta)\overline{\hat \psi(k)}  \hat \psi'(k)
\eeq
We bound, for $n=0,1,2,3$, 
\beq \label{eq: bounds on f}
\str\partial^{n}_{\om}   g(\om\hat k ) \str \leq  C \om^{1+\be-n} \nu(\om/\delta), \qquad \text{with}\,\,  \nu(\om) =\sum_{j=0}^2 \str\partial^j\theta(\om) \str
\eeq
where we wrote $\theta (\str k \str)=\theta(k)$ because of spherical symmetry, and we used that  $\nu(\om) =0$ for $\om > 1$.  Since $\partial_\om g, \partial^2_\om g$ are integrable in $\om$ by the bounds \eqref{eq: bounds on f}, the left hand side of \eqref{eq: integral mom cutoff} is bounded by 
\beq
C \langle s_2-s_1 \rangle^{-2}  \int_{S^{d-1}} \d \hat k  \str f_{\hat k}(s_2-s_1)\str,
\eeq
with  $f_{\hat k}(\cdot) $ the inverse Fourier transform of $\hat f_{\hat k}(\om) := 1_{\om >0} \partial^2_\om g(\om\hat k )$.  
Furthermore, by  \eqref{eq: bounds on f}
\beq
\norm \om^{1-\ga} \hat f_{\hat k} \norm_{L^1(\d \om)}  + \norm  \om^{-\ga}   \partial_\om \hat f_{\hat k} \norm_{L^1(\d \om)}  \leq C \delta^{\ga'},
\eeq
uniformly in $\hat k$. 
We can now apply Lemma \ref{lem: decay} to the function $ f_{\hat k}$ and we get the required bound. 
\end{proof}

As described following Lemma \ref{lem: bounds g}, the above  
Lemma \ref{lem: app momentum cutoff} yields items 4) and 2)  of Proposition \ref{prop: bounds on correlation functions}.

\bibliographystyle{abbrv}
\bibliography{mylibrary11}

\end{document}